  \theoremstyle{plain}
  \newtheorem{theorem}{Theorem}
  \newtheorem{lemma}{Lemma}  
  \newtheorem{corollary}{Corollary}
  \newtheorem{observation}{Observation}
  \theoremstyle{definition}
  \newtheorem{definition}{Definition}
  \newtheorem{example}{Example}
  \newtheorem{remark}[definition]{Remark}
  \newtheorem*{claim}{Claim}
\title{Longest Common Substring Made Fully Dynamic}
\author[1]{Amihood Amir}
\author[2,3]{Panagiotis Charalampopoulos\thanks{Partially supported by Israeli Science Foundation (ISF) grant 794/13.}}
\author[2]{Solon P. Pissis}
\author[4]{Jakub Radoszewski}
\affil[1]{
Department of Computer Science, Bar-Ilan University, Ramat Gan, Israel\\
\texttt{amir@esc.biu.ac.il}
}
\affil[2]{
Department of Informatics, King's College London, London, UK\\
\texttt{[panagiotis.charalampopoulos,solon.pissis]@kcl.ac.uk}
}
\affil[3]{
Efi Arazi School of Computer Science, The Interdisciplinary Center Herzliya, Herzliya, Israel\\
}
\affil[4]{
Institute of Informatics, University of Warsaw, Warsaw, Poland\\
\texttt{jrad@mimuw.edu.pl}
}
\date{\vspace{-5ex}}
\def\dd{\mathinner{.\,.}}
\newcommand{\cO}{\mathcal{O}}
\newcommand{\Oh}{\cO}
\newcommand{\tr}{\mathcal{T}}
\newcommand{\SA}{\textsf{SA}}
\newcommand{\LCE}{\textsf{LCE}}
\newcommand{\LCS}{\textsf{LCS}}
\newcommand{\lcp}{\textsf{lcp}}
\newcommand{\lcpstring}{\textsf{lcpstring}}
\newcommand{\lcsstring}{\textsf{lcsstring}}
\newcommand{\lcs}{\textsf{lcs}}
\newcommand{\RMQ}{\textsf{RMQ}}
\newcommand{\range}{\textit{range}}
\newcommand{\Ohtilde}{\tilde{\cO}}
\newcommand{\id}{\textit{id}}
\newcommand{\nnext}{\textit{next}}
\newcommand{\ccount}{\textit{count}}
\newcommand{\iinsert}{\textit{insert}}
\newcommand{\minPref}{\mathit{minPref}}
\newcommand{\maxPref}{\mathit{maxPref}}
\newcommand{\LSPal}{\textit{LSPal}\xspace}
\newcommand{\M}{\mathcal{M}}
\newcommand{\B}{\mathcal{B}}
\renewcommand{\P}{\mathcal{P}}
\newcommand{\Q}{\mathcal{Q}}
\newcommand{\T}{\mathcal{T}}
\newcommand{\F}{\mathcal{F}}
\newcommand{\nn}{\mathit{succ}}
\newcommand{\pp}{\mathit{pred}}
\newcommand{\problem}{\textsc{Two String Families LCP}\xspace}
\newcommand{\maxPairLCP}{\mathrm{maxPairLCP}}
\renewcommand{\S}{\mathbf{S}}
 \newcommand{\defproblem}[3]{
  \vspace{2mm}
\noindent\fbox{
  \begin{minipage}{0.96\textwidth}
  #1\\
  {\bf{Input:}} #2  \\
  {\bf{Output:}} #3
  \end{minipage}
  }
  \vspace{2mm}
}
 \newcommand{\defDSproblem}[3]{
  \vspace{2mm}
\noindent\fbox{
  \begin{minipage}{0.96\textwidth}
  #1\\
  {\bf{Input:}} #2  \\
  {\bf{Query:}} #3
  \end{minipage}
  }
  \vspace{2mm}
}
\begin{document}

\maketitle


\begin{abstract}
In the longest common substring (LCS) problem, we are given two strings $S$ and $T$, each of length at most $n$, and we are asked to find a longest string occurring as a fragment of both $S$ and $T$. This is a classical and well-studied problem in computer science with a known $\Oh(n)$-time solution. In the fully dynamic version of the problem, edit operations are allowed in either of the two strings, and we are asked to report an LCS after each such operation. We present the first solution to this problem that requires sublinear time per edit operation. In particular, we show how to return an LCS in $\tilde{\cO}(n^{2/3})$ \footnote{The $\tilde{\cO}(\cdot)$ notation suppresses $\log^{\cO(1)} n$ factors.} time (or $\tilde{\cO}(\sqrt{n})$ time if edits are allowed in only one of the two strings) after each operation using $\tilde{\cO}(n)$ space.

This line of research was recently initiated by the authors [SPIRE 2017] in a somewhat restricted dynamic variant. An $\tilde{\cO}(n)$-sized data structure that returns an LCS of the two strings after a single edit operation (that is reverted afterwards) in $\tilde{\cO}(1)$ time was presented. At CPM 2018, three papers studied analogously restricted dynamic variants of problems on strings.
We show that our techniques can be used to obtain fully dynamic algorithms for several classical problems on strings, namely, computing the longest repeat, the longest palindrome and the longest Lyndon substring of a string. The only previously known sublinear-time dynamic algorithms for problems on strings were obtained for maintaining a dynamic collection of strings for comparison queries and for pattern matching with the most recent advances made by Gawrychowski et al.\ [SODA 2018] and by Clifford et al.\ [STACS 2018].

As an intermediate problem we consider computing the required result on a string with a given set of $k$ edits, which leads us, in particular, to answering \emph{internal} queries on a string. The input to such a query is specified by a substring (or substrings) of a given string. Data structures for answering internal queries that were proposed by Kociumaka et al.\ [SODA 2015] and by Gagie et al.\ [CCCG 2013] are used, along with new data structures for several types of queries that we develop --- these queries include computing an LCS on multiple substrings. However, as it is likely that the most general internal LCS queries cannot be answered efficiently,  we also design a different scheme that requires an application of difference covers. Other ingredients of our algorithms include the suffix tree, heavy-path decompositions, orthogonal range queries, and string periodicity.

\end{abstract}
\thispagestyle{empty}
\clearpage
\setcounter{page}{1}

\section{Introduction}

\subsection{Longest Common Substring (LCS) Problem}
In the longest common substring (LCS) problem, also known as longest common factor problem, we are given two strings $S$ and $T$, each of length at most $n$, and we are asked to find a longest string occurring in both $S$ and $T$. This is a classical and well-studied problem in theoretical computer science arising out of different practical scenarios. In particular, the LCS is a widely used string similarity measure. Assuming that $S$ and $T$ are over a linearly-sortable alphabet, the LCS problem can be solved in $\cO(n)$ time and space~\cite{Gusfield:1997:AST:262228,Chi1992}. This is clearly time-optimal taking into account that $\Omega(n)$ time is required to read $S$ and $T$ for large integer alphabets. A series of studies has thus been dedicated to improving the working space~\cite{DBLP:conf/esa/KociumakaSV14,DBLP:conf/cpm/StarikovskayaV13}; that is, the {\em additional} space required for the computations, not taking into account the space required to store $S$ and $T$. 

As it is quite common to account for potential alterations within textual data (e.g.~DNA sequences), it is natural to define the LCS problem under a distance metric model. The problem is then to find a longest substring of $S$ that is at distance at most $k$ from any substring of $T$. It has received much attention recently, in particular due to its applications in computational molecular biology~\cite{DBLP:journals/jcb/UlitskyBTC06}. Under the Hamming distance model (substitution operations), it is known as the LCS problem with $k$ mismatches. We refer the interested reader to~\cite{Abboud:2015:MAP:2722129.2722146,starikovskaya:LIPIcs:2016:6072,DBLP:journals/jcb/ThankachanAA16,DBLP:conf/cpm/Charalampopoulos18} and references therein; see also~\cite{DBLP:conf/recomb/ThankachanACA18} for the edit distance model.

In~\cite{starikovskaya:LIPIcs:2016:6072}, Starikovskaya mentions that an answer to the LCS problem ``is not robust and can vary significantly when the input strings are changed even by one character'', posing implicitly the following natural question: Can we compute a new LCS after editing $S$ or $T$ in $o(n)$ time? To this end Amir et al.\ in~\cite{Amir2017} presented a solution for the restricted case, where any {\em single} edit operation is allowed. We call this problem \textsc{LCS after One Edit}. Specifically, the authors presented an $\tilde{\cO}(n)$-space data structure that can be constructed in $\tilde{\cO}(n)$ time supporting $\tilde{\cO}(1)$-time computation of an LCS of $S$ and $T$, after one edit operation (that is reverted after the query) is applied on $S$.
This work initiated a new line of research on analogously restricted dynamic variants of string problems~\cite{DBLP:conf/cpm/FunakoshiNIBT18,DBLP:conf/cpm/UrabeNIBT18}. 
%
Moreover, Abedin et al.~\cite{DBLP:conf/cpm/AbedinH0T18} improved the complexities of the data structure proposed by Amir et al.~\cite{Amir2017} by shaving some $\log^{\Oh(1)} n$ factors, whereas other restricted variants of the dynamic LCS problem have been considered by Amir and Boneh in~\cite{amir_et_al:LIPIcs:2018:8698}. One variant was to consider substitution operations in which a character is replaced by a character not from the alphabet and these operations are only allowed in one of the strings. Another variant was to consider substitutions in one of the strings but parameterize the time complexity by the period of the static string, which is in $\Theta(n)$.

In this paper we continue this line of research and show a solution for the {\em general} version of the problem, namely, the fully dynamic case of the LCS problem. Given two strings $S$ and $T$, we are to answer the following type of queries in an on-line manner: perform an edit operation (substitution, insertion, or deletion) on $S$ or on $T$ and then return an LCS of the new $S$ and $T$. We allow preprocessing the initial $S$ and $T$. We call this the {\em fully dynamic LCS} problem. Let us illustrate an example of this problem.

\begin{example}
The length of an LCS of the first pair of strings, from left to right, is {\em doubled} when one substitution operation, $S[4]:=\texttt{a}$, is performed. The next substitution, namely $T[3]:=\texttt{b}$, halves the length of an LCS.

\begin{minipage}[t]{0.18\textwidth}
$$S=\texttt{c\underline{aab}aaa}$$
$$T=\texttt{aaaa\underline{aab}}$$
\end{minipage}
\begin{minipage}[t]{0.18\textwidth}
$$S[4]:=\texttt{a}$$
\end{minipage}
\begin{minipage}[t]{0.18\textwidth}
$$S=\texttt{c\underline{aaaaaa}}$$
$$T=\texttt{\underline{aaaaaa}b}$$
\end{minipage}
\begin{minipage}[t]{0.18\textwidth}
$$T[3]:=\texttt{b}$$
\end{minipage}
\begin{minipage}[t]{0.18\textwidth}
$$S=\texttt{c\underline{aaa}aaa}$$
$$T=\texttt{aab\underline{aaa}b}$$
\end{minipage}

\end{example}

\subsection{Dynamic Problems on Strings}
Below we mention known related results on dynamic problems on strings.
\paragraph{Dynamic Pattern Matching} Finding all {\em occ} occurrences of a pattern of length $m$ in a {\em static}
text can be done in the optimal $\Oh(m+\textit{occ})$ time using suffix trees, which can be constructed in linear time~\cite{F97}. In the fully dynamic setting of this problem, we are asked to compute the new set of occurrences when
allowing for edit operations anywhere on the text. A considerable amount of work has been done on this problem~\cite{Gu:1994:EAD:314464.314675,DBLP:journals/jal/Ferragina97,DBLP:journals/siamcomp/FerraginaG98}.
The first data structure with poly-logarithmic update time and time-optimal queries was shown by Sahinalp and Vishkin~\cite{DBLP:conf/focs/SahinalpV96}.
The update time was later improved by Alstrup et al.~\cite{Alstrup:2000:PMD:338219.338645} at the expense of slightly suboptimal query time. The state of the art is the data structure by Gawrychowski et al.~\cite{DBLP:journals/corr/GawrychowskiKKL15} supporting time-optimal queries with $\Oh(\log^2 n)$ time for updates. Clifford et al.~\cite{DBLP:conf/stacs/CliffordGLS18} have recently shown upper and lower bounds for variants of exact matching with wildcards, inner product, and Hamming distance.

\paragraph{Dynamic String Collection with Comparison} We are to maintain a dynamic collection $\mathcal{W}$ of strings of total length $n$ that supports the following operations:
\begin{itemize}
\item \textsf{makestring}$(W)$: insert a non-empty string $W$;
\item \textsf{concat}$(W_1,W_2)$: insert $W_1 W_2$ to $\mathcal{W}$, for $W_1,W_2 \in \mathcal{W}$;
\item \textsf{split}$(W,i)$: split the string $W$ at position $i$ and insert both resulting strings to $\mathcal{W}$, for $W \in \mathcal{W}$;
\item \lcp$(W_1,W_2)$: return the length of the longest common prefix
of $W_1$ and $W_2$, for $W_1,W_2 \in \mathcal{W}$.
\end{itemize}
This line of research was initiated by Sundar and Tarjan~\cite{DBLP:journals/siamcomp/SundarT94}. Data structures supporting updates in polylogarithmic time were presented by Mehlhorn et al.~\cite{DBLP:journals/algorithmica/MehlhornSU97} (with a restricted set of queries) and Alstrup et al.~\cite{Alstrup:2000:PMD:338219.338645}. Finally, Gawrychowski et al.~\cite{DBLP:conf/soda/GawrychowskiKKL18} proposed a solution with the operations requiring time $\cO(\log n +|w|)$, $\cO(\log n)$, $\cO(\log n)$, and $\cO(1)$, respectively that they show to be optimal. Their collection of strings is persistent. 

\paragraph{Longest Palindrome Substring After One Edit} Palindromes (also known as symmetric strings) are one of the fundamental concepts on strings with applications in computational biology (see, e.g.,~\cite{Gusfield:1997:AST:262228}). A recent progress in this area was the design of an $\Oh(n \log n)$-time algorithm for partitioning a string into the minimum number of palindromes~\cite{DBLP:journals/jda/FiciGKK14,DBLP:conf/cpm/ISIBT14} (that was improved to $\Oh(n)$ time \cite{DBLP:conf/cpm/BorozdinKRS17} afterwards). The main combinatorial insight of these results is that the set of lengths of suffix palindromes of a string can be represented as a logarithmic number of arithmetic progressions, each of which consists of palindromes with the same shortest period. Funakoshi et al.~\cite{DBLP:conf/cpm/FunakoshiNIBT18} use this fact to present a data structure for computing a longest palindrome substring of a string after a single edit operation. This problem is called \textsc{Longest Palindrome Substring after One Edit}. They obtain $\Oh(\log \log n)$-time queries with a data structure of $\Oh(n)$ size that can be constructed in $\Oh(n)$ time. We present a fully dynamic algorithm with $\Ohtilde(\sqrt{n})$-time queries for this problem.

\paragraph{Longest Lyndon Substring After One Edit} A \emph{Lyndon string} is a string that is smaller (in the lexicographical order) than all its suffixes \cite{Lyndon}. E.g., \texttt{aabab} and \texttt{a} are Lyndon strings, whereas \texttt{abaab} and \texttt{abab} are not. Lyndon strings are an object of interest in combinatorics on words especially due to the Lyndon factorization theorem \cite{ChenFoxLyndon} that asserts that every string can be uniquely decomposed into a non-decreasing sequence of Lyndon strings. Recently Lyndon strings have found important applications in algorithm design~\cite{DBLP:conf/soda/Mucha13} and were used to settle a known conjecture on the number of repetitions in a string~\cite{DBLP:conf/soda/BannaiIINTT15,DBLP:journals/siamcomp/BannaiIINTT17}.

Urabe et al.~\cite{DBLP:conf/cpm/UrabeNIBT18} presented a data structure for computing a longest substring of a string being a Lyndon string in the restricted dynamic setting of a single edit that is reverted afterwards. This problem is called Longest \textsc{Lyndon Substring after One Edit}. Their data structure can be constructed in $\Oh(n)$ time and space and answers queries in $\Oh(\log n)$ time. A simple observation of~\cite{DBLP:conf/cpm/UrabeNIBT18} is that the longest Lyndon substring of a string is always one of the factors of the Lyndon factorization. Thus this work indirectly addresses the question of maintaining the Lyndon factorization of a dynamic string. We present an algorithm that maintains a representation of the Lyndon factorization of a string with $\Ohtilde(\sqrt{n})$-time queries in the fully dynamic setting.

\subsection{Our Results}
We give the first fully dynamic algorithm for the LCS problem that works in {\em strongly sublinear} time per edit operation in any of the two strings. Specifically, for two strings of length up to $n$ our algorithm uses $\Ohtilde(n)$ space and computes the LCS of two strings after each subsequent edit operation in $\Ohtilde(n^{2/3})$ time. In the special case that edit operations are allowed only in one of the strings, $\Ohtilde(\sqrt{n})$ time is achieved. In order to ease the comprehension of the general fully dynamic LCS, we first show a solution of an auxiliary problem called \textsc{LCS after One Substitution per String} where a single substitution is allowed in both strings that is reverted afterwards, in $\Ohtilde(n)$ preprocessing time and $\Ohtilde(1)$-time queries.

The significance of this result is additionally highlighted by the following argument.
It is known that finding an LCS when the strings have wildcard characters~\cite{DBLP:conf/icalp/AbboudWW14} or when $k=\Omega(\log n)$ substitutions are allowed~\cite{DBLP:journals/corr/abs-1712-08573} in truly subquadratic time would refute the Strong Exponential Time Hypothesis (SETH)~\cite{Impagliazzo:2001:PSE:569473.569474,DBLP:journals/jcss/ImpagliazzoP01} (on the other hand, pattern matching with mismatches can be solved in $\Ohtilde(n\sqrt{n})$ time~\cite{DBLP:journals/siamcomp/Abrahamson87}). It is therefore unlikely that a fully dynamic algorithm with strongly sublinear-time queries exists for these problems: such an algorithm could be trivially applied as a black box to solve the problems in their static setting in truly subquadratic time thus refuting SETH. 

Other applications of the same scheme are also presented. We propose data structures using $\Ohtilde(n)$ space that compute the following characteristics of a single string $S$ after any edit operation:
\begin{itemize}
  \item longest palindrome substring of $S$ in $\Ohtilde(\sqrt{n})$ time;
  \item longest substring of $S$ being a Lyndon string as well as a representation of the Lyndon factorization of $S$ that allows, in particular, to extract the $i$th factor in $\Oh(\log^2 n)$ time in $\Ohtilde(\sqrt{n})$ time;
  \item longest repeat, that is, longest substring that occurs more than once in $S$ in $\Ohtilde(n^{2/3})$ time.
\end{itemize}

We state our results for strings of length at most $n$, over an integer alphabet $\Sigma = \{1,\ldots,n^{\Oh(1)}\}$. We assume the standard word RAM model with word size $\Omega(\log n)$.

\subsection{Our Techniques}
\paragraph{General Scheme and relation to Internal Pattern Matching} Our approach for most of the considered dynamic problems on strings is as follows. Let the input be a string $S$ of length $n$ (in the case of the LCS problem, this can be the concatenation of the input strings $S$ and $T$ separated by a delimiter). We construct a data structure that answers the following type of queries: given $k$ edit operations on $S$, compute the answer to a particular problem on the resulting string $S'$. Assuming that the data structure occupies $\cO(s_n)$ space, answers queries for $k$ edits in time $\cO(q_n(k))$ and can be constructed in time $\cO(t_n)$ ($s_n \ge n$ and $q_n(k) \geq k$ is continuously non-decreasing with respect to $k$), this data structure can be used to design a dynamic algorithm that preprocesses the input string in time $\cO(t_n)$ and answers queries dynamically under edit operations in amortized time $\cO(q_n(\kappa))$, where $\kappa$ is such that $q_n(\kappa)=(t_n+n)/\kappa$, using $\cO(s_n)$ space. The query complexity can be made worst-case using the technique of time slicing. In particular, for $s_n,t_n=\Ohtilde(n)$ and $q_n(k)=\Ohtilde(k)$ we obtain a fully dynamic algorithm with $\Ohtilde(\sqrt{n})$-time queries, whereas for $q_n(k)=\Ohtilde(k^2)$ the query time is $\Ohtilde(n^{2/3})$.

A \emph{$k$-substring} of a string $S$ is a concatenation of $k$ strings, each of which is either a substring of $S$ or a single character. A $k$-substring of $S$ can be represented on a doubly-linked list in $\Oh(k)$ additional space if the string $S$ itself is stored. The string $S$ after $k$ subsequent edit operations can be represented as a $(2k+1)$-substring due to the following observation.

\begin{lemma}\label{lem:ksub}
  Let $S'$ be a $k$-substring of $S$ and $S''$ be $S'$ after a single edit operation. Then $S''$ is a $(k+2)$-substring of $S$. Moreover, $S''$ can be computed from $S'$ in $\Oh(k)$ time.
\end{lemma}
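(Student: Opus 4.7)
The plan is to prove the lemma by a direct case analysis on the type of edit operation (substitution, insertion, or deletion), tracking how the doubly-linked list representing the $k$-substring changes. First I would fix the representation: each of the $k$ list nodes either stores a single character or stores a pair of indices $(\ell,r)$ describing the substring $S[\ell\dd r]$, together with its length. To process an edit at position $i$ of $S'$, I walk the list accumulating lengths until I find the piece $P_j$ containing position $i$; this traversal and the arithmetic to convert $i$ to a local offset $i'$ within $P_j$ take $\Oh(k)$ time. All list surgeries described below are $\Oh(1)$, so the total cost is $\Oh(k)$.

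For substitution: if $P_j$ is a single character, overwrite it in place (count unchanged). If $P_j = S[\ell\dd r]$, replace the node $P_j$ by up to three consecutive nodes representing $S[\ell\dd \ell+i'-2]$, the new character, and $S[\ell+i'\dd r]$, omitting either flanking substring when it is empty; the count rises by at most $2$. Insertion is handled the same way, yielding an increase of at most $2$ (splitting the containing piece around the inserted character, or inserting a character-node next to a single-character $P_j$). For deletion, if $P_j$ is a single character the node is simply unlinked ($-1$ to the count); otherwise the substring $P_j = S[\ell\dd r]$ is replaced by at most two nodes $S[\ell\dd \ell+i'-2]$ and $S[\ell+i'\dd r]$, an increase of at most $1$. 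In every case the resulting list has at most $k+2$ nodes, so $S''$ is a $(k+2)$-substring of $S$, and the computation takes $\Oh(k)$ time as claimed. There is no real obstacle here; the only thing worth being careful about is suppressing empty substring pieces so that the bound $k+2$ is met tightly for all three edit types.
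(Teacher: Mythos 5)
Your proposal is correct and follows essentially the same argument as the paper: traverse the list to locate the piece containing the edit position, then split that piece into a prefix and a suffix, possibly with a single character inserted between them, giving at most $k+2$ pieces in $\Oh(k)$ time. The only difference is that you spell out the three edit types separately, while the paper handles them in one uniform sentence.
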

\begin{proof}
  Let $S'=F_1 \dots F_k$ where each $F_i$ is either a substring of $S$ or a single character. We traverse the list of substrings until we find the substring $F_i$ such that the edit operation takes place at the $j$-th character of $F_i$. As a result, $F_i$ is decomposed into a prefix and a suffix, potentially with a single character inserted in between. The resulting string $S''$ is a $(k+2)$-substring of $S$.
\end{proof}

Thus the fully dynamic version reduces to designing a data structure over a string $S$ of length~$n$ that computes the result of a specific problem on a $k$-substring $F_1 \dots F_k$ of $S$. For the considered problems we aim at computing the longest substring of $S$ that satisfies a certain property. Then there are two cases. Case 1: the sought substring occurs inside one of the substrings $F_i$ (or each of its two occurrences satisfies this property in case of the LCS and the longest repeat problems). Case 2: it contains the boundary between some two substrings $F_i$ and $F_{i+1}$.

Case 1 requires to compute the solution to a certain problem on a substring or substrings of a specified string. This is the so-called \emph{internal} model of queries; this name was coined in the paper of Kociumaka et al.~\cite{DBLP:conf/soda/KociumakaRRW15}. We call case 2 \emph{cross-substring} queries. As it turns out, certain internal queries arise in cross-substring queries as well due to string periodicity.

\paragraph{Internal Queries for LCS}
Let us first consider LCS queries in this model. More precisely, we are given the strings $S$ and $T$ and we are to answer LCS queries between a substring of $S$ and a substring of $T$.

The most general internal LCS query can be reduced via a binary search to two-range-LCP queries of Amir et al.~\cite{DBLP:conf/spire/AmirLT15}. With their Theorem~6, one can construct a data structure of size $\cO(n)$ in $\cO(n\sqrt{n})$ time that allows for $\Ohtilde(\sqrt{n})$-time queries.
We cannot use this data structure in our scheme due to its high preprocessing cost. 
In fact, Amir et al.~\cite{DBLP:conf/spire/AmirLT15} show that the two-range-LCP data structure problem is at least as hard as the \emph{Set Emptiness} problem, where one is to preprocess a collection of sets of total cardinality $n$ so that queries of whether the intersection of two sets is empty can be answered efficiently.
The best known $\cO(n)$-sized data structure for this problem has $\cO(\sqrt{n/w})$-query-time, where $w$ is the size of the computer word.
It is not hard to see that the reduction of~\cite{DBLP:conf/spire/AmirLT15} can be adapted to show that answering general internal LCS queries is at least as hard as answering Set Emptiness queries.
In light of this, we develop a different global approach to circumvent answering such queries.

We rely on an algorithm that we develop for a particular decremental version of the LCS problem, where characters in $S$ (resp.~in $T$) are dynamically replaced by $\# \notin \Sigma$ (resp.~$\$ \notin \Sigma$), with $\# \neq \$$.
We decompose this problem into two cases, depending on whether the sought LCS is short or long.
We solve the first case by using identifiers for all sufficiently short substrings of $S$ and $T$ based on the heavy path that the node that represents them in the generalized suffix tree of the two strings lies on. 
We tackle the second case by a non-trivial technique that employs difference covers.
A difference $d$-cover for $\{1,\ldots,n\}$ is a subset $A \subseteq \{1,\ldots,n\}$ of size $\cO(n/\sqrt{d})$ that can be constructed efficiently (\cite{DBLP:journals/tocs/Maekawa85}), and guarantees that, if an LCS of length $\ell \geq d$ occurs at position $i$ in $S$ and position $j$ in $T$, there is a pair of elements $p,q \in A$ such that $0 \leq p-i=q-j < \ell$.
Informally, these occurrences of an LCS are anchored at the pair $(p,q)$.
We show how to exploit this to construct a compact trie for a family of strings of size proportional to the size of the difference cover.
Then we use a data structure, that was shown in~\cite{DBLP:conf/cpm/Charalampopoulos18} (and, implicitly, in \cite{DBLP:journals/tcs/CrochemoreIMS06,DBLP:journals/ipl/FlouriGKU15}), to answer a certain kind of longest common prefix queries over that trie.

We also design data structures based (mostly) on orthogonal range maximum queries with $\Ohtilde(n)$ size and construction time and $\Ohtilde(1)$ queries for several special variants of the internal LCS problem that arise in the dynamic LCS problems:
\begin{itemize}
  \item LCS between a substring of $S$ and the whole $T$ --- for the dynamic problem where edits happen only in $S$;
  \item LCS between a prefix or suffix of $S$ and a prefix or suffix of $T$ --- for the problem of \textsc{LCS after One Substitution per String};
  \item Given three substrings $U$, $V$, and $W$ of $T$, compute the longest substring $XY$ of $W$ such that $X$ is a suffix of $U$ and $Y$ is a prefix of $V$, that we call \textsc{Three Substrings LCS} queries --- that are used ($k$ times) in the computation of the LCS between a $k$-substring of $S$ and a substring of $T$.
\end{itemize}
A solution to a special case of \textsc{Three Substrings LCS} queries with $W=T$ was already implicitly present in Amir et al.~\cite{Amir2017}. It is based on the heaviest induced ancestors (HIA) problem on trees, introduced by Gagie et al.~\cite{DBLP:conf/cccg/GagieGN13}, applied to the suffix tree of $T$. We generalize the HIA queries and use them to answer general \textsc{Three Substrings LCS} queries.

The data structure for answering our generalization of HIA queries turns out to be one of the most technical parts of the paper.
It relies on the construction of multidimensional grids for pairs of heavy paths of the involved trees.
Each query can be answered by interpreting the answer of $\cO(\log^2 n)$ orthogonal range maximum queries over such grids.

\paragraph{Internal Queries for Palindromes} We show a data structure with $\Ohtilde(n)$ preprocessing time and $\Ohtilde(1)$ time for internal longest palindrome substring queries. As previously, it is based on orthogonal range maximum queries.

\paragraph{Internal Queries for Lyndon Strings} Urabe et al.~\cite{DBLP:conf/cpm/UrabeNIBT18} show how to compute a representation of a Lyndon factorization of a prefix of a string and of a suffix of a string in $\Ohtilde(1)$ time after $\Ohtilde(n)$ preprocessing. For the prefixes, their solution is based on the Lyndon representations of prefixes of a Lyndon string, whereas for the suffixes, the structure of a Lyndon tree (due to \cite{DBLP:journals/jct/Barcelo90}) is used. We combine the two approaches to obtain general internal computation of a representation of a Lyndon factorization in the same time bounds.

\paragraph{Internal Queries for String Periodicity Used} In cross-substring queries for LCS and longest palindrome we use internal Prefix-Suffix Queries of Kociumaka et al.~\cite{DBLP:conf/soda/KociumakaRRW15}. A Prefix-Suffix Query gets as input two substrings $Y$ and $Z$ in a string $S$ of length $n$ and an integer $d$ and returns the lengths of all prefixes of $Y$ of length between $d$ and $2d$ that are suffixes of $Z$. A Prefix-Suffix Query returns an arithmetic sequence. If it has at least three elements, then its difference equals the period of all the corresponding prefixes-suffixes. Kociumaka et al.~\cite{DBLP:conf/soda/KociumakaRRW15} show that Prefix-Suffix Queries can be answered in $\cO(1)$ time using a data structure of $\cO(n)$ size, which can be constructed in $\cO(n)$ time. They also show that it can be used to answer internal Period Queries, that compute the shortest period of a substring, in $\Oh(\log n)$ time.

\subsection{Roadmap}

Section~\ref{sec:prel} provides the necessary definitions and notation used throughout the paper as well the standard algorithmic toolbox for string processing.
In Section~\ref{sec:1-1} we consider a generalization of the problem from Amir et al.~\cite{Amir2017}, namely, a single substitution is allowed in each of the strings (again, both substitutions are reverted afterwards).
In Section~\ref{sec:onesided}, we consider the case where edit operations are only allowed in one of the strings.
In Section~\ref{sec:fully}, we combine these techniques and develop further ones to solve the fully dynamic LCS problem.
The missing technical details, including special cases of internal LCS queries, are left for Section~\ref{sec:internal}.
%
In Sections~\ref{sec:lr}, \ref{sec:pals} and \ref{sec:lyndon} we showcase the generality of our techniques by applying them to obtain fully dynamic algorithms for computing the longest repeat, the longest palindrome, and the longest Lyndon substring of a string (as well as a representation of its Lyndon factorization).

\section{Preliminaries}\label{sec:prel}
Let $S=S[1]S[2]\ldots S[n]$ be a \textit{string} of length $|S|=n$ over an integer  alphabet $\Sigma$. For two positions $i$ and $j$ on $S$, we denote by $S[i\dd j]=S[i]\ldots S[j]$ the substring of $S$ that starts at position $i$ and ends at position $j$ (it is empty if $j<i$). A substring of $S$ is represented in $\cO(1)$ space by specifying the indices $i$ and $j$. A prefix $S[1\dd j]$ is denoted by $S^{(j)}$ and a suffix $S[i\dd n]$ is denoted by $S_{(i)}$. A substring is called \emph{proper} if it is shorter than the whole string. We denote the {\em reverse string} of $S$ by $S^R=S[n]S[n-1]\ldots S[1]$. By $ST$, $S^k$, and $S^\infty$ we denote the concatenation of strings $S$ and $T$, $k$ copies of the string $S$, and infinitely many copies of string $S$, respectively. If a string $B$ is both a proper prefix and a proper suffix of a non-empty string $S$ of length $n$, then $B$ is called a {\em border} of~$S$. A positive integer $p$ is called a \emph{period} of $S$ if $S[i] = S[i + p]$ for all $i = 1, \ldots, n - p$. String $S$ has a period $p$ if and only if it has a border of length $n-p$. We refer to the smallest period as \emph{the period} of the string and, analogously, to the longest border as \emph{the border} of the string.

The \textit{suffix tree} $\mathcal{T}(S)$ of a non-empty string $S$ of length $n$ is a compact trie representing all suffixes of $S$. The \textit{branching} nodes of the trie as well as the \textit{terminal} nodes, that correspond to suffixes of $S$, become {\em explicit} nodes of the suffix tree, while the other nodes are {\em implicit}. Each edge of the suffix tree can be viewed as an upward maximal path of implicit nodes starting with an explicit node. Moreover, each node belongs to a unique path of that kind. Thus, each node of the trie can be represented in the suffix tree by the edge it belongs to and an index within the corresponding path. We let  $\mathcal{L}(v)$  denote the \textit{path-label} of a node $v$, i.e., the concatenation of the edge labels along the path from the root to $v$. We say that $v$ is  path-labelled  $\mathcal{L}(v)$. Additionally, $\mathcal{D}(v)= |\mathcal{L}(v)|$ is used to denote  the \textit{string-depth} of node $v$. A terminal node $v$ such that $\mathcal{L}(v) = S_{(i)}$ for some $1 \leq i \leq n$ is also labelled with index~$i$. Each substring of $S$ is uniquely represented by either an explicit or an implicit node of $\mathcal{T}(S)$, called its \emph{locus}. In standard suffix tree implementations, we assume that each node of the suffix tree is able to access its parent. Once $\mathcal{T}(S)$ is constructed, it can be traversed in a depth-first manner to compute the string-depth $\mathcal{D}(v)$ for each explicit node $v$. The suffix tree of a string of length $n$, over an integer ordered alphabet, can be computed in time and space $\cO(n)$~\cite{F97}. In the case of integer alphabets, in order to access the child of an explicit node by the first character of its edge label in $\cO(1)$ time, perfect hashing~\cite{DBLP:journals/jacm/FredmanKS84} can be used. A \emph{generalized suffix tree} (GST) of strings $S_1,\ldots,S_k$, denoted by $\mathcal{T}(S_1,\ldots,S_k)$, is the suffix tree of $X=S_1\#_1 \ldots S_k\#_k$, where $\#_1,\ldots,\#_k$ are distinct end-markers.

By $\lcpstring(S,T)$ we denote the longest common prefix of $S$ and $T$, by $\lcp(S,T)$ we denote $|\lcpstring(S,T)|$, and by \lcp{}$(r, s)$ we denote $\lcp(S_{(r)},S_{(s)})$.
A lowest common ancestor data structure can be constructed over the suffix tree in $\cO(n)$ time and $\cO(n)$ space~\cite{Bender2000} supporting \lcp{}$(r,s)$-queries in $\cO(1)$ time per query. A symmetric construction on $S^R$ can answer the so-called \emph{longest common suffix} (\lcs) queries in the same complexity. The \lcp{} and \lcs{} queries are also known as {\em longest common extension} (\LCE{}) queries. One can use the so-called \emph{kangaroo method} to compute $\LCE{}$ with a few mismatches~\cite{DBLP:journals/tcs/LandauV86}:

\begin{observation}[\cite{DBLP:journals/tcs/LandauV86}]\label{obs:kangaroo}
  Let $S$ be a string and let $S'$ be the string $S$ after $k$ substitutions. Having a data structure for answering $\LCE{}$ queries in $S$ and the left-to-right list of substitutions, one can answer $\LCE{}$ queries in $S'$ in $\Oh(k)$ time.
\end{observation}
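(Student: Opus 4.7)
The plan is to implement the classical kangaroo method of Landau and Vishkin. Given a query $\LCE{}_{S'}(i,j)$, I maintain a counter $\ell$ (initially $0$) that records how many characters of $S'_{(i)}$ and $S'_{(j)}$ have been confirmed to match, and proceed in phases that repeatedly extend $\ell$.

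In each phase I first issue the LCE query $m := \LCE{}_S(i+\ell, j+\ell)$ on the original string. Let $d_a$ (resp.\ $d_b$) denote the distance from $i+\ell$ (resp.\ $j+\ell$) to the next substitution position at or after it, and set $c := \min(m, d_a, d_b)$. Because no substitution touches the ranges $S'[i+\ell\dd i+\ell+c-1]$ or $S'[j+\ell\dd j+\ell+c-1]$, and because these two blocks already coincide in $S$ (they lie inside the $\LCE{}_S$-prefix of length $m$), they also coincide in $S'$. I then compare $S'[i+\ell+c]$ with $S'[j+\ell+c]$ explicitly, using the substitution list to override the character whenever the position has been substituted and otherwise reading from $S$. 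If the two characters agree, I set $\ell := \ell + c + 1$ and start a new phase; if they disagree, I return $\ell + c$ as the final answer.

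For the analysis, the offset $c$ falls into one of three cases: (i) $c = m < d_a, d_b$, i.e.\ a true mismatch in $S$ at two positions not substituted in $S'$, which forces termination; (ii) $c = d_a \leq m$, a substitution for the first suffix; or (iii) $c = d_b \leq m$, a substitution for the second suffix. Hence every non-terminal phase advances past at least one substitution position. Since $d_a$ and $d_b$ are tracked by two monotone pointers into the already sorted left-to-right substitution list, each of the $k$ substitution positions is crossed at most twice across the whole query, bounding the number of phases by $2k+1$. Each phase requires one $\Oh(1)$-time LCE query on $S$ plus amortised $\Oh(1)$ list operations, for a total of $\Oh(k)$ time.

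The delicate point I would watch for is that an ``interesting'' offset $c$ need not actually yield a mismatch in $S'$: two substituted characters could be made equal, or a substitution could happen to restore the original character. The explicit character comparison at offset $c$ is therefore essential both for correctness and for guaranteeing that each non-terminal phase actually consumes at least one substitution, so that the $\Oh(k)$ bound on the number of phases goes through.
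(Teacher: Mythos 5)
Your proposal is correct and is exactly the classical kangaroo-jumping argument that the paper relies on (the paper itself gives no proof, only the citation to Landau and Vishkin): jump by LCE queries on $S$, truncate at the nearest substituted position, verify the single character there explicitly, and charge each non-terminal phase to a substitution, giving $\Oh(k)$ phases of $\Oh(1)$ work each.
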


\section{LCS After One Substitution Per String}\label{sec:1-1}
Let us now consider an extended version of the \textsc{LCS After One Substitution} problem, for simplicity restricted to substitutions.

{\defDSproblem{\textsc{LCS after One Substitution per String}}{Two strings $S$ and $T$ of length at most $n$}
{For given indices $i$, $j$ and characters $\alpha$ and $\beta$, compute $\LCS(S',T')$ where $S'$ is $S$ after substitution $S[i]:=\alpha$ and $T'$ is $T$ after substitution $T[j]:=\beta$}

In this section we prove the following result.

\begin{theorem}\label{thm:1-1}
\textsc{LCS after One Substitution Per String} can be computed in $\tilde{\cO}(1)$ time after $\tilde{\cO}(n)$-time and space preprocessing.
\end{theorem}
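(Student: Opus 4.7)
The plan is a case analysis based on whether the substituted position $i$ lies inside the LCS's occurrence in $S'$ and on whether $j$ lies inside its occurrence in $T'$. Write $S'=S^{(i-1)}\alpha S_{(i+1)}$ and $T'=T^{(j-1)}\beta T_{(j+1)}$. I would rely on $\LCE{}$ oracles on $S$ and $T$ built in $\Oh(n)$ time, the data structure for $\LCS$ between a prefix/suffix of $S$ and a prefix/suffix of $T$ announced in the introduction, and the \textsc{Three Substrings LCS} data structure; each runs in $\Ohtilde(1)$ time after $\Ohtilde(n)$-time and space preprocessing (all assembled in Section~\ref{sec:internal}).

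Case A (no edit is in the LCS): the LCS is a common substring of one of $S^{(i-1)}$, $S_{(i+1)}$ and one of $T^{(j-1)}$, $T_{(j+1)}$, so four prefix/suffix-LCS queries settle this case. Case B (only the edit in $S$ is in the LCS): the LCS has the form $X\alpha Y$ with $X$ a suffix of $S^{(i-1)}$, $Y$ a prefix of $S_{(i+1)}$, and $X\alpha Y$ occurring as a substring of $T^{(j-1)}$ or $T_{(j+1)}$. I would reduce this to a \textsc{Three Substrings LCS}-style query restricted to positions of $W\in\{T^{(j-1)},T_{(j+1)}\}$ whose character equals $\alpha$: enhancing the heaviest-induced-ancestor data structure with the character as an extra coordinate answers this in $\Ohtilde(1)$ time. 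Case C is symmetric.

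Case D (both edits are in the LCS): let $p,q$ be the LCS starting positions in $S'$, $T'$ and put $d=p-q$. Sub-case D1, $d=i-j$: the edits align, so $\alpha=\beta$ is forced and the LCS length equals $\min(i-1,j-1,\lcs(S^{(i-1)},T^{(j-1)}))+1+\min(|S|-i,|T|-j,\lcp(S_{(i+1)},T_{(j+1)}))$. Sub-case D2, $d\neq i-j$ (assume without loss of generality $d>i-j$): aligning $S'$ against $T'$ at shift $d$ forces the middle portion between the two edits to match in the unmodified strings, giving the structural constraint $\lcp(S_{(i+1)},T_{(i+1-d)})\geq d-(i-j)-1$, together with the character conditions $T[i-d]=\alpha$ and $S[j+d]=\beta$. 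For any such $d$ the LCS length equals the middle length $d-(i-j)+1$ plus a backward \lcs{} extension from positions $i-1,\ i-d-1$ and a forward \lcp{} extension from positions $j+d+1,\ j+1$, capped by the string ends. To maximize over admissible $d$ I would describe the middle-match constraint as a union of $\Oh(\log n)$ arithmetic progressions via Prefix-Suffix / Period Queries~\cite{DBLP:conf/soda/KociumakaRRW15}, and then use orthogonal range-maximum queries on points indexed by shift, position, and the two character constraints.

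The main obstacle will be Sub-case D2: the admissible shifts are coupled by a long \LCE{} equality (controllable via Prefix-Suffix Queries) and by two character identities parameterized by the query $(\alpha,\beta)$, while the objective is a sum of backward/forward extensions anchored at two distinct points. The crux is to encode these joint constraints on $\Oh(n)$ preprocessed points so that the maximum over the valid shifts can be retrieved in $\Ohtilde(1)$ per query; I would follow the orthogonal range-maxima template used for the prefix/suffix LCS queries, augmented with extra coordinates carrying the character data drawn from the periodicity representation of the middle match.
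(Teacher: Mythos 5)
Your case decomposition (no edit in the LCS / one edit / both edits) matches the paper's, and Cases A and D1 are fine; in Case B you deviate from the paper by proposing a character-filtered variant of the HIA structure instead of first truncating to the longest suffix of $S^{(i-1)}$ and the longest prefix of $\alpha S_{(i+1)}$ that occur in $T$ (which is what makes the plain \textsc{Three Substrings LCS} query applicable) --- this is plausible but you leave both the truncation step and the "extra coordinate" enhancement unproved. The genuine gap, however, is in Sub-case D2. You correctly characterize the admissible shifts $d$ (the middle-match condition is exactly the statement that the length-$(d-(i-j)-1)$ prefix of $S_{(i+1)}$ is a suffix of $T^{(j-1)}$, so the candidates are the borders of the longest such prefix-suffix, organized into $\cO(\log n)$ arithmetic progressions by Prefix-Suffix Queries), but your plan for maximizing over them does not work. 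The objective for a given $d$ is $\lcs(S^{(i-1)},T^{(i-d-1)})+\lcp(S_{(j+d+1)},T_{(j+1)})+d-(i-j)+1$: both extension terms depend jointly on the query positions $i,j$ and on the shift $d$, so they cannot be stored as weights of $\Ohtilde(n)$ preprocessed points and retrieved by an orthogonal range-maximum query indexed by "shift, position, and characters" --- there are $\Theta(n^2)$ relevant (position, shift) pairs, and a structure answering such sum-of-two-extensions maxima for arbitrary $(i,j)$ in $\Ohtilde(1)$ time would essentially solve the general internal/two-range LCP problem that the paper argues is unlikely to admit such bounds. You name this encoding as "the crux" but do not resolve it, and adding the character constraints $T[i-d]=\alpha$, $S[j+d]=\beta$ (which your formulation indeed needs, since your candidate length otherwise overcounts) only makes the encoding harder, as these constraints are query-dependent and not periodic along a progression.

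The missing idea is to exploit the periodicity shared by all borders inside one arithmetic progression. If the progression has difference $p$ and at least three elements, then as a function of the index $w$ the two extensions take the form $\lcp(P^{w}X,Y)$ for fixed substrings $P,X,Y$ of $S'\#T'$, and such a function is piecewise linear with $\cO(1)$ pieces whose exact description can be obtained from $\cO(1)$ \LCE{} queries (Observation~\ref{obs:Pinfty}), answered on the edited strings via kangaroo jumps. Hence the maximum over each progression is computed analytically in $\cO(1)$ time at query time, with no precomputed range-maximum structure at all, giving $\cO(\log n)$ time over all progressions. Alternatively, note that if you define each candidate as $\lcsstring(S'^{(i)},T'^{(j-|U|-1)})\,U\,\lcpstring(S'_{(i+|U|+1)},T'_{(j)})$, extensions taken in the edited strings, the character conditions become unnecessary: a candidate violating them is still a valid common substring (just not containing both edits), so taking the overall maximum across all cases remains correct. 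Without this periodicity-based evaluation, your Sub-case D2 does not reach $\Ohtilde(1)$ query time, so the proof as proposed is incomplete.
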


In the solution we consider three cases depending on whether an occurrence of the LCS contains any of the changed positions in $S$ and $T$.

\subsection{LCS Contains No Changed Position}\label{sec:1-1.1}
We use the following lemma for a special case of internal LCS queries. Its proof can be found in Section~\ref{sec:internal}.

\begin{restatable}{lemma}{prefixSuffix}\label{lem:prefix-suffix-LCS}
  Let $S$ and $T$ be two strings of length at most $n$. After $\Oh(n \log^2 n)$-time and $\Oh(n \log n)$-space preprocessing, one can compute an LCS between any prefix or suffix of $S$ and prefix or suffix of $T$ in $\Oh(\log n)$ time.
\end{restatable}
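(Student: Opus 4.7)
My plan is to use the generalized suffix tree $\TT$ of $S$ and $T$, augmenting every explicit node $v$ with $\mathcal{D}(v)$ and with the extreme $S$- and $T$-starting positions $m_S(v), M_S(v), m_T(v), M_T(v)$ of the leaves in its subtree; these values are computed in $O(n)$ total time by a single post-order traversal after the linear-time construction of $\TT$.

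The key observation I would prove first is that, for the prefix-prefix variant,
\[
  |\LCS(S^{(i)}, T^{(j)})| \;=\; \max_{v}\, \min\bigl(\mathcal{D}(v),\ i+1-m_S(v),\ j+1-m_T(v)\bigr),
\]
where $v$ ranges over explicit nodes of $\TT$ containing at least one $S$-leaf and one $T$-leaf in their subtree. The upper bound is a counting argument on the locus of any common substring; the lower bound uses that along any edge $(u,v)$ the values $m_S, m_T$ coincide with those at $v$, so implicit interior points are dominated either by $v$ or by the parent $u$, and the length-$\alpha$ prefix of the path label of $v$ is a legitimate witness whenever $\alpha$ is the binding term. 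The three other prefix/suffix combinations admit analogous but simpler expressions in which a suffix constraint on the $S$ (resp.\ $T$) side replaces the $\min$-term by a one-sided condition $M_S(v)\ge i$ (resp.\ $M_T(v)\ge j$), as the fit-in-the-suffix condition is automatic from the suffix-tree structure.

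Given the formula, I would split the $\max$-$\min$ expression into a constant number of pure 2D range $\max$/$\min$ subproblems according to which term is binding at the optimum. For the prefix-prefix variant this yields three subproblems: (A) maximise $\mathcal{D}(v)$ subject to $\mathcal{D}(v)+m_S(v)\le i+1$ and $\mathcal{D}(v)+m_T(v)\le j+1$; (B) minimise $m_S(v)$ subject to $\mathcal{D}(v)+m_S(v)\ge i+1$ and $m_T(v)-m_S(v)\le j-i$; and (C) the $S\leftrightarrow T$ symmetric of (B). The prefix-suffix and suffix-prefix variants reduce to two subcases each, and the suffix-suffix variant is a single 2D dominance $\max$ on $(M_S(v), M_T(v))$ weighted by $\mathcal{D}(v)$. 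All subproblems are instances of 2D orthogonal range $\max$/$\min$ over $O(n)$ points, answerable in $O(\log n)$ time using $O(n\log n)$ space and $O(n\log^2 n)$ preprocessing, e.g.\ via persistent segment trees sorted on one coordinate and carrying a segment tree of extremes on the other.

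The main obstacle I expect is the case analysis underlying the three-way $\min$ in the prefix-prefix variant: I need to verify that the three subproblems collectively witness the true optimum irrespective of which terms tie at the minimum, and that every witness output by a subproblem actually corresponds to a common substring fitting inside the prescribed prefixes. Once this case analysis is formalized, the rest reduces cleanly to standard 2D range-reporting machinery.
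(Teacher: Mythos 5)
Your proposal is correct and takes essentially the same approach as the paper: a generalized suffix tree whose explicit nodes are annotated with the extreme $S$- and $T$-occurrence starting positions (your $m_S,M_S,m_T,M_T$ are exactly the paper's $\minPref$ and $\maxPref$), followed by a constant number of 2D orthogonal range maximum/minimum queries over points derived from $\mathcal{D}(v)$ and these values, with the same $\Oh(n\log^2 n)$/$\Oh(n\log n)$/$\Oh(\log n)$ bounds. The only cosmetic difference is that the paper reduces the four prefix/suffix combinations to the suffix--suffix and prefix--suffix cases by working also on the reversed strings (so it never needs your three-way split for prefix--prefix, its hardest direct case being a two-way split), but the underlying decomposition according to which term of the $\min$ is binding is the same.
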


It suffices to apply internal LCS queries of Lemma~\ref{lem:prefix-suffix-LCS} four times: each time for one of $S^{(i-1)},S_{(i+1)}$ and one of $T^{(j-1)},T_{(j+1)}$. 

\subsection{LCS Contains a Changed Position in Exactly One of the Strings}\label{sec:1-1.2}
We use the following lemma that was implicitly shown in \cite{Amir2017}. In Appendix~\ref{app:ami} we give its proof for completeness. It involves computing so-called \emph{ranges} of substrings in the \emph{generalized suffix array} of $S$ and $T$.

\begin{lemma}[\cite{Amir2017}]\label{lem:merge_ranges}
  Let $S$ and $T$ be strings of length at most $n$. After $\Oh(n \log \log n)$-time and $\Oh(n)$-space preprocessing, given two substrings $U$ and $V$ of $S$ or $T$, one can (a) compute a substring of $T$ equal to $UV$, if it exists, in $\Oh(\log \log n)$ time and (b) compute the longest substring of $T$ that is equal to a prefix (or a suffix) of $UV$ in $\Oh(\log n \log \log n)$ time.
\end{lemma}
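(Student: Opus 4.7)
The plan is to reduce (a) to a 2D orthogonal range emptiness (with witness reporting) query on a fixed set of $\Oh(n)$ points. The key observation is that $UV$ occurs in $T$ iff there exists a split position $j\in\{1,\ldots,|T|+1\}$ with $V$ a prefix of $T_{(j)}$ and $U^R$ a prefix of $(T^{(j-1)})^R$; equivalently, $T[j-|U|\dd j-1]=U$ and $T[j\dd j+|V|-1]=V$. In the generalized suffix arrays (GSAs) of $T$ and of $T^R$, these two conditions say exactly that the rank of $T_{(j)}$ in the GSA of $T$ lies in the range $[l_V,r_V]$ of $V$, and the rank of $(T^{(j-1)})^R$ in the GSA of $T^R$ lies in the range $[l_{U^R},r_{U^R}]$ of $U^R$.

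At preprocessing time I would build the GSTs and GSAs of $T$ and of $T^R$, together with weighted-ancestor structures that let me retrieve the GSA range of any substring of $S$ or $T$ given by a pair of indices in $\Oh(\log\log n)$ time. For each split position $j$ I would form the point $p_j=(\operatorname{rank}_T(T_{(j)}),\operatorname{rank}_{T^R}((T^{(j-1)})^R))$; since both coordinates are permutations of $[1,|T|]$, a 2D orthogonal range emptiness structure with witness reporting over these points is available with $\Oh(n)$ space, $\Oh(n\log\log n)$ construction time and $\Oh(\log\log n)$ query time (for instance based on $y$-fast tries). Query (a) then reduces to computing the two ranges, issuing one 2D rectangle query, and returning $T[j-|U|\dd j+|V|-1]$ from a witness $p_j$, if any.

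For (b) I would exploit monotonicity: if $(UV)^{(k)}$ occurs in $T$ then so does $(UV)^{(k-1)}$, so the predicate ``$(UV)^{(k)}$ occurs in $T$'' is monotone non-increasing in $k$ and I binary-search for its last $1$. For $k\le|U|$ the test is simply ``does $U^{(k)}$ occur in $T$?'', a single GSA-range non-emptiness check; for $k>|U|$ it is an instance of (a) with $V$ replaced by $V^{(k-|U|)}$, whose GSA range I obtain by one more weighted-ancestor query from the precomputed locus of $V$. This yields $\Oh(\log n)$ iterations of cost $\Oh(\log\log n)$ each, i.e.\ $\Oh(\log n\log\log n)$ in total; the ``suffix of $UV$'' variant is symmetric after reversing the strings.

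The main obstacle will be keeping every auxiliary lookup within the $\Oh(\log\log n)$ budget when the input substring may live in $S$ rather than in $T$: weighted ancestors naturally yield a locus in the GST of $S$ and $T$, whereas we need a range in the GSA of $T$ alone. This is handled by precomputing, for every explicit GST node, the (possibly empty) contiguous range of $T$-leaves in its subtree, so that the transfer from a GST locus to a GSA-of-$T$ range is $\Oh(1)$; combined with the 2D emptiness structure above, this realises the stated $\Oh(n\log\log n)$ preprocessing, $\Oh(n)$ space, and the two query bounds.
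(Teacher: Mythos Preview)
Your approach is correct in spirit and takes a genuinely different route from the paper. The paper invokes the suffix-array \emph{interval merging} primitive of Fischer et al.: from $\range_{S,T}(U)$ and $\range_{S,T}(V)$ it directly computes $\range_{S,T}(UV)$ in $\Oh(\log\log n)$ time, and then checks whether this interval contains a $T$-leaf via precomputed prefix sums. Part (b) is obtained, as you do, by binary search. Your reduction instead treats (a) geometrically: you materialise one point per split position $j$ in $T$, with coordinates the ranks of $T_{(j)}$ and of $(T^{(j-1)})^R$ in the forward and reverse suffix arrays, so that ``$UV$ occurs in $T$'' becomes a $4$-sided range-emptiness query with witness. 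This is a perfectly valid alternative; it avoids the Fischer et al.\ black box and replaces it with an orthogonal-range primitive, at the cost of needing the reversed index as well.

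The one place where you should be more careful is the claimed combination of $\Oh(n)$ space \emph{and} $\Oh(\log\log n)$ query time for $4$-sided $2$D range emptiness with witness. The hand-wave to ``$y$-fast tries'' does not by itself yield this; the standard tradeoffs give either $\Oh(n\log^{\epsilon} n)$ space with $\Oh(\log\log n)$ queries (Alstrup--Brodal--Rauhe), or $\Oh(n)$ space with $\Oh(\log^{\epsilon} n)$ queries via orthogonal range successor (Nekrich--Navarro). Either variant still proves part~(b) within $\Oh(\log n\log\log n)$, but to match the stated $\Oh(n)$ space of the lemma together with the $\Oh(\log\log n)$ bound for part~(a) you should cite a concrete structure achieving both, or note that the paper's interval-merging route sidesteps this issue entirely.
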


We show how to compute the longest substring that contains the position $i$ in $S$, but not the position $j$ in $T$ (the opposite case is symmetric). We first use \cref{lem:merge_ranges}(b) to compute in $\Oh(\log n \log \log n)$ time two substrings, $U$ and $V$, of $T$:
\begin{itemize}
\item $U$ is the longest substring of $T$ that is equal to a suffix of $S[1 \dd i-1]$;
\item $V$ is the longest substring of $T$ that is equal to a prefix of $\alpha S[i+1 \dd |S|]$.
\end{itemize}

Our task then reduces to computing the longest substring of $UV$ that crosses the boundary between $U$ and $V$ and is a substring of $T^{(j-1)}$ or of $T_{(j+1)}$. It can be found by two \textsc{Three substrings LCS} queries: one with $W=T^{(j-1)}$ and one with $W=T_{(j+1)}$. By the following lemma, proved in Section~\ref{sec:internal}, such queries can be answered in $\tilde{\cO}(1)$ time after $\tilde{\cO}(n)$-time preprocessing.

\begin{restatable}{lemma}{threeSubstring}\label{lem:3-substring-LCS}
  Let $T$ be a string of length at most $n$. After $\Ohtilde(n)$-time preprocessing, one can answer \textsc{Three Substrings LCS} queries in $\Ohtilde(1)$ time.\end{restatable}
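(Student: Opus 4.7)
The plan is to reduce \textsc{Three Substrings LCS} to a generalization of the heaviest induced ancestors (HIA) problem that additionally constrains the position of the ``induced leaf'' to a window, and then to solve the generalization via orthogonal range maximum queries over grids indexed by pairs of heavy paths. The case $W=T$ already coincides with plain HIA on $\T(T)$ and $\T(T^R)$ and is implicitly handled in~\cite{Amir2017}; the novelty here is the window coming from $W$.

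I would parameterize a candidate by its split position $p$, so that $X=T[p-|X|+1\dd p]$ and $Y=T[p+1\dd p+|Y|]$. For a fixed $p$, the best feasible $|X|$ and $|Y|$ are respectively
\[
D_1(p)=\mathcal{D}\bigl(\mathrm{LCA}_{\T(T^R)}(\mathrm{locus}(U^R),\ell_{n-p+1})\bigr),\qquad D_2(p)=\mathcal{D}\bigl(\mathrm{LCA}_{\T(T)}(\mathrm{locus}(V),\ell_{p+1})\bigr),
\]
where $\ell_i$ denotes the leaf labelled $i$. The window constraints $w_1\le p-|X|+1$ and $p+|Y|\le w_2$ further cap $|X|\le p-w_1+1$ and $|Y|\le w_2-p$, so the answer is $\max_p\bigl(\min(D_1(p),p-w_1+1)+\min(D_2(p),w_2-p)\bigr)$, taken over those $p$ for which both summands are non-negative.

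To evaluate this maximum efficiently I would apply heavy-path decompositions of both suffix trees. For every pair $(H,H')$ of heavy paths -- one from each tree -- I would store the split positions $p$ whose root-to-leaf paths visit both $H$ and $H'$, each as a point carrying its own departure depths $d_p(H)$ and $d_p(H')$; since each leaf visits $\Oh(\log n)$ heavy paths in each tree, the total number of points summed over all pairs is $\Oh(n\log^2 n)$. A query decomposes the root-to-locus paths of $U^R$ and $V$ into $\Oh(\log n)$ heavy subpath prefixes each, producing $\Oh(\log^2 n)$ candidate pairs $(H,H')$; for each pair, the constants $d_U(H)$ and $d_V(H')$ -- the depths to which the locus paths of $U^R$ and $V$ descend along $H$ and $H'$ -- are known, and, after splitting on which side of each $\min$ is active (a constant number of regimes), the optimum over that pair reduces to a constant number of constant-dimensional orthogonal range maximum queries on the grid of $(H,H')$.

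The main obstacle will be verifying that, after fixing the active regime for each $\min$, all remaining constraints -- the depth caps and the $w_1,w_2$ window on $p$ -- reduce to axis-aligned box constraints on the stored coordinates, so that off-the-shelf range-max structures apply; concretely, in each regime one must check that the inequality $|X|\le p-w_1+1$ (and its counterpart for $|Y|$) either becomes inactive or collapses into a one-sided bound on a single stored coordinate. Once this bookkeeping is completed, standard range trees on the $\Ohtilde(n)$-sized data yield $\Ohtilde(n)$ preprocessing and $\Ohtilde(1)$ time per query, as claimed.
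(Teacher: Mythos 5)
Your overall route is the same as the paper's: you reduce \textsc{Three Substrings LCS} over $\T(T^R)$ and $\T(T)$ to an HIA-type problem in which the inducing leaf is constrained to the window of $W$ (the paper's ``extended HIA'' queries), and you answer it with $\Oh(\log^2 n)$ orthogonal range-maximum queries over grids indexed by pairs of heavy paths, built from the $\Oh(n\log^2 n)$ (leaf, heavy-path-pair) incidences; your formula $\max_p\bigl(\min(D_1(p),p-w_1+1)+\min(D_2(p),w_2-p)\bigr)$ is a correct restatement of the target. The gap is precisely the step you flag, and it is where the paper's proof does its actual work. With the coordinates you propose to store --- the position $p$ and the departure depths $d_p(H),d_p(H')$ --- the window constraints are \emph{not} axis-aligned: in the regime where the first summand equals $d_p(H)$, the condition $d_p(H)\le p-w_1+1$ is the halfplane $p-d_p(H)\ge w_1-1$, and symmetrically $p+d_p(H')\le w_2$ on the other side. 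The paper resolves this by making $i-d_1$ and $i+d_2$ (your $p-d_p(H)$ and $p+d_p(H')$) explicit coordinates of the stored points (six dimensions, together with the two heavy-path identifiers and $d_1,d_2$), and by splitting the points into four collections $\mathcal{P}^{I},\dots,\mathcal{P}^{IV}$ according to whether, in each tree, the leaf leaves the heavy path strictly above the bottom node $y_j$ of the queried prefix fragment (then the relevant depth is the stored $d_j$ and the stored coordinate $i\mp d_j$ is the right one) or at/below it (then the depth is the query-time constant $\mathcal{D}(y_j)$, and the stored coordinate must be $i$ itself so that the shifted constraint $i\ge a+\mathcal{D}(y_1)$, resp.\ $i\le b+1-\mathcal{D}(y_2)$, stays one-sided). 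Without this choice of derived coordinates and the four-way split, ``off-the-shelf'' range-max structures do not apply, so the claim at the end of your sketch is not yet justified.

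Two further items you would need to add. First, because you fold the window into the objective through the minima rather than keeping it purely as a constraint on node depths (the paper's choice), you acquire extra regimes in which a cap is tight and the quantity to maximize over the grid is $p+d_p(H')$, or $p$ alone, or a constant (a pure feasibility test); these require range-maximum structures with those alternative weights --- the same machinery, but it has to be spelled out. Second, the paper must additionally issue ``special'' extended HIA queries when the locus of $U^R$ or of $V$ is an implicit node; your depth-based formulation largely absorbs this, since $d_U(H)$ and $d_V(H')$ can be taken to be exact string depths of (possibly implicit) loci, but this should be stated rather than left implicit.
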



\subsection{LCS Contains a Changed Position in Each of the Strings}\label{sec:1-1.3}
Recall that a Prefix-Suffix Query gets as input two substrings $Y$ and $Z$ in a string $S$ of length $n$ and an integer $d$ and returns the lengths of all prefixes of $Y$ of length between $d$ and $2d$ that are suffixes of $Z$. It is known that such a query returns an arithmetic sequence and if it has at least three elements, then its difference equals the period of all the corresponding prefixes-suffixes. Moreover, Kociumaka et al.~\cite{DBLP:conf/soda/KociumakaRRW15} show that Prefix-Suffix Queries can be answered in $\cO(1)$ time using a data structure of $\cO(n)$ size, which can be constructed in $\cO(n)$ time. By considering $X=Y=U$, this implies the two respective points of the lemma below.

\begin{lemma}\label{lem:borders}
  \begin{enumerate}[(a)]
    \item\label{la}
    For a string $U$ of length $m$, the set $\B_r(U)$ of border lengths of $U$ between $2^r$ and $2^{r+1}-1$ is an arithmetic sequence.
    If it has at least three elements, all the corresponding borders have the same period, equal to the difference of the sequence.
    \item\label{lb}
    Let $S$ be a string of length $n$. After $\Oh(n)$ time and space preprocessing, for any substring~$U$ of~$S$ and integer $r$, the arithmetic sequence $\B_r(U)$ can be computed in $\Oh(1)$ time.
  \end{enumerate}
\end{lemma}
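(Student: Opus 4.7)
The plan is to reduce both parts to the Prefix-Suffix Query framework of Kociumaka et al.~\cite{DBLP:conf/soda/KociumakaRRW15} by taking both of their input substrings equal to $U$ and choosing parameter $d = 2^r$. Since $\ell \in \B_r(U)$ iff $U[1\dd\ell]$ is a proper prefix of $U$ that also appears as a suffix of $U$ with $\ell \in [2^r, 2^{r+1})$, the set $\B_r(U)$ coincides with the output of this query, modulo filtering a single possibly-spurious endpoint at $2^{r+1}$ in $\Oh(1)$ time (in case the cited query returns the closed range $[d, 2d]$).

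For~\eqref{la}, the arithmetic-progression structure of $\B_r(U)$ is then inherited directly from the cited bound. For the period assertion, assume $|\B_r(U)| \ge 3$, let $d$ be its common difference, and let $\ell_{\max} = \max \B_r(U)$; so $\ell_{\max} - d \in \B_r(U)$ as well. Both $\ell_{\max}$ and $\ell_{\max} - d$ being borders of $U$ means that the respective prefixes agree with the respective suffixes of $U$; aligning these two border identities from the right-hand side of $U$ yields $U[1\dd\ell_{\max} - d] = U[d+1\dd\ell_{\max}]$, i.e., $U[1\dd\ell_{\max}]$ has period $d$. Since three terms of an arithmetic progression fit inside an interval of length $2^r - 1$, we have $2d \le 2^r - 1$, hence $d < 2^r \le \ell$ for every $\ell \in \B_r(U)$. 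Each such $U[1\dd\ell]$ is therefore a prefix of $U[1\dd\ell_{\max}]$ of length strictly larger than the period and inherits period $d$.

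For~\eqref{lb}, the plan is to preprocess $S$ once in $\Oh(n)$ time and space using the data structure of~\cite{DBLP:conf/soda/KociumakaRRW15}. Each request $(U,r)$, with $U = S[i\dd j]$, is then answered by a single $\Oh(1)$-time Prefix-Suffix Query on $(U, U, 2^r)$, returning $\B_r(U)$ as a compact triple (first term, common difference, last term) after the trivial endpoint filter.

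I expect the main — still minor — obstacle to be the period argument in~\eqref{la}: one has to combine the two "border = suffix" identities cleanly to obtain the alignment $U[1\dd\ell_{\max} - d] = U[d+1\dd\ell_{\max}]$, and also to note the numerical bound $d < 2^r$ that keeps every element of $\B_r(U)$ above the period so that the period can be propagated from $U[1\dd\ell_{\max}]$ to all shorter borders in $\B_r(U)$. Both are immediate from the structure of the interval $[2^r, 2^{r+1})$ and from the definition of a border, so the proof reduces entirely to an invocation of~\cite{DBLP:conf/soda/KociumakaRRW15}.
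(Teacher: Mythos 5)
Your reduction is the same one the paper uses: it also sets both query substrings equal to $U$ and $d=2^r$ in a Prefix-Suffix Query, and it cites the structural result of Kociumaka et al.\ for part~(a) in full (both the arithmetic-progression structure and the period statement), so part~(b) and the first half of~(a) are handled exactly as in the paper; the endpoint filtering you mention is harmless (you should also discard a reported length equal to $|U|$, since borders are proper, which is equally trivial).

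Where you deviate is in proving the period statement yourself, and there your argument proves slightly less than the lemma asserts. Under the paper's convention ``the period'' is the \emph{smallest} period, and this is also what the cited result guarantees; your argument (the alignment $U[1\dd\ell_{\max}-d]=U[d+1\dd\ell_{\max}]$ plus $2d\le 2^r-1$) only shows that the difference $d$ is \emph{a} period of every border in $\B_r(U)$. Minimality needs one more step. For the longest border: if $U[1\dd\ell_{\max}]$ had a period $d'<d$, then $U[1\dd\ell_{\max}-d']$ would be a prefix and a suffix of $U[1\dd\ell_{\max}]$, hence (because $U[1\dd\ell_{\max}]$ is itself a suffix of $U$) a border of $U$ whose length lies in $[2^r,2^{r+1}-1]$ but not in the difference-$d$ progression --- a contradiction. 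For a shorter border $U[1\dd\ell]$ with a hypothetical period $d''<d$: since $\ell\ge 2^r\ge 2d+1>d+d''$, Fine--Wilf gives period $g=\gcd(d,d'')$, so $U[1\dd d]=(U[1\dd g])^{d/g}$, and then the period-$d$ structure of $U[1\dd\ell_{\max}]$ forces $U[1\dd\ell_{\max}]$ to have period $g<d$, reducing to the previous case. Alternatively, simply cite the Kociumaka et al.\ statement in full, as the paper does. The shortfall is benign for the paper's applications (which only use that $d$ is a period of the borders in the batch), but as a proof of the lemma as stated it should be closed.
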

\newcommand{\eqdef}{\ensuremath{\stackrel{\text{\tiny def}}{=}}}

Let us proceed with an algorithm that finds a longest string $S'[i_{\ell} \dd 
i_r]=T'[j_{\ell} \dd j_r]$ such that $i_{\ell} \leq i \leq i_r$ and $j_{\ell} \leq j \leq j_r$. Let us assume that $i-i_{\ell} \leq j-j_{\ell}$; the symmetric case can be treated analogously. We then have that $U\eqdef S'[i+1 \dd i_{\ell}+j-j_{\ell}-1]=T'[j_{\ell}+i-i_{\ell}+1 \dd j-1]$ as shown in Fig.~\ref{fig:both_changed_1}. Note that these substrings do not contain any of the changed positions. Further note that $U=\varepsilon$ can correspond to $i-i_{\ell} = j-j_{\ell}$ or $i-i_{\ell}+1 = j-j_{\ell}$, so both these cases need to be checked.

  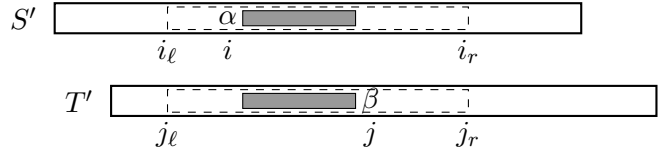
\begin{figure}[htpb]
    \begin{center}
      \begin{tikzpicture}[scale=0.5]


\begin{scope}[yshift=2.2cm]

  \draw[thick] (0,-0.4) rectangle (14,0.4);
  \draw[dashed] (3,-0.3) rectangle (11,0.3);
  \draw[fill=white!60!black,draw] (5,-0.2) rectangle (8,0.2);

  \draw (0,0) node[left=0.1cm] {$S'$};
  \draw (4.6,-0.3) node[below] {$i$};
  \draw (4.6,0) node[] {$\alpha$};
    \draw (3,-0.3) node[below] {$i_{\ell}$};
    \draw (11,-0.3) node[below] {$i_r$};

\end{scope}


   \draw[thick] (1.5,-0.4) rectangle (16,0.4);
   \draw[dashed] (3,-0.3) rectangle (11,0.3);
   \draw[fill=white!60!black,draw] (5,-0.2) rectangle (8,0.2);
   
  \draw (1.5,0) node[left=0.1cm] {$T'$};
  \draw (8.4,-0.3) node[below] {$j$};
  \draw (8.4,0) node[] {$\beta$};
    \draw (3,-0.3) node[below] {$j_{\ell}$};
    \draw (11,-0.3) node[below] {$j_r$};

\end{tikzpicture}
    \end{center}
    \caption{
      Occurrences of an LCS of $S'$ and $T'$ containing both changed positions are denoted by dashed rectangles.
      Occurrences of $U$ at which an LCS is aligned are denoted by gray rectangles.
    }\label{fig:both_changed_1}
  \end{figure}
  
Any such $U$ is a prefix of $S_{(i+1)}$ and a suffix of $T^{(j-1)}$; let $U_0$ denote the longest such string. Then, the possible candidates for $U$ are $U_0$ and all its borders. For a border $U$ of $U_0$, we say that 
\[\lcsstring(S'^{(i)},T'^{(j-|U|-1)})\, U\, \lcpstring(S'_{(i+|U|+1)},T'_{(j)})\]
is an \emph{LCS aligned at $U$}.

We compute $U_0$ in time $\cO(\log n)$ by asking Prefix-Suffix Queries for $Y=S_{(i+1)}$, $Z=T^{(j-1)}$ in $S\#T$ and $d=2^r$ for all $r=0,1, \ldots , \lfloor \log j \rfloor$. We then consider the borders of $U_0$ in arithmetic sequences of their lengths; see Lemma~\ref{lem:borders} in this regard.

If an arithmetic sequence has at most two elements, we compute an LCS aligned at each of the borders in $\Oh(1)$ time by the formula above using the kangaroo method (Observation~\ref{obs:kangaroo}). Otherwise, let $p$ be the difference of the arithmetic sequence, $\ell$ be its length, and $u$ be its maximum element. Let us further denote:

\vspace*{-0.3cm}
\begin{align*}
  X_1=S'_{(i+u+1)},\quad
  Y_1=T'_{(j)},\quad
  P_1=S'[i+u-p+1\dd i+u],\\
  X_2^R=T'^{(j-u-1)},\quad
  Y_2^R=S'^{(i)},\quad
  P_2^R=T'[j-u\dd j-u+p-1].
\end{align*}
The setting is presented in~\cref{fig:both_changed_2}.

It can be readily verified (inspect~\cref{fig:both_changed_2}) that a longest common substring aligned at the border of length $u - wp$, for $w \in [0,\ell -1]$, is equal to
\[g(w)=\lcs(X_2^R (P_2^R)^w,Y_2^R) + u - wp + \lcp(P_1^w X_1,Y_1) = \lcp(P_2^w X_2,Y_2) + \lcp(P_1^w X_1,Y_1) + u - wp.\]
Thus, a longest LCS aligned at a border whose length is in this arithmetic sequence is $\max_{w=0}^{\ell-1} g(w)$.

    \begin{figure}[htpb]
    \begin{center}
      \begin{tikzpicture}[scale=0.5]


\begin{scope}[yshift=5cm,xshift=1cm]

  \draw[thick] (0,-0.5) rectangle (15,0.5);
  \draw[fill=white!60!black,draw] (5,-0.5) rectangle (12.4,0.5);

  \draw (0,0) node[left=0.1cm] {$S'$};
  \draw (4.6,-0.4) node[below] {$i$};
  \draw (4.6,0) node[] {$\alpha$};
      \draw (2.5,0) node {$Y_2^R$};
      \draw (13.7,0) node {$X_1$};
  \draw (11.4,-0.5) -- (11.4,0.5);
  \draw (11.9,0) node {$P_1$};
   			   \draw (10.4,-0.5) -- (10.4,0.5);
  				\draw (10.9,0) node {$P_1$};
  				\draw (9.4,-0.5) -- (9.4,0.5);
  				\draw (9.9,0) node {$P_1$};
    \draw[latex-] (5,1.4) -- (8.2,1.4);  
        \draw (8.7,1.4) node {$u$};
  \draw[-latex] (9.2,1.4) -- (12.4,1.4);

    \draw[latex-] (9.4,-1) -- (10.4,-1);
    	\draw (10.9,-1) node {$3p$};
     \draw[-latex] (11.4,-1) -- (12.4,-1);

\begin{scope}[yshift=0.6cm]
  \foreach \x in {0,1,2,3,4,5,6,7,8,9,10,11}{
    \clip (5,0) rectangle (12.4,0.7);
    \draw[xshift=\x cm] (5,0) sin (5.5,0.5) cos (6,0);
  }
  
\end{scope}

\end{scope}


   \draw[thick] (0,-0.5) rectangle (16,0.5);
   \draw[fill=white!60!black,draw] (3,-0.5) rectangle (10.4,0.5);
   
  \draw (0,0) node[left=0.1cm] {$T'$};
      \draw (1.5,0) node {$X_2^R$};
      \draw (4,-0.5) -- (4,0.5);
      \draw (3.5,0) node {$P_2^R$};
      		 \draw (5,-0.5) -- (5,0.5);
     		 \draw (4.5,0) node {$P_2^R$};     
     		   \draw (6,-0.5) -- (6,0.5);
    		  \draw (5.5,0) node {$P_2^R$};    
      \draw (13.15,0) node {$Y_1$};
  \draw (10.8,-0.4) node[below] {$j$};
  \draw (10.8,0) node[] {$\beta$};
  
    \draw[latex-] (3,1.6) -- (4,1.6);
    	\draw (4.5,1.6) node {$3p$};
     \draw[-latex] (5,1.6) -- (6,1.6);  
  
\begin{scope}[yshift=0.6cm]
  \foreach \x in {0,1,2,3,4,5,6,7,8,9,10,11}{
    \clip (3,0) rectangle (10.4,0.7);
    \draw[xshift=\x cm] (3,0) sin (3.5,0.5) cos (4,0);
  }
\end{scope}


\begin{scope}[yshift=2.8cm]
  \draw[dashed] (1.5,-0.5) rectangle (13,0.5);
  \draw[dashed,fill=white!80!black,draw] (6,-0.5) rectangle (10.4,0.5);
    \draw[latex-] (6,0) -- (7,0);
    	\draw (8.2,0) node {$u-3p$};
     \draw[-latex] (9.4,0) -- (10.4,0);  
\end{scope}
\end{tikzpicture}
    \end{center}
    \caption{
      The border of length $u$ is denoted by gray rectangles. An LCS aligned at the border of length $u-3p$, which is in the same arithmetic sequence, is denoted by the dashed rectangle.
    }\label{fig:both_changed_2}
  \end{figure}
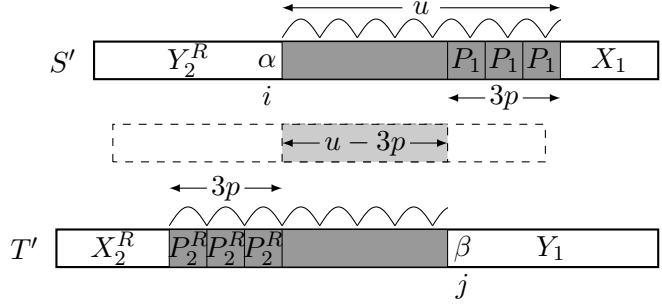

The following observation facilitates efficient evaluation of this formula.
\begin{observation}\label{obs:Pinfty}
  For any strings $P,X,Y$, the function $f(w)=\lcp(P^w X, Y)$ for integer $w \ge 0$ is piecewise linear with at most three pieces. Moreover, if $P,X,Y$ are substrings of a string $S$, then the exact formula of $f$ can be computed with $\Oh(1)$ \LCE{} queries on $S$.
\end{observation}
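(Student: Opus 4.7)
Set $p = |P|$ and introduce two $w$-independent quantities: $L = \lcp(X, P^\infty)$ and $\ell = \lcp(P^\infty, Y)$, with the convention that $L = |X|$ when $X$ is a prefix of $P^\infty$ (and analogously $\ell = |Y|$ when $Y$ is). Under this convention one has $\lcp(P^wX, P^\infty) = wp + L$ for every $w \ge 0$, since any mismatch between $P^wX$ and $P^\infty$ must be the first mismatch between $X$ and $P^\infty$ shifted by $w$ copies of $P$. The plan is to pin down $f(w)$ from $L$ and $\ell$ via an LCP ``triangle'' identity, and then handle separately the unique value of $w$ where this identity degenerates.

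The LCP triangle identity I will use is elementary: whenever $\lcp(A,C) \neq \lcp(B,C)$, one has $\lcp(A,B) = \min(\lcp(A,C), \lcp(B,C))$, because at position $\min+1$ exactly one of $A,B$ still agrees with $C$. Applied with $A = P^wX$, $B = Y$, $C = P^\infty$, this yields $f(w) = \min(wp + L,\,\ell)$ whenever $wp + L \neq \ell$; the function is therefore linear of slope $p$ while $wp + L < \ell$, and constantly $\ell$ once $wp + L > \ell$. The sole exceptional value is $w^\star = (\ell - L)/p$, which contributes a distinct value only when it is a non-negative integer and moreover $L < |X|$ and $\ell < |Y|$ (otherwise one string has ended and no extension is possible). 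At $w^\star$ both candidate mismatches coincide at position $\ell+1$, and the characters $X[L+1]$ and $Y[\ell+1]$ may happen to agree, extending the common prefix by $\Delta := \lcp(X_{(L+1)}, Y_{(\ell+1)})$ and giving $f(w^\star) = \ell + \Delta$. This establishes the three-piece structure: the linear piece $wp + L$ for $0 \le w < w^\star$, the isolated value $\ell + \Delta$ at $w = w^\star$, and the constant $\ell$ for $w > w^\star$.

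For the computational part I will invoke the standard period-detection trick on $S$: since $P, X, Y$ are substrings of $S$, the value $\ell$ equals $\lcp(P,Y)$ when $Y$ does not start with $P$, and otherwise $\min(p + \lcp(Y_{(p+1)}, Y),\,|Y|)$, each ingredient being a single LCE query on $S$; the value $L$ is computed symmetrically with $X$ in place of $Y$; and $\Delta$ is one further LCE query. The main obstacle will be the bookkeeping for the degenerate configurations --- $X$ or $Y$ already a prefix of $P^\infty$, $w^\star$ not a non-negative integer, or the potential coincidence being blocked by a string ending --- each of which I will verify collapses the structure into two (or fewer) pieces, still within the three-piece bound.
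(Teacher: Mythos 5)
Your proof is correct and takes essentially the same route as the paper's: it compares $wp+L$ (the paper's $a+wp$) with $\ell$ (the paper's $b$), handles the boundary case $wp+L=\ell$ with one extra lcp of the two remainders, and computes $L$ and $\ell$ via the standard period trick using two \LCE{} queries each plus one more for $\Delta$, i.e.\ five in total. Your explicit boundary value $\ell+\Delta=\ell+\lcp(X_{(L+1)},Y_{(\ell+1)})$ is in fact a cleaner rendering of the paper's middle case, and the degenerate configurations you defer ($L=|X|$, $\ell=|Y|$, or $w^\star$ not a non-negative integer) do collapse to $\min(wp+L,\ell)$ as you claim.
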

\begin{proof}
  Let $a=\lcp(P^\infty,X)$, $b=\lcp(P^\infty,Y)$, and $p=|P|$. Then:
  \[f(w)=\left\{\begin{array}{ll}
    a+wp & \text{if }a+wp<b\\
    w+\lcp(X,Y[aw+1 \dd |Y|]) & \text{if }a+wp=b\\
    b & \text{if }a+wp>b.
  \end{array} \right .\]

  Note that $a$ can be computed from $\lcp(P,X)$ and $\lcp(X,X[p+1 \dd |X|])$, and $b$ analogously. Thus if $P,X,Y$ are substrings of $S$, five \LCE{} queries on $S$ suffice.
\end{proof}

By Observation~\ref{obs:Pinfty}, $g(w)$ can be expressed as a piecewise linear function with $\Oh(1)$ pieces. Moreover, its exact formula can be computed using $\Oh(1)$ \LCE{} queries on $S'\#T'$, hence, in $\Oh(1)$ time using the kangaroo method (Observation~\ref{obs:kangaroo}). This allows to compute $\max_{w=0}^{\ell-1} g(w)$ in $\Oh(1)$ time.

Each arithmetic sequence is processed in $\cO(1)$ time. The global maximum that contains both changed positions is the required answer. Thus the query time in this case is $\Oh(\log n)$ and the preprocessing requires $\Oh(n)$ time and space.

By combining the results of~\cref{sec:1-1.1,sec:1-1.2,sec:1-1.3}, we arrive at Theorem~\ref{thm:1-1}.







\section{One-Sided Fully Dynamic LCS}\label{sec:onesided}

In this section we show a solution to fully dynamic LCS in the case that edit operations are only allowed in $S$. We consider an auxiliary problem.

{\defDSproblem{$k$\textsc{-Substring LCS}}{Two strings $S$ and $T$ of length at most $n$}{Compute $\LCS(S',T)$ where $S'=F_1 \dots F_k$ is a $k$-substring of $S$}

\subsection{Internal Queries}
It suffices to apply internal LCS queries for a substring of $S$ and the whole $T$. The following lemma is proved in Section~\ref{sec:internal}.

\begin{restatable}{lemma}{substringString}\label{lem:substring-string-LCS}
  Let $S$ and $T$ be two strings of length at most $n$. After $\cO(n)$-time preprocessing, one can compute an LCS between $T$ and any substring of $S$ in $\Oh(\log n)$ time.
\end{restatable}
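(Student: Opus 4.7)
The plan is to reduce the internal query to a binary search over range-maximum queries on the \emph{matching-statistics} array of $S$ with respect to $T$. Recall that the matching-statistics array $M[1\dd|S|]$, where $M[p]$ is the length of the longest prefix of $S_{(p)}$ that occurs as a substring of $T$, can be computed in $\Oh(n)$ time by scanning $S$ through the suffix tree $\T(T)$ using suffix links (the classical Chang--Lawler argument). On top of $M$ I will build the $\Oh(n)$-space, $\Oh(1)$-query range-maximum data structure of~\cite{Bender2000}. This completes the $\Oh(n)$-time preprocessing.

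For a query on $S[i\dd j]$, the key observation is that a string of length $\ell$ is a common substring of $S[i\dd j]$ and $T$ if and only if it occurs in $S$ at some position $p\in[i,\,j-\ell+1]$ with $M[p]\ge \ell$: the forward direction follows by taking the starting position of such an occurrence inside $S[i\dd j]$, and the backward direction by truncating to length $\ell$ the prefix of $S_{(p)}$ witnessed by $M[p]$ (which still lies inside $S[i\dd j]$ because $p+\ell-1\le j$). Consequently,
\[
 \lcs(S[i\dd j],T)\;=\;\max\bigl\{\ell\;:\;\max_{p\in[i,\,j-\ell+1]} M[p]\;\ge\;\ell\bigr\}.
\]
The predicate in braces is monotone in $\ell$ (shrinking $\ell$ only enlarges the admissible range and lowers the threshold), so I locate the largest feasible $\ell^\ast$ by binary search, spending one RMQ per step, for a total query cost of $\Oh(\log n)$. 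The position $p^\ast$ returned by the final range-maximum query together with $\ell^\ast$ then yields the LCS as the substring $S[p^\ast\dd p^\ast+\ell^\ast-1]$ of $S$.

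No step is particularly delicate: the matching-statistics identity makes the binary search almost immediate, and the only implementation point worth flagging is the $\Oh(n)$-time construction of $M$ over the integer alphabet $\Sigma=\{1,\dots,n^{\Oh(1)}\}$, which depends on constant-time child access at the nodes of $\T(T)$ via perfect hashing~\cite{DBLP:journals/jacm/FredmanKS84} combined with suffix-link traversal --- both already in the standard toolbox recalled in Section~\ref{sec:prel}.
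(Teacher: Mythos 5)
Your proof is correct and follows essentially the same route as the paper: the matching-statistics array $M$ is exactly the paper's array $B[i]=\max_j \lcp(S_{(i)},T_{(j)})$, and both arguments reduce the query to the identity $|\LCS(S[i\dd j],T)|=\max_k \min(B[k],\,j-k+1)$ evaluated via constant-time range maximum queries plus an $\Oh(\log n)$ binary search. The only (cosmetic) difference is that you binary-search over the answer length $\ell$ directly, whereas the paper binary-searches for the crossover position $k_0$ using the auxiliary array $B'[k]=B[k]+k$; both are valid and give the same bounds.
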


We need to answer $k$ such queries. Additionally, the case that some $F_i$ is a single character can be served by storing a balanced BST over all characters in $T$.

\subsection{Cross-Substring Queries}
We now present an algorithm that computes, for each $i=1,\ldots,k$, the longest substring of $S'$ that contains the first character of $F_i$, but not of $F_{i-1}$ and occurs in $T$ in $\cO(k \log^2 n)$ time. These are the possible LCSs that cross the substring boundaries.

For convenience let us assume that $F_0=F_{k+1}$ are empty strings. For an index $i \in \{1,\ldots,k\}$, by $\nnext(i)$ we denote the greatest index $j \ge i-1$ for which $F_i \dots F_j$ is a substring of $T$. These values are computing using a sliding-window-based approach. We start with computing $\nnext(1)$. To this end, we use Lemma~\ref{lem:merge_ranges}(a) for subsequent substrings $F_1,F_2, \ldots$ as long as their concatenation is a substring of $T$. This takes $\Oh(k \log \log n)$ time. Now assume that we have computed $\nnext(i)$ and we wish to compute $\nnext(i+1)$. Obviously, $\nnext(i+1) \ge \nnext(i)$. Let $j=\nnext(i)$. We start with $F_{i+1} \dots F_j$ which is represented as a substring of $T$. We keep extending this substring by $F_{j+1},F_{j+2},\ldots$ using Lemma~\ref{lem:merge_ranges}(a) as before as long as the concatenation is a substring of $T$. In total, computing values $\nnext(i)$ for all $i=1,\ldots,k$ takes $\Oh(k \log \log n)$ time.

Let us now fix $i$ and let $j=\nnext(i)$. We use Lemma~\ref{lem:merge_ranges}(b) to find the longest prefix $P_i$ of $(F_i \dots F_j)F_{j+1}$ that occurs in $T$; it is also the longest prefix of $F_i \dots F_k$ that occurs in $T$ by the definition of $\nnext(i)$. Then Lemma~\ref{lem:merge_ranges}(b) can be used to compute the longest suffix $Q_i$ of $F_{i-1}$ that occurs in $T$. We then compute the result by a \textsc{Three Substrings LCS} query for $U=Q_i$, $V=P_i$, and $W=T$ using a special case of \textsc{Three Substrings LCS} queries.

\begin{restatable}{lemma}{threeSubstringsSpecial}\label{lem:3-substring-LCS-special}
  The \textsc{Three Substrings LCS} queries with $W=T$ can be answered in $\cO(\log^2 n)$ time, using a data structure of size $\cO(n\log^2 n)$ that can be constructed in $\cO(n\log^2 n)$ time.
\end{restatable}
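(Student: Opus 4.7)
The plan is to reduce the query, with $W=T$, to the Heaviest Induced Ancestors (HIA) problem of Gagie et al.~\cite{DBLP:conf/cccg/GagieGN13} on the pair of suffix trees $\T(T)$ and $\T(T^R)$, and then invoke their data structure. I would first build $\T(T)$ and $\T(T^R)$ in $\cO(n)$ time together with constant-time weighted-ancestor support, so that the locus of any substring of $T$ or of $T^R$ given by endpoints is found in $\cO(1)$ time. I then identify the leaves of the two trees through the natural bijection on \emph{seam positions}: position $i\in\{1,\ldots,n+1\}$ is associated with the leaf of $\T(T)$ for the suffix $T_{(i)}$ and with the leaf of $\T(T^R)$ for $(T^{(i-1)})^R$. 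Under this identification, a pair of explicit ancestors $u'\in\T(T^R)$ and $v'\in\T(T)$ shares a common seam $i$ in their subtrees if and only if $\mathcal{L}(u')^R\cdot\mathcal{L}(v')$ occurs in $T$ at position $i-\mathcal{D}(u')$.

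To serve a query on $U,V$, I would compute the loci $\mu_{U^R}$ in $\T(T^R)$ and $\mu_V$ in $\T(T)$ in $\cO(1)$ time via weighted-ancestor queries. The condition ``$X$ is a suffix of $U$'' becomes ``the locus of $X^R$ is an ancestor of $\mu_{U^R}$''; symmetrically for $Y$ and $\mu_V$; and ``$XY$ occurs in $T$'' is precisely the induced-ancestor condition under the seam bijection. The objective $|X|+|Y|$ coincides with the HIA weight $\mathcal{D}(u')+\mathcal{D}(v')$, so the query reduces to a single HIA call on $\mathcal{T}(T^R),\mathcal{T}(T)$ at the pair $(\mu_{U^R},\mu_V)$. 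The HIA data structure of Gagie et al.\ attains $\cO(n\log^2 n)$ space and construction time and $\cO(\log^2 n)$ query time, built from heavy-path decompositions of both trees together with, for each pair of heavy paths, an orthogonal range-maximum grid indexed by path positions and weighted by string-depths; a query ascends the $\cO(\log n)$ heavy-path segments in each tree and performs $\cO(\log^2 n)$ range-maximum lookups.

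The main bookkeeping obstacle is that HIA is formulated for explicit tree nodes, whereas $\mu_{U^R}$, $\mu_V$, and the optimal $u',v'$ may lie at implicit positions on edges. I would handle this in the standard way: replace $\mu_{U^R}$ and $\mu_V$ by their nearest explicit descendants before invoking HIA, and then separately examine the $\cO(1)$ additional candidate configurations in which one or both ancestors slide up along the single edge just above the chosen explicit descendant; for each such candidate, the optimal sliding amount is determined in $\cO(1)$ via a few \LCE{} queries on $T\# T^R$, using the fact that moving up by $\delta$ on one side only pays off against the increase attainable on the other side up to the matching length at the witness seam. Once the best pair $(u',v')$ and witness seam $i$ are identified, the answer $XY$ is reported as $T[i-\mathcal{D}(u')\dd i-1+\mathcal{D}(v')]$. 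The overall bounds thus match the $\cO(n\log^2 n)$ preprocessing and $\cO(\log^2 n)$ query time claimed in the lemma.
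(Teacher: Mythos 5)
Your reduction itself (the seam/leaf bijection between $\mathcal{T}(T^R)$ and $\mathcal{T}(T)$, the equivalence ``$u_1,u_2$ induced $\iff$ $\mathcal{L}(u_1)^R\mathcal{L}(u_2)$ occurs in $T$'', and the invocation of the Gagie et al.\ HIA structure with weight $\mathcal{D}(u_1)+\mathcal{D}(u_2)$) is exactly the paper's argument. The gap is in your treatment of implicit loci. You propose to replace $\mu_{U^R}$ and $\mu_V$ by their nearest explicit \emph{descendants} $b_1,b_2$ and run one ordinary HIA query. But the ancestors of $b_1$ include $b_1$ itself, whose string-depth exceeds $|U|$, so the pair realizing the HIA maximum may be \emph{invalid} (it encodes an $X$ that is not a suffix of $U$), and its weight overstates the true value. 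Your fix --- capping/``sliding up'' at the witness seam of the returned pair using a few $\LCE$ queries --- is local to that one seam and does not recover the valid optimum: the best valid pair may be anchored at a completely different leaf $i$ and at a strict explicit ancestor of the locus, and since its (honest) weight is smaller than the inflated weight of the invalid pair, the single HIA answer never exposes it. Concretely, if at seam $i$ the left extension matches far beyond $U$ (depth $\mathcal{D}(b_1)=|U|+100$) but the right extension matches $V$ for only one character, while at seam $j$ the left matches only $|U|-5$ characters of $U$ and the right matches $50$ characters of $V$, the HIA call on $(b_1,b_2)$ reports seam $i$; capping there yields $|U|+1$, and the true optimum $|U|+45$ at seam $j$ is lost. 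Maximizing over all seams with one side pinned to the full $U$ (or $V$) is itself a range-maximum computation over the heavy-path grids, not something a constant number of $\LCE$ queries can do.

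The paper avoids this by never letting the structure see over-deep candidates: the main HIA query is asked at the \emph{lowest explicit ancestors} of the two loci (so every reported pair is automatically valid), and the configurations in which the optimum uses all of $U$ or all of $V$ are handled by \emph{special} HIA queries, in which one side is pinned at the closest explicit descendant of the implicit locus (same leaf set as the locus, with that side's contribution taken as $|U|$ or $|V|$) and only the other side is maximized over ancestors; \cref{lem:HIA} states that the Gagie et al.\ machinery answers these pinned queries within the same $\cO(\log^2 n)$ time and $\cO(n\log^2 n)$ space/construction bounds. Replacing your descendant-substitution step by this ancestor query plus the two special queries (and the trivial both-pinned check, i.e.\ whether $UV$ occurs in $T$) repairs the argument; as a minor point, your claim of $\cO(1)$-time weighted-ancestor/locus computation is stronger than needed and than what the paper's toolbox provides ($\cO(\log\log n)$ suffices here).
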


Lemma~\ref{lem:3-substring-LCS-special} was implicitly proved by Amir et al.\ in~\cite{Amir2017}. In Section~\ref{sec:internal} we sketch its proof for completeness.

\textit{Complexity.} Recall that computing the values $\nnext(i)$ takes time $\cO(k \log \log n)$. For each $i$ it takes time $\cO(\log n \log \log n)$ to find $P_i$ and $Q_i$. These steps require $\Oh(n\log \log n)$-time and $\Oh(n)$-space preprocessing. Finally, a \textsc{Three substrings LCS} query with $W=T$ can be answered in $\Oh(\log^2 n)$ time with a data structure of size $\cO(n\log^2 n)$ that can be constructed in $\cO(n\log^2 n)$ time (Lemma~\ref{lem:3-substring-LCS-special}).

\subsection{Round-Up}
Combining the results of the two subsections we obtain the following.

\begin{lemma}\label{thm:ksubs}
$k\textsc{-Substring LCS}$ queries can be answered in $\Oh(k \log^2 n)$ time, using a data structure of size $\Oh(n\log^2 n)$ that can be constructed in $\Oh(n\log^2 n)$ time.
\end{lemma}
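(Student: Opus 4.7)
The plan is to combine the two subsections directly. Given a $k$-substring $S' = F_1 \dots F_k$ of $S$, any LCS of $S'$ and $T$ either (i) occurs entirely inside a single fragment $F_i$ (or is a single character matching some character in $T$), or (ii) spans the boundary between $F_{i-1}$ and $F_i$ for some $i \in \{2,\ldots,k\}$. I would take the maximum of the candidates produced by these two cases.

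For case (i), I apply \cref{lem:substring-string-LCS} once for each fragment $F_i$ that is a genuine substring of $S$, yielding the LCS between $F_i$ and $T$ in $\Oh(\log n)$ time per fragment, for a total of $\Oh(k \log n)$. Fragments that consist of a single inserted character are handled by a balanced BST over the distinct characters occurring in $T$, which can be built in $\Oh(n)$ time at preprocessing and queried in $\Oh(\log n)$.

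For case (ii), I first compute the values $\nnext(i)$ for all $i = 1, \ldots, k$ using the sliding-window approach described in the cross-substring section: start with the concatenation $F_1 F_2 \cdots$ and extend/shrink it using \cref{lem:merge_ranges}(a), giving a total cost of $\Oh(k \log \log n)$. Then for each $i$, I invoke \cref{lem:merge_ranges}(b) twice to compute $P_i$ (the longest prefix of $F_i \dots F_k$ occurring in $T$) and $Q_i$ (the longest suffix of $F_{i-1}$ occurring in $T$) in $\Oh(\log n \log \log n)$ time; finally, I issue a \textsc{Three Substrings LCS} query with $U = Q_i$, $V = P_i$, $W = T$, answered in $\Oh(\log^2 n)$ time by \cref{lem:3-substring-LCS-special}. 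Summing over $i$ gives $\Oh(k \log^2 n)$.

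The preprocessing costs are dominated by \cref{lem:3-substring-LCS-special}, which requires $\Oh(n \log^2 n)$ time and space; the remaining data structures (\cref{lem:substring-string-LCS}, \cref{lem:merge_ranges}, and the character BST) fit within these bounds. Adding the per-case query bounds yields the claimed $\Oh(k \log^2 n)$ query time, establishing the lemma. There is no real obstacle here, as all heavy lifting has been done in the previous subsections; the only mild care needed is to treat single-character fragments uniformly in both cases.
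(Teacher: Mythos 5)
Your proposal is correct and follows essentially the same route as the paper: split into occurrences lying inside a single fragment (handled by \cref{lem:substring-string-LCS} plus a character BST) and occurrences crossing a fragment boundary (handled via $\nnext(i)$, $P_i$, $Q_i$ from \cref{lem:merge_ranges} and a \textsc{Three Substrings LCS} query with $W=T$ via \cref{lem:3-substring-LCS-special}), with the same $\Oh(k\log^2 n)$ query and $\Oh(n\log^2 n)$ preprocessing accounting.
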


We now formalize the time slicing deamortization technique for our purposes.

\begin{lemma}\label{lem:timeslice}
Assume that there is a data structure $\mathcal{D}$ over an input string of length $n$ that occupies $\cO(s_n)$ space, answers queries for $k$-substrings in time $\cO(q_n(k))$ and can be constructed in time $\cO(t_n)$.
Assume that $s_n \ge n$ and $q(k,n) \geq k$ is continuously non-decreasing with respect to $k$.
We can then design an algorithm that preprocesses the input string in time $\cO(t_n)$ and answers queries dynamically under edit operations in worst-case time $\cO(q_n(\kappa))$, where $\kappa$ is such that $q_n(\kappa)=(t_n+n)/\kappa$, using $\cO(s_n)$ space.
\end{lemma}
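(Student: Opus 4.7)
The plan is a textbook time-slicing deamortization built on top of $\mathcal{D}$ using Lemma~\ref{lem:ksub}. First I would record the following amortized template: rebuild $\mathcal{D}$ from scratch every $\kappa$ edits over the current string. Between rebuilds, maintain the live string as a $k$-substring of the base string $S_0$ on which the current copy of $\mathcal{D}$ is built; by Lemma~\ref{lem:ksub}, each edit grows $k$ by at most $2$ and costs $\Oh(k)=\Oh(\kappa)$ time to update. A query is then answered by feeding the current $k$-substring (with $k\le 2\kappa+1$) to $\mathcal{D}$, in time $\Oh(q_n(2\kappa+1))=\Oh(q_n(\kappa))$ by monotonicity of $q_n$. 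Each rebuild costs $\Oh(t_n+n)$ (the extra $n$ being the time to materialise the current string from its $k$-substring representation into a contiguous array before feeding it to $\mathcal{D}$'s construction), charged to $\kappa$ edits, so the amortised rebuild cost per edit is $\Oh((t_n+n)/\kappa)=\Oh(q_n(\kappa))$ by the choice of $\kappa$.

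Second, to make the bound worst-case, I would keep two copies of the data structure side by side: an \emph{active} copy $\mathcal{D}_{\mathrm{act}}$ built over some base $S_0$, and, whenever a rebuild is in progress, a \emph{pending} copy $\mathcal{D}_{\mathrm{new}}$ being built in the background over a snapshot $S_0'$ taken at the moment the rebuild started. The rebuild is executed as a coroutine that performs exactly $\Oh((t_n+n)/\kappa)=\Oh(q_n(\kappa))$ units of work per edit, split evenly across its two phases (first writing out $S_0'$ character by character from the current $k$-substring representation, then running the black-box construction of $\mathcal{D}$). A fresh rebuild is triggered as soon as the active base is $\kappa$ edits old; by construction the rebuild finishes within the following $\kappa$ edits, at which point $\mathcal{D}_{\mathrm{new}}$ is promoted to active, the old copy together with $S_0$ is discarded, and the live string is re-expressed as a $k$-substring of $S_0'$ with $k\le\kappa$. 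This guarantees that at every moment the active base is at most $2\kappa$ edits old, so the live string is represented as a $k$-substring with $k\le 4\kappa+1$ of the active base, and every query still costs $\Oh(q_n(\kappa))$.

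Third, I would check the invariants and budget. Per-edit work splits into: $\Oh(\kappa)\le\Oh(q_n(\kappa))$ to extend the $k$-substring via Lemma~\ref{lem:ksub}, plus $\Oh(q_n(\kappa))$ of background rebuild work, plus the $\Oh(q_n(\kappa))$ query on the active $\mathcal{D}$; all three terms coincide, yielding worst-case $\Oh(q_n(\kappa))$ per operation. The space is dominated by at most two copies of $\mathcal{D}$ and two base strings, i.e.\ $\Oh(s_n+n)=\Oh(s_n)$ since $s_n\ge n$. Existence and meaningfulness of $\kappa$ follow from the hypotheses: the function $\kappa\mapsto q_n(\kappa)\cdot\kappa$ is continuous and non-decreasing with $q_n(1)\le t_n+n$ and $q_n(t_n+n)\cdot(t_n+n)\ge t_n+n$, so a value $\kappa\in[1,t_n+n]$ solving $q_n(\kappa)=(t_n+n)/\kappa$ exists.

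The main obstacle is purely the bookkeeping around the background rebuild: we only have $\mathcal{D}$ as a black box, so we must ensure that its construction algorithm is pausable in $\Oh((t_n+n)/\kappa)$ increments. In all the concrete applications in this paper the construction is a standard near-linear sweep (suffix-tree-based, with additional preprocessing for orthogonal range queries, HIA, etc.), so it is trivially executed as a coroutine; we invoke this mildly informal assumption once here and inherit it for every instantiation.
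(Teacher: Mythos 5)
Your proposal is correct and follows essentially the same route as the paper: maintain the live string as a $k$-substring of a base via Lemma~\ref{lem:ksub}, rebuild $\mathcal{D}$ on a fresh snapshot every $\kappa$ operations, spread the $\cO(t_n+n)$ rebuild (including materialising the snapshot) evenly over the next $\kappa$ operations while answering queries on the old copy, and keep at most two copies so the active base is never more than $2\kappa$ edits stale. The only points worth noting are cosmetic: your step $q_n(2\kappa+1)=\cO(q_n(\kappa))$ is not literally implied by monotonicity (it needs $q_n(\cO(\kappa))=\cO(q_n(\kappa))$, which holds for the polynomial/polylog bounds used here and is implicitly assumed by the paper as well), and your explicit remarks on the existence of $\kappa$ and on the pausability of the black-box construction make assumptions precise that the paper leaves implicit.
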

\begin{proof}
We first build $\mathcal{D}$ for the input string. The $k$-substring of $S$ after the subsequent edit operations is stored. We keep a counter $C$ of the number of queries answered since the point in time to which our data structure refers; if $C \leq 2\kappa$ and a new edit operation occurs, we create the $(2C+1)$-substring representing the string from the $(2C-1)$-substring we have using Lemma~\ref{lem:ksub} in time $\cO(C)=\cO(\kappa)$ and answer the query in time $\cO(q_n(C))=\cO(q_n(\kappa))$.

For convenience let us assume that $\kappa$ is an integer.
As soon as $C=\kappa$, we start recomputing the data structure $\mathcal{D}$ for the string after all the edit operations so far, but we allocate this computation so that it happens while answering the next $\kappa$ queries. First we create a working copy of the string in $\Oh(n)$ time and then construct the data structure in $\Oh(t_n)$ time.

When $C=2\kappa$, we set $C$ to $\kappa$, dispose of the original data structure and string and start using the new ones for the next $\kappa$ queries while computing the next one.

The following invariant is kept throughout the execution of the algorithm: the data structure being used refers to the string(s) at most $2\kappa$ edit operations before. Hence, the query time is $\cO(q_n(\kappa))$. The extra time spent during each query for computing the data structure is also $\cO((t_n+n)\kappa)=\cO(q_n(\kappa))$ since the $\cO(t_n+n)$ time is split equally among $\kappa$ queries. At every point in the algorithm we store at most two copies of the data structure. The statement follows. 
\end{proof} 

We plug Lemma~\ref{thm:ksubs} into our general scheme for dynamic string algorithms using~\cref{lem:timeslice} to obtain the first result.

\begin{theorem}\label{thm:1sided}
Fully dynamic LCS queries on strings of length up to $n$ with edits in one of the strings only can be answered in $\cO(\sqrt{n} \log^2 n)$ time per query, using $\cO(n \log n)$ space, after $\cO(n \log^2 n)$-time preprocessing.
\end{theorem}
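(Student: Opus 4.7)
The plan is to combine Lemma~\ref{thm:ksubs} (the static $k$-Substring LCS data structure) with the time-slicing deamortization framework of Lemma~\ref{lem:timeslice}. Lemma~\ref{lem:ksub} already tells us that the string $S$ after a sequence of $k$ successive edit operations is a $(2k+1)$-substring of the original $S$, and that the representation can be updated in $\Oh(k)$ time per edit. So as long as we periodically rebuild the data structure, we can maintain $S'$ explicitly as a $(2k+1)$-substring of the preprocessed input and feed it to the static query algorithm.

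Concretely, I would instantiate the parameters of Lemma~\ref{lem:timeslice} with
\[
  s_n = \Oh(n\log^2 n),\qquad t_n = \Oh(n\log^2 n),\qquad q_n(k) = \Oh(k\log^2 n),
\]
as provided by Lemma~\ref{thm:ksubs}. The function $q_n(k)$ is linear (hence continuously non-decreasing) in $k$ and satisfies $q_n(k)\ge k$, so the framework applies. The balancing equation $q_n(\kappa) = (t_n+n)/\kappa$ becomes
\[
  \kappa \log^2 n \;=\; \frac{n\log^2 n + n}{\kappa},
\]
which yields $\kappa = \Theta(\sqrt{n})$ and therefore worst-case query time $\Oh(q_n(\kappa)) = \Oh(\sqrt{n}\log^2 n)$, with space $\Oh(s_n) = \Oh(n\log^2 n)$ and preprocessing $\Oh(t_n) = \Oh(n\log^2 n)$, matching the statement (up to the space bound, which is read off directly from Lemma~\ref{thm:ksubs}).

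Most of the work is already done; the only things to verify are the two preconditions of Lemma~\ref{lem:timeslice} and the correct evolution of the $k$-substring representation. For the former, the monotonicity and the $q_n(k)\ge k$ requirement are immediate from $q_n(k)=\Oh(k\log^2 n)$. For the latter, after each edit operation we invoke Lemma~\ref{lem:ksub} to turn the current $k$-substring of $S$ (together with a pointer into its linked list) into a $(k+2)$-substring in $\Oh(k)$ time; this cost is subsumed by the query cost $\Oh(k\log^2 n)$, so it does not affect the overall bound.

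There is no genuine obstacle here: the only subtlety is bookkeeping inside the time-slicing argument, namely that while we are amortizing the rebuild of $\mathcal{D}$ over the next $\kappa$ queries, the ``current'' $k$-substring fed to the old copy of $\mathcal{D}$ is of length at most $2\kappa+1$, so the per-query cost stays within $\Oh(\kappa\log^2 n) = \Oh(\sqrt{n}\log^2 n)$ in the worst case. Once the rebuild finishes we reset the counter and switch to the fresh copy, as prescribed by Lemma~\ref{lem:timeslice}. This gives exactly the bounds claimed in Theorem~\ref{thm:1sided}.
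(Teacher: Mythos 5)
Your proposal is exactly the paper's proof: the theorem is obtained by plugging Lemma~\ref{thm:ksubs} into the time-slicing framework of Lemma~\ref{lem:timeslice}, with Lemma~\ref{lem:ksub} maintaining the $k$-substring representation, and the balancing $\kappa=\Theta(\sqrt{n})$ giving the stated query time. The only caveat you correctly flag is the space bound, which comes out as $\cO(n\log^2 n)$ from Lemma~\ref{thm:ksubs} rather than the $\cO(n\log n)$ written in the theorem statement — a discrepancy present in the paper itself, not a gap in your argument.
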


\section{Fully Dynamic LCS}\label{sec:fully}

In this section we assume that the sought LCS has length at least $2$. The case that it is of unit or zero length can be easily treated separately.

We use the following auxiliary problem that generalizes of \textsc{LCS after One Substitution per String} into the case of $k$ substitutions:

{\defDSproblem{$(k_1,k_2)$\textsc{-Substring LCS}}{Two strings $S$ and $T$ of length at most $n$}
{Compute $\LCS(S',T')$ where $S'=F_1\ldots F_{k_1}$ is a $k_1$-substring of $S$, $T'=G_1\ldots G_{k_2}$ is a $k_2$-substring of $T$, and $k_1+k_2=k$}

As in the previous section, we consider three cases.
\begin{itemize}
\item An LCS does not contain any position (or boundary between positions) in $S$ or $T$ where an edit took place.\footnote{Considering boundaries between positions allows us to account for deletions.} As it was mentioned before, this problem probably cannot be solved efficiently in the language of $k$-substrings. Instead, we compute such an LCS via an inherently dynamic algorithm for the \textsc{Decremental LCS} problem. See~\cref{sec:fully.1}.
\item An LCS contains at least one position where an edit operation took place in exactly one of the strings. This corresponds to the $(k_1,k_2)$\textsc{-Substring LCS} problem when an LCS contains the boundary between some substrings of exactly one of $S'$ and $T'$. We compute such an LCS by combining the techniques of~\cref{sec:onesided,sec:1-1.2}. See~\cref{sec:fully.2}.
\item An LCS contains at least one position where an edit operation took place in both of the strings. This corresponds to the $(k_1,k_2)$\textsc{-Substring LCS} problem when an LCS contains the boundary between some substrings in both of $S'$ and $T'$. We compute such an LCS by building upon the techniques of~\cref{sec:1-1.3} and showing how to answer \LCE{} queries for $k$-substrings in time $\Ohtilde(1)$ after appropriate preprocessing. See~\cref{sec:fully.3}.
\end{itemize}

\subsection{Decremental LCS}\label{sec:fully.1}

In this subsection we only discuss substitutions for ease of presentation.
We use the following convenient formulation of the problem, where characters in $S$ (resp.~in $T$) are dynamically replaced by $\# \notin \Sigma$ (resp.~$\$ \notin \Sigma$), with $\# \neq \$$.

{\defDSproblem{\textsc{Decremental LCS}}{Two strings $S$ and $T$ of length at most $n$}{For a given $i$, set $S[i]:=\#$ or $T[i]:=\$$ ($\#,\$ \notin \Sigma$, $\# \neq \$$) and return $\LCS(S,T)$}

\begin{remark}
If we were to consider insertions and deletions as well, we could instead put artificial separators between positions of $S$ and $T$ and require the LCS not to contain any of the separators. I.e., an insertion at position $i$ would be handled by a separator between $S[i-1]$ and $S[i]$, and a deletion or substitution at position $i$ would be handled by separators between positions $S[i-1]$ and $S[i]$ and positions $S[i]$ and $S[i+1]$.
\end{remark}

We also use an auxiliary problem.

{\defDSproblem{\textsc{Decremental Bounded-Length LCS}}{Two strings $S$ and $T$ of length at most $n$ and positive integer $d$}{For a given $i$, set $S[i]:=\#$ or $T[i]:=\$$ ($\#,\$ \notin \Sigma$, $\# \neq \$$) and return the longest common substring of $S$ and $T$ of length at most $d$}

If $S[p]$ is replaced by a $\#$, then every substring $S[a \dd b]$ for $a \le p \le b$ can no longer be an occurrence of the LCS of $S$ and $T$. We call every such substring occurrence \emph{invalidated}. The same name is used for substring occurrences that are invalidated by character replacements in $T$. We say that a substring of $S$ (resp.~$T$) is $S$-invalidated (resp.~$T$-invalidated) if all of its occurrences in $S$ (resp.~$T$) are invalidated. We could solve the \textsc{Decremental Bounded-Length LCS} problem in $\Ohtilde(d^2)$ time per operation if we assigned a distinct integer identifier to each distinct substring of $S$ and $T$ of length $\ell \leq d$ and maintained counters on how many times each of these strings occurs in $S$ and $T$ and on how many substrings of length $\ell$ are common to both strings. Indeed, each substitution can invalidate $\Theta(d^2)$ fragments of the string it is applied to. However, this time complexity can be improved to $\Ohtilde(d)$. To this end, the identifier we give to each substring is the heavy path its locus lies on in the generalized suffix tree of $S$ and $T$. Then after each operation we have to update information corresponding to $\cO(d \log n)$ heavy paths. We show how to update the counters efficiently using interval trees. The following lemma is proved in Section~\ref{sec:internal}.
\begin{restatable}{lemma}{boundedLength}\label{lem:aux_approx}
\textsc{Decremental Bounded-Length LCS} can be solved in $\Ohtilde(d)$ time per operation after $\Ohtilde(n)$-time preprocessing, using $\Ohtilde(n+kd)$ space for $k$ performed operations. 
\end{restatable}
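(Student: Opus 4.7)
Build the generalized suffix tree $\TT$ of $S$ and $T$ and compute its heavy path decomposition. For each position $p$, let $\mathrm{val}_S(p)$ be the length of the longest $\#$-free prefix of $S_{(p)}$, capped at $d$; define $\mathrm{val}_T$ symmetrically. For every suffix-leaf $L_p$ of $\TT$, precompute the $O(\log n)$ heavy path segments met by the root-path of $L_p$, recording on each the depth range $[a_i,b_i]$. The ``identifier by heavy path'' viewpoint promised by the paper then becomes the correctness pillar: the current longest common substring of length $\le d$ has length equal to the largest $\ell\le d$ at which some heavy path $\pi$ hosts simultaneously an $S$-contribution and a $T$-contribution, where a leaf $L_p$ \emph{contributes} at depths $[a_i,\min(b_i,\mathrm{val}_S(p))]$ (symmetrically for $T$). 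Writing $c_S(\pi,\ell),c_T(\pi,\ell)$ for the number of currently active $S$- and $T$-contributions on $\pi$ at depth $\ell$, the answer is thus the largest $\ell\le d$ such that $c_S(\pi,\ell)\ge 1$ and $c_T(\pi,\ell)\ge 1$ for some $\pi$.

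On each heavy path $\pi$ I will maintain a sparsely allocated segment tree over $[\max(1,a_\pi),\min(b_\pi,d)]$ supporting lazy range $\pm 1$ updates to each of $c_S,c_T$, together with the aggregate ``deepest $\ell$ in range with $c_S(\ell),c_T(\ell)\ge 1$''. The per-path aggregates are gathered in a global heap keyed by their value so that the overall answer is retrievable in $\Ohtilde(1)$. Preprocessing builds $\TT$, the heavy path decomposition, and the per-leaf segment lists, then inserts the $O(n\log n)$ initial contributions into the segment trees, all within $\Ohtilde(n)$ time and space. For the substitution $S[p]:=\#$, locate via a predecessor structure the previous $\#$-position $q^*<p$; the only positions whose capped validity strictly drops are $q\in(\max(q^*,p-d),p]$, at most $d$ of them. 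For each such $q$, walk the $O(\log n)$ heavy path segments above $L_q$ and, on each, shrink the contribution interval from $[a_i,\min(b_i,\mathrm{val}_S(q)_{\rm old})]$ to $[a_i,\min(b_i,\mathrm{val}_S(q)_{\rm new})]$ by a single range $-1$ in the segment tree, then refresh the per-path maximum and the global heap. The total work is $O(d\log^2 n)=\Ohtilde(d)$, and the nodes touched by updates account for at most $\Ohtilde(d)$ newly allocated cells per substitution, consistent with the $\Ohtilde(n+kd)$ space bound. Substitutions in $T$ are symmetric, and deletions/insertions fit the same template via the sentinel remark in the text.

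\textbf{The main obstacle.} The delicate technical point is maintaining the aggregate ``rightmost $\ell$ in range with $c_S(\ell)\ge 1$ and $c_T(\ell)\ge 1$'' on the segment tree under lazy $\pm 1$ range updates to two \emph{independent} counters: a range $+1$ on $c_S$ can flip entries from $0$ to $1$ and thereby enable new agreements with $c_T$, so the statistic does not decompose over the two children in the usual way. My plan is to store at each segment-tree node the minima of $c_S$ and $c_T$ and the rightmost depths where each individual counter is $\ge 1$ in its range, and to answer the joint query by a tag-pushed root-to-leaf descent that prefers the right child whenever feasibility witnesses allow. A cleaner alternative, exploiting the fact that contributions only shrink with time, is to keep the $S$- and $T$-interval unions on each $\pi$ as balanced BSTs of maximal pieces and maintain their intersection explicitly; each structural change then costs $\Ohtilde(1)$, and the total number of such events is bounded by the total contribution length inserted during preprocessing plus the $\Ohtilde(kd)$ shrinks performed during updates.
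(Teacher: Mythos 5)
Your overall scheme is essentially the paper's: build the generalized suffix tree of $S$ and $T$ once, use the heavy path on which a substring's locus lies as its identifier, observe that a replacement at position $p$ only affects the $\Oh(d)$ starting positions $q\in(\max(q^*,p-d),p]$, decompose each affected root-to-leaf path into $\Oh(\log n)$ heavy-path fragments and issue one range update per fragment on a per-path counting structure, and extract the answer from a global structure (the paper uses a count array $A$ of distinct common substrings per length, read off in $\Ohtilde(d)$ time, rather than a heap of per-path maxima); the $\Ohtilde(d)$-per-operation and $\Ohtilde(n+kd)$-space accounting coincides with the paper's.

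The genuine gap is exactly at the point you flag, and neither of your two remedies is sound as stated. Storing per segment-tree node the minima of $c_S,c_T$ and the rightmost depth where each counter \emph{individually} is $\ge 1$ does not let a greedy descent certify that a subtree contains a depth where \emph{both} are $\ge 1$ (take two leaves with $c_S=(1,0)$ and $c_T=(0,1)$: each individual witness exists, yet no joint depth does), so the descent may backtrack and is not $\Ohtilde(1)$ per query. The alternative of keeping the unions of contribution intervals as balanced BSTs of maximal pieces is underspecified: when one interval shrinks you must decide which uncovered part is still covered by \emph{other} intervals, which is precisely the stabbing-count machinery you were trying to avoid, and the claimed bound on structural events is not established without it. The observation that dissolves the obstacle --- and is how the paper's proof proceeds --- is that $H(v,\mathcal{T})$ consists of \emph{prefix} fragments of heavy paths, so on a fixed heavy path $\pi$ every $S$- and every $T$-contribution is an interval starting at the top of $\pi$; consequently $c_S(\pi,\cdot)$ and $c_T(\pi,\cdot)$ are non-increasing in depth, the feasible depths on $\pi$ form a prefix, and the joint deepest depth is simply the minimum of two per-string thresholds. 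The paper maintains these thresholds as $I_S(\pi),I_T(\pi)$ via interval trees storing stabbing counts (comparing $\ccount$ with the static occurrence counts $C_S,C_T$ at nodes) and a binary search justified by its observation that a substring cannot be invalidated before its prefixes. With this monotonicity your plan goes through; note finally that the lemma also asks to \emph{return} the substring, and your heap yields only the heavy path and depth, so reporting a witness pair of valid occurrences needs the additional bookkeeping the paper supplies (the array $R$ and the per-path BSTs $E_S(\pi),E_T(\pi)$).
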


Let $S'$ and $T'$ be the strings $S$ and $T$ after $p$ substitutions; for some $p \le k$. For a position $i$, by $\nn_{S'}(i)$ we denote the smallest position $j \ge i$ such that $S'[j]=\#$. If no such position exists, we set $\nn_{S'}(i)=|S'|+1$. Similarly, by $\pp_{S'}(i)$ we denote the greatest position $j \le i$ such that $S'[j]=\#$, or $0$ if no such position exists. Similarly we define $\nn_{T'}(i)$ and $\pp_{T'}(i)$. Such values can be easily computed in $\Oh(\log n)$ time if the set of replaced positions is stored in a balanced BST. 

We say that a set $\S(d)\subseteq \mathbb{Z}_+$  is a $d$-\emph{cover} if
there is a constant-time computable function $h$ such that for $i,j\in \mathbb{Z}_+$ we have $0\le h(i,j)< d$ and $i+h(i,j), j+h(i,j)\in \S(d)$.

\begin{lemma}[\cite{DBLP:journals/tocs/Maekawa85,BurkhardtEtAl2003}]\label{fct:dcov}
For each $d\in \mathbb{Z}_+$ there is a $d$-cover $\S(d)$
such that $\S(d)\cap [1, n]$ is of size $\cO(\frac{n}{\sqrt{d}})$ and can be constructed in $\cO(\frac{n}{\sqrt{d}})$ time.
\end{lemma}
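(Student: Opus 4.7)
The plan is to reduce the construction of $\S(d)$ to the classical notion of a \emph{difference cover modulo $d$}: a subset $D \subseteq \{0,1,\ldots,d-1\}$ such that every residue $a \in \{0,1,\ldots,d-1\}$ admits a representation $a \equiv x - y \pmod{d}$ with $x,y \in D$. Given such a $D$ of size $\cO(\sqrt{d})$, I would define
\[
\S(d) \;=\; \{\,k \in \mathbb{Z}_+ : k \bmod d \in D\,\},
\]
so that $|\S(d) \cap [1,n]| \le \lceil n/d \rceil \cdot |D| = \cO(n/\sqrt{d})$. The set $\S(d) \cap [1,n]$ itself can be enumerated within the same time bound by iterating through each residue in $D$ and stepping through its arithmetic progression in $[1,n]$.

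For the cover property, given $i,j \in \mathbb{Z}_+$, I would set $a = (i-j) \bmod d$, pick witnesses $x,y \in D$ with $x - y \equiv a \pmod d$, and define $h(i,j) = (y - j) \bmod d$. A one-line check gives $j + h \equiv y$ and $i + h \equiv y + a \equiv x$ modulo $d$, so both $i+h$ and $j+h$ lie in $\S(d)$, while $h \in [0,d)$ by construction. Constant-time evaluation of $h$ reduces to constant-time access to a witness $y$ for each residue $a$, which I would realise by a precomputed witness table $W[0\dd d-1]$ built once from $D$.

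The main technical ingredient, which I would invoke as a black box, is an explicit construction of a difference cover $D$ of size $\cO(\sqrt{d})$ computable in time nearly linear in $|D|$. Classical Colbourn--Ling / Wichmann-style constructions achieve $|D| \le (1+o(1))\sqrt{1.5\,d}$, and Burkhardt and K\"arkk\"ainen~\cite{BurkhardtEtAl2003} give algorithmic versions well suited to string-algorithm use. Plugging in either yields the required size and enumeration bounds.

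The chief obstacle I anticipate is reconciling the $\cO(n/\sqrt{d})$ construction-time budget with constant-time lookup of $h$: naively, the witness table $W$ has size $\Theta(d)$, which fits the budget only when $d \le n/\sqrt{d}$, i.e., $d = \cO(n^{2/3})$. For larger $d$ one has to exploit the algebraic structure of the chosen construction (e.g., Singer-set or Wichmann-sequence based) to compute the witness for a given residue in $\cO(1)$ directly, without tabulating $W$ in full. Since every invocation of the lemma in the paper lies in the regime $d = \cO(n^{2/3})$, the straightforward tabulation route already suffices for all intended applications.
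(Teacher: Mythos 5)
Your construction is correct and is essentially the standard one from the works the paper cites for this lemma ([Maekawa; Burkhardt--K\"arkk\"ainen]): a difference cover modulo $d$ of size $\cO(\sqrt{d})$ lifted periodically to $\mathbb{Z}_+$, with $h(i,j)$ obtained from a witness for the residue $(i-j) \bmod d$ --- the paper gives no proof of its own, invoking the result as known. Your caveat about the $\Theta(d)$-size witness table is well taken but harmless here, since the paper only applies the lemma with $d=\lfloor n^{2/3}\rfloor$ (and the $\cO(n/\sqrt{d})$ bound in the statement concerns constructing $\S(d)\cap[1,n]$, with constant-time evaluation of $h$ understood to follow the cited preprocessing).
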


\begin{figure}[ht]
\begin{center}
\begin{tikzpicture}[scale=0.5]
\tikzstyle{s} = [draw, circle, fill=black, minimum size = 3pt, inner sep = 0 pt, color=black]
\draw (1,0)--(20,0);
 
\foreach \i in {1,...,20} {
  \draw (\i,-0.1)--+(0,0.2);
  \node at (\i, 0.2) [above] {\small \i};
}
\foreach \i in {2,3,5,8,9,11,14,15,17,20} {
  \node [s] at (\i,0) {};
}

\foreach \i/\l in {4/4, 11/4} {
  \draw [dotted] (\i,0)--+(0,-1.1);
  \draw [dotted] (\i+\l,0)--+(0,-1.1);
  \draw [-latex] (\i,-.8)--node[midway,below] {$h(4,11)=4$} +(\l,0);
}
\end{tikzpicture}
\end{center}

\caption{An example of a 6-cover $\S_{20}(6)=\{2,3,5, 8,9,11, 14,15,17, 20\}$, with the
elements marked as black circles. For example, we may have $h(4,11)=4$ since $4+4,\,11+4\in \S_{20}(6)$. }
\label{fig:diff_cover_example}
\end{figure}
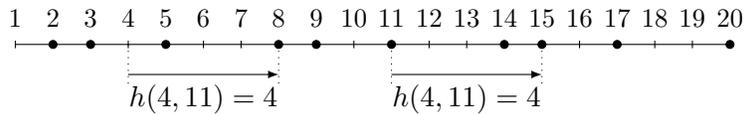

The intuition behind applying the $d$-cover in our string-processing setting is as follows (inspect also Fig.~\ref{fig:diff_cover_example}). Consider a position $i$ on $S$ and a position $j$ on $T$. Note that $i,j\in[1,n]$. By the $d$-cover construction, we have that $h(i, j)$ is within distance $d$ and $i + h(i, j), j + h(i, j) \in \S(d)$. Thus if we want to find a longest common substring of length {\em at least $d$}, it suffices to compute longest common extensions to the left and to the right {\em of only} positions $i', j' \in \S(d)$ (black circles in Fig.~\ref{fig:diff_cover_example}) and then merge these partial results accordingly.

For this we use the following auxiliary problem that was introduced in \cite{DBLP:conf/cpm/Charalampopoulos18}. 

\defproblem{\problem}{
A compact trie $\T(\F)$ of a family of strings $\F$
and two sets $\P,\Q\subseteq \F^2$}
 {The value $\maxPairLCP(\P,\Q)$, defined as\\
{$\maxPairLCP(\P,\Q)\!=\!\max\{\lcp(P_1,Q_1)+\lcp(P_2,Q_2) : (P_1,P_2)\in \P \text{ and }(Q_1,Q_2)\in \Q\}$}%
 }
 
An efficient solution to this problem was shown in~\cite{DBLP:conf/cpm/Charalampopoulos18} (and, implicitly, in \cite{DBLP:journals/tcs/CrochemoreIMS06,DBLP:journals/ipl/FlouriGKU15}).
\begin{lemma}[\cite{DBLP:conf/cpm/Charalampopoulos18}]\label{lem:problem}
\problem can be solved in $\cO(|\F|+N\log N)$ time, where $N=|\P|+|\Q|$.
\end{lemma}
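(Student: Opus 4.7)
The plan is to use a heavy-path decomposition (HPD) of $\T(\F)$ and reduce \problem to a collection of offline two-dimensional orthogonal range-maximum queries.

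First, I would compute the HPD of $\T(\F)$ in $\cO(|\F|)$ time. For every leaf $L$ of $\T(\F)$ and every heavy path $h$ intersected by the root-to-$L$ path, the \emph{exit depth} $a_L^h$ is the string-depth of the deepest ancestor of $L$ lying on $h$. A standard property of HPD then gives $\mathcal{D}(\mathrm{LCA}(L, L')) = \max_{h}\min(a_L^h, a_{L'}^h)$, where $h$ ranges over heavy paths on which both quantities are defined and the maximum is attained on the heavy path of $\mathrm{LCA}(L, L')$ (the diverging edge at the LCA blocks at most one of the two root-paths from continuing along the heavy path, pinning the minimum to the LCA depth). Plugging this identity into both $\lcp$ terms yields
\[
  \lcp(P_1, Q_1) + \lcp(P_2, Q_2) \;=\; \max_{h_1, h_2}\Bigl[\min\bigl(a_{P_1}^{h_1}, a_{Q_1}^{h_1}\bigr) + \min\bigl(a_{P_2}^{h_2}, a_{Q_2}^{h_2}\bigr)\Bigr].
\]

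Next, I would group pairs by heavy-path assignment: for each ordered pair of heavy paths $(h_1, h_2)$, form the set $A_{h_1, h_2}$ of two-dimensional points $(a_{P_1}^{h_1}, a_{P_2}^{h_2})$ taken over $(P_1, P_2) \in \P$ whose two components touch $h_1$ and $h_2$ respectively, and analogously $B_{h_1, h_2}$ from $\Q$. Per heavy-path pair, the sub-task is to compute $\max_{(x, y) \in A,\,(u, v) \in B}[\min(x, u) + \min(y, v)]$; splitting on the signs of $x - u$ and $y - v$ yields four cases, each of which is a two-dimensional dominance-max query between $A_{h_1, h_2}$ and $B_{h_1, h_2}$. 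Collecting all such queries and answering them offline via a sweep-line with a balanced search tree takes $\cO(M \log M)$ time, where $M$ is the total number of points generated across all heavy-path pairs.

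The main obstacle is tightening the bound on $M$. The naive assignment of each pair to every pair of heavy paths it touches gives $M = \cO(N \log^2 |\F|)$, which yields only an $\cO(N \log^3 |\F|)$ bound. To reach $\cO(|\F| + N \log N)$, I would avoid materializing all heavy-path-pair groups independently: instead, sweep the first-coordinate heavy paths along a DFS of $\T(\F)$, and for each one maintain a single structure indexed by second-coordinate heavy paths that is updated via a small-to-large / heavy-light charging argument analogous to that used in the heaviest-induced-ancestors technique invoked for Lemma~\ref{lem:3-substring-LCS-special}. This would amortize each pair to $\cO(\log |\F|)$ events, each answered in $\cO(\log N)$ offline time. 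The technical heart is showing that the $\min$-of-exit-depths structure of the objective composes correctly with this amortization so that no heavy-path pair is examined more than a constant number of times per element of $\P \cup \Q$.
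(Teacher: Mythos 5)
There is a genuine gap, and you have in fact flagged it yourself: the entire burden of reaching the stated bound is placed on the final amortization step, which is only asserted, not proved. In your reduction every pair $(P_1,P_2)\in\P$ (and likewise in $\Q$) generates one event for each pair $(h_1,h_2)$ of heavy paths crossed by the two root-to-locus paths, i.e.\ $\Theta(\log^2|\F|)$ events per element, so $M=\Theta(N\log^2|\F|)$ and the offline sweep gives $\cO(|\F|+N\log^3 N)$, not $\cO(|\F|+N\log N)$. The proposed fix is to amortize each pair to $\cO(\log|\F|)$ events at $\cO(\log N)$ each ``analogously to HIA''; but first, even if that amortization worked it would only yield $\cO(N\log^2 N)$, still short of the claimed bound, and second, no such amortization is known for this grouped heavy-path-pair formulation --- the closely related HIA structures used elsewhere in this paper cost $\cO(\log^2 n)$ per query even with superlinear space, which is symptomatic of the extra logarithmic factors that the heavy-path-pair route inherently picks up. Your identity $\mathcal{D}(\mathrm{LCA})=\max_h\min(a^h_L,a^h_{L'})$ and the case split into 2D dominance/range-max queries are fine, but they do not by themselves prove the lemma as stated.

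The proof in the cited works takes a different and more elementary route that avoids heavy paths and range searching altogether. One traverses $\T(\F)$ bottom-up and maintains, at the current node $u$ and for each of the two families separately, a height-balanced search tree containing the \emph{second} components of all pairs whose first component has its locus in the subtree of $u$, keyed by lexicographic rank (equivalently, by left-to-right position of the locus in $\T(\F)$). When the children's trees are merged at $u$, every element inserted from the smaller structure is charged a predecessor and a successor query in the other family's tree, and each such neighbour $Y$ yields the candidate value $\mathcal{D}(u)+\lcp(X,Y)$, the $\lcp$ being an LCA computation in $\T(\F)$. Correctness rests on the fact that, for a fixed string $X$, $\max_Y \lcp(X,Y)$ over a lexicographically ordered set is attained at the predecessor or successor of $X$; the pair realizing $\maxPairLCP(\P,\Q)$ is therefore discovered at the node $u$ where the loci of its first components are first united. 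Using balanced trees that can be merged in $\cO(m\log(n/m))$ time for sizes $m\le n$, the total merging and query cost telescopes to $\cO(N\log N)$, and the trie traversal contributes $\cO(|\F|)$, giving exactly the bound of the lemma. If you wish to keep your range-query viewpoint you would need a genuinely new idea to eliminate, not merely amortize, the per-pair $\log^2$ blow-up; as written, the proposal does not establish the claimed complexity.
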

\begin{lemma}\label{lem:exactdec}
\textsc{Decremental LCS} can be solved in $\Ohtilde(n^{2/3})$ time per query, using $\tilde{\cO}(n+kn^{2/3})$ space, after $\tilde{\cO}(n)$-time preprocessing for $k$ performed operations.
\end{lemma}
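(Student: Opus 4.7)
The plan is to combine two complementary subroutines and balance their costs. Fix a threshold $d$ to be tuned at the end. For the short case I would apply Lemma~\ref{lem:aux_approx} directly; after $\Ohtilde(n)$ preprocessing it handles every update in $\Ohtilde(d)$ time using $\Ohtilde(n+kd)$ space and returns a longest common substring of length at most $d$. The complementary subroutine must therefore find a longest common substring of length at least $d$, so that the final answer is the longer of the two candidates.

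For the long case I would exploit a $d$-cover $\S(d)$ of size $\cO(n/\sqrt d)$ from Lemma~\ref{fct:dcov}. The key observation is that if $S'[i\dd i+\ell-1]=T'[j\dd j+\ell-1]$ is a common substring of length $\ell\geq d$ (such a substring necessarily avoids every invalidated position since $\#\neq\$$ and only one of them appears in each string), then $p=i+h(i,j)$ and $q=j+h(i,j)$ lie in $\S(d)$, sit inside the two occurrences, and satisfy $q-j=p-i<\ell$. For every admissible anchor $p\in \S(d)\cap[1,|S|]$ with $S'[p]\neq\#$ I would form two strings: the forward $\#$-free extension $X_p=S[p\dd\nn_{S'}(p)-1]$ and the reversed backward extension $Y_p=(S[\pp_{S'}(p-1)+1\dd p-1])^R$. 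Admissible anchors on $T'$ analogously yield $X_q,Y_q$. The maximum of $\lcp(X_p,X_q)+\lcp(Y_p,Y_q)$ over admissible anchor pairs is then exactly the length of a longest common substring of length at least $d$, and computing it is a \problem instance with $\P=\{(X_p,Y_p)\}$, $\Q=\{(X_q,Y_q)\}$ and $N=|\P|+|\Q|=\cO(n/\sqrt d)$.

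The central technical task is to build the compact trie $\T(\F)$ that Lemma~\ref{lem:problem} consumes, where $\F$ contains the truncated extensions. I would preprocess once, in $\Ohtilde(n)$ time and space, the generalized suffix tree of $S,T,S^R,T^R$ together with LCA and weighted level-ancestor structures; each of $X_p,Y_p,X_q,Y_q$ is then an implicit node of this GST, locatable in $\cO(\log n)$ time by a weighted level ancestor query, with the values $\nn_{S'}$ and $\pp_{S'}$ pulled in $\cO(\log n)$ time from a balanced BST of invalidated positions maintained alongside. The standard virtual-tree construction (sort the $\cO(N)$ loci in Euler-tour order and insert LCAs of consecutive pairs) then assembles $\T(\F)$ in $\cO(N\log n)$ time. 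Invoking Lemma~\ref{lem:problem} finishes the long case in $\cO(|\F|+N\log N)=\Ohtilde(n/\sqrt d)$ additional time, after which the temporary trie is discarded.

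Putting everything together, the per-query time is $\Ohtilde(d+n/\sqrt d)$, minimised at $d=n^{2/3}$ to $\Ohtilde(n^{2/3})$; the space is dominated by Lemma~\ref{lem:aux_approx} and equals $\Ohtilde(n+kn^{2/3})$. Insertions and deletions reduce to the substitution-only version via the separator trick from the remark preceding the lemma. I expect the main obstacle to be the interplay between the truncation of $X_p,Y_p,X_q,Y_q$ at invalidated positions and the $\lcp$ values read off from $\T(\F)$: one must verify that placing each truncated string at the correct implicit node of the GST yields $\lcp(X_p,X_q)=\min\{\lcpstring(S_{(p)},T_{(q)}),|X_p|,|X_q|\}$, so that the \problem framework returns the sought maximum, and that the admissibility constraints on anchors are easy to enforce in $\P$ and $\Q$.
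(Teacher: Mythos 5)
Your proposal is correct and follows essentially the same route as the paper: the short case via Lemma~\ref{lem:aux_approx}, the long case via a $d$-cover whose anchors yield the same $\#$/$\$$-truncated forward and reversed backward extensions, reduction to \problem (Lemma~\ref{lem:problem}), and balancing at $d=n^{2/3}$. The only differences are implementation details (e.g., building $\T(\F)$ via loci in a GST and a virtual-tree construction rather than the paper's sort-and-insert construction), which do not change the argument or the bounds.
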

\begin{proof}
  Let us consider an integer $d \in [1,n]$. For lengths up to $d$, we use the algorithm for the \textsc{Decremental Bounded-Length LCS} problem of Lemma~\ref{lem:aux_approx}. If the \textsc{Decremental Bounded-Length LCS} problem indicates that there is a solution of length at least $d$, we proceed to the second step. Let $A=\S(d)\cap [1,n]$ be a $d$-cover of size $\cO(n/\sqrt{d})$ (see Lemma~\ref{fct:dcov}).
  
  We consider the following families of pairs of strings:
  
  \vspace*{-0.3cm}
  \begin{align*}
    \P &= \{\,(S[\pp_{S'}(i-1)+1 \dd i-1])^R,\,S[i \dd \nn_{S'}(i)-1])\,:\,i \in A\,\}\\
    \Q &= \{\,(T[\pp_{T'}(i-1)+1 \dd i-1])^R,\,T[i \dd \nn_{T'}(i)-1])\,:\,i \in A\,\}.
  \end{align*}
  We define $\F$ as the family of strings that occur in the pairs from $\P$ and $\Q$. Then $\maxPairLCP(\P,\Q)$ equals the length of the sought LCS, provided that it is at least $d$.
  
  Note that $|\P|,|\Q|,|\F|$ are $\cO(n/\sqrt{d})$. A compact trie $\T(\F)$ can be constructed in $\cO(|\F| \log |\F|)$ time by sorting all the strings (using $\lcp$-queries) and then a standard left-to-right construction; see~\cite{AlgorithmsOnStrings}. Thus we can use the solution to \problem which takes $\Ohtilde(n/\sqrt{d})$ time. We set $d=\lfloor n^{2/3} \rfloor$ to obtain the stated complexity.
\end{proof}

\subsection{One-Sided Cross-Substring Queries}\label{sec:fully.2}

We show a solution with $\Ohtilde(k^2)$-time queries after $\Ohtilde(n)$-time preprocessing by combining the techniques from Sections~\ref{sec:onesided} and~\ref{sec:1-1.2}. Assume that an LCS is to contain the boundary between two substrings $F_i$ and $F_{i+1}$ of $S'$; the opposite case is symmetric.
For all $i \in [1,k_1]$, we compute $\nnext(i)$, $P_i$ and $Q_i$ for $S'$ and $T$ in $\Oh(k_1 \log n \log \log n)$ time as in Section~\ref{sec:onesided}.
For each $i$, we wish to find the longest substring of $Q_iP_i$ that occurs in $T$, not containing any of the boundaries between $G_j,G_{j+1}$ for any $j$.
To this end, we perform  $k_2$ \textsc{Three Substrings LCS} queries, setting $U=Q_i$, $V=P_i$, and $W=G_j$ for each $j=1,\ldots , k_2$, if $|G_j| \ge 2$.
Recall that, by~\cref{lem:3-substring-LCS}, such queries can be answered in $\Ohtilde(1)$ time after $\Ohtilde(n)$-time preprocessing.

\subsection{Two-Sided Cross-Substring Queries}\label{sec:fully.3}

We show a solution with $\Oh(k^2 \log^3 n)$-time queries after $\Oh(n \log n)$-time preprocessing by combining the ideas presented in~\cref{sec:1-1.3} and an algorithm for efficient \LCE{} queries in a dynamic setting.
We consider each pair of boundaries between pairs $(F_i,F_{i+1})$ and $(G_j,G_{j+1})$, for $1 \leq i\leq k_1-1$ and $1 \leq j\leq k_2-1$.
We process the prefixes of $F_{i+1}$ that are suffixes of $G_j$ as in~\cref{sec:1-1.3} (the symmetric case is treated analogously).
The only difference is that we cannot answer $\LCE{}$ queries in $\cO(1)$ time in this setting. To this end we develop an $\Ohtilde(1)$ solution that relies on randomization below.

We next argue that we do not miss any possible LCS by only considering such prefix-suffix pairs of $F_{i+1}$ and $G_j$.
Let $f_i$ and $g_i$ be the starting positions of $F_{i}$ and $G_{j}$ in $S'$ and $T'$, respectively.
An LCS $S'[p \dd p+\ell-1]=T'[q \dd q+\ell-1]$ of this type will be reported when processing the pairs $(F_i,F_{i+1})$ and $(G_j,G_{j+1})$,
satisfying $p \leq f_{i+1} \leq p+\ell-1$, $q \leq g_{j+1} \leq q+\ell-1$, for which  $|(f_{i+1}-p)-(g_{j+1}-q)|$ is minimal.
Without loss of generality assume $f_{i+1}-p\leq g_{j+1}-q$.
Then, $S'[f_{i+1} \dd p+g_{j+1}-q-1]$ is a prefix of $F_{i+1}$ and $T'[q+f_{j+1}-p+1\dd g_{j+1}-1]$ is a suffix of $G_j$ and hence it is a prefix-suffix that will be processed by our algorithm; see \cref{fig:both_changed_3}.

   \begin{figure}[htpb]
    \begin{center}
      \begin{tikzpicture}[scale=0.5]


\begin{scope}[yshift=2.2cm]

  \draw[thick] (0,-0.4) rectangle (14,0.4);
  \draw[dashed] (3,-0.3) rectangle (13,0.3);
  \draw[fill=white!60!black,draw] (5,-0.2) rectangle (8,0.2);

  \draw (0,0) node[left=0.1cm] {$S'$};
  \draw (2,-0.3) node[below] {$f_{i}$};
  \draw (5.5,-0.3) node[below] {$f_{i+1}$};
    \draw (3,-0.3) node[below] {$p$};
    \draw (13,-0.3) node[below] {$p+\ell-1$};

\end{scope}


   \draw[thick] (1.5,-0.4) rectangle (16,0.4);
   \draw[dashed] (3,-0.3) rectangle (13,0.3);
   \draw[fill=white!60!black,draw] (5,-0.2) rectangle (8,0.2);
   
  \draw (1.5,0) node[left=0.1cm] {$T'$};
  \draw (2,-0.3) node[below] {$g_j$};
  \draw (8.7,-0.3) node[below] {$g_{j+1}$};
    \draw (10.4,-0.3) node[below] {$g_{j+2}$};
    \draw (3,-0.3) node[below] {$q$};
    \draw (13,-0.3) node[below] {$q+\ell-1$};

\end{tikzpicture}
    \end{center}
    \caption{
Occurrences of an LCS of $S'$ and $T'$ crossing the boundaries in both are denoted by dashed rectangles. The starting positions $f_{i+1}$ and $g_{j+1}$ minimize the formula $|(f_{i+1}-p)-(g_{j+1}-q)|$. Hence the gray rectangle, denoting $U$, is a prefix of $S'[f_{i+1} \dd f_{i+2}-1]$ and a suffix of $T'[g_{j} \dd g_{j+1}-1]$. We will thus process it as a border while processing the pairs $(F_i,F_{i+1})$ and $(G_j,G_{j+1})$ and hence find this LCS.}\label{fig:both_changed_3}
  \end{figure}
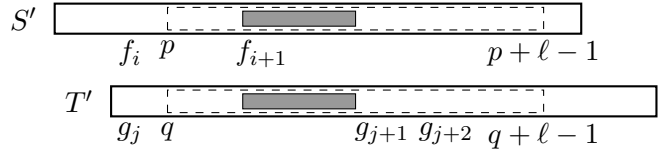
  
\LCE{} queries can be answered in time $\cO(k)$ by applying the kangaroo method.
We resort to the main result of Gawrychowski et al.~\cite{DBLP:conf/soda/GawrychowskiKKL18} (the algorithm is Las Vegas) in order to answer the queries faster. 

\begin{lemma}[Gawrychowski et al.~\cite{DBLP:conf/soda/GawrychowskiKKL18}]\label{lem:dynst}
A persistent collection $\mathcal{W}$ of strings of total length $n$ can be dynamically maintained under operations \textsf{makestring}$(W)$, \textsf{concat}$(W_1,W_2)$, \textsf{split}$(W,i)$, and \lcp$(W_1,W_2)$ with the operations requiring time $\cO(\log n +|w|)$, $\cO(\log n)$, $\cO(\log n)$ and $\cO(1)$, respectively.
\end{lemma}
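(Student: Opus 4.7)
The plan is to reprove this as a persistent variant of the signature-encoding / locally-consistent-parsing technique that has been the workhorse of all prior dynamic string collections. First I would fix a deterministic locally-consistent parsing (in the style of Mehlhorn--Sundar--Uhrig, Alstrup et al., or a Karp--Rabin signature tree), which assigns to every string $W$ a balanced tree of height $\cO(\log|W|)$ whose internal nodes carry integer signatures that are canonical: two nodes in any two strings in $\mathcal{W}$ with the same path-label must, except within an $\cO(1)$-wide boundary zone at each level, receive the same signature. The key property I would establish is that the parsing of $UV$ agrees with the parsings of $U$ and $V$ outside an $\cO(1)$-wide band around the join, and recursively so at each of the $\cO(\log n)$ levels. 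This guarantees that any \textsf{concat} or \textsf{split} only modifies $\cO(\log n)$ nodes of the tree.

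Next I would implement the four operations on top of this representation. For \textsf{makestring}$(W)$ I would parse $W$ from scratch bottom-up, computing each level by scanning the previous one and hashing constant-size windows, giving $\cO(\log n + |W|)$ time. For \textsf{concat}$(W_1,W_2)$ I would glue the two trees along their right and left spines and re-parse the boundary band level by level; because each level contributes $\cO(1)$ work and at most $\cO(\log n)$ new or modified nodes, the total cost is $\cO(\log n)$. \textsf{split}$(W,i)$ is symmetric: cut along the spine through position $i$ and fix up boundary blocks. Persistence is obtained by path copying — only the $\cO(\log n)$ nodes on the affected spines are duplicated, so all previous versions remain intact.

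For \textsf{lcp}$(W_1,W_2)$ in $\cO(1)$ I would exploit canonicalization: after descending from the two roots into the highest nodes whose signatures differ, the position of the first mismatch is determined purely by signature-level data that I precompute at construction. Concretely, I would store each internal node together with the length of the longest common prefix between its subtree and any reference string; by keying the signature of the top common node into a precomputed table, the \textsf{lcp} reduces to an $\cO(1)$ table lookup (with persistent hashing to deal with dynamic insertions of new signatures).

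The main obstacle will be arranging local consistency so that it survives both \textsf{concat} and \textsf{split} while simultaneously supporting $\cO(1)$ \textsf{lcp}; the standard constructions give $\cO(\log n)$ or $\cO(\log^* n)$ query time, and pushing this down to $\cO(1)$ requires the extra canonicalization step together with a careful argument that the boundary-zone mismatches at different levels cannot cascade. I expect that this is precisely where the Gawrychowski et al.\ construction invests its technical effort, and where any reproof would need to be most careful.
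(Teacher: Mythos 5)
This lemma is not proved in the paper at all: it is quoted verbatim, as a black box, from Gawrychowski et al.\ [SODA 2018], and the paper even flags (just before invoking it) that the underlying algorithm is Las Vegas randomized. So what you are attempting is a reproof of the main result of an entire separate paper, and your sketch defers exactly the parts that make that result hard.

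Two concrete gaps. First, your ``key property'' --- that a \emph{deterministic, canonical} locally-consistent parsing of $UV$ agrees with the parsings of $U$ and $V$ except in an $\cO(1)$-wide band per level --- is not something you can simply ``establish'': the known deterministic schemes (Mehlhorn--Sundar--Uhrig signatures, Alstrup et al.) have boundary zones of width $\Theta(\log^* n)$ per level, which is precisely why those structures pay $\cO(\log n\log^* n)$ or worse per update, and why Gawrychowski et al.\ switch to a randomized recompression-style parsing (random priorities for pairing) to get expected $\cO(1)$ changes per level and hence $\cO(\log n)$ updates. Your proposal assumes the stronger property without any mechanism for it, while simultaneously aiming at determinism, which contradicts the state of the art. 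Second, the $\cO(1)$ \lcp{} step does not work as described: the position of the first mismatch between $W_1$ and $W_2$ is determined by the pair of parse trees below the point where they diverge, not by ``the signature of the top common node,'' and since signatures are minted dynamically as new strings are created, no table precomputed ``at construction'' can be keyed by them. Achieving $\cO(1)$-time \lcp{} (rather than the $\cO(\log n)$ one gets by descending the two parse trees or binary-searching with fingerprints) is the technical core of Gawrychowski et al., and your own closing paragraph concedes that this is left open. As it stands, the proposal is a plausible survey of the technique family but not a proof of the stated bounds; within this paper the correct justification of the lemma is simply the citation.
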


We show the following lemma.

\begin{lemma}\label{lem:lce_rand}
A string $S$ of length $n$ can be preprocessed in $\cO(n)$ time and space so that $m=\cO(n)$ \LCE{} queries over a $k$-substring of $S$, $k=\cO(n)$, can be performed in $\cO(k\log n +m \log n)$ time, using this much extra space.
\end{lemma}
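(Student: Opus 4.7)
}

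The plan is to use the persistent dynamic string collection of Gawrychowski et al.\ (Lemma~\ref{lem:dynst}) as a black box: during preprocessing we insert the static string $S$ once, and during query time we lazily build a representation of the current $k$-substring $S'$ inside the collection, after which each $\LCE{}$ query reduces to two \textsf{split}s and one \lcp{} call.

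First I would perform \textsf{makestring}$(S)$ during preprocessing. Since this is the only operation on a string of length $n$, it costs $\cO(n)$ time and space, matching the preprocessing bound. Because the collection is persistent, $S$ remains intact and available as a source from which arbitrary substrings can later be carved out without destroying previous views.

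Next, given the $k$-substring $S' = F_1 \dots F_k$, I would materialize $S'$ inside the collection as follows. For each $F_i$ that equals a substring $S[a_i \dd b_i]$, I would invoke \textsf{split} on $S$ (persistently) at positions $a_i-1$ and $b_i$ to isolate $F_i$ as an element of the collection; for each $F_i$ that is a single character, I would call \textsf{makestring} on that character (a constant-length string, so $\cO(\log n)$ time). Finally, $k-1$ calls to \textsf{concat} assemble $S'=F_1\dots F_k$. Each of these $\cO(k)$ operations costs $\cO(\log n)$ time and $\cO(\log n)$ extra space, for a total cost of $\cO(k\log n)$.

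To answer an $\LCE{}$ query asking for $\lcp(S'_{(i)}, S'_{(j)})$, I would take the persistent $S'$, split it at position $i-1$ to obtain the suffix $S'[i\dd|S'|]$, split (the original persistent) $S'$ at position $j-1$ to obtain $S'[j\dd|S'|]$, and then call \lcp{} on the two resulting suffixes. The two splits cost $\cO(\log n)$ each and the \lcp{} call is $\cO(1)$, so each query is $\cO(\log n)$ time and space. Summing over $m$ queries gives $\cO(m\log n)$, and adding the construction cost yields the claimed $\cO(k\log n+m\log n)$ bound in both time and extra space. There is no real obstacle here beyond verifying that persistence prevents splits performed during one query from interfering with subsequent queries, which is exactly the guarantee provided by Lemma~\ref{lem:dynst}; correctness and complexity then follow directly from its stated per-operation bounds.
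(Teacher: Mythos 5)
Your proposal is correct and follows essentially the same route as the paper: one \textsf{makestring}$(S)$ at preprocessing, $\cO(k)$ \textsf{split}/\textsf{makestring}/\textsf{concat} operations to assemble the $k$-substring, and two \textsf{split}s plus one \lcp{} per \LCE{} query, all charged via Lemma~\ref{lem:dynst}. The only detail the paper adds that you skip is the observation that, since the collection is persistent and each operation adds $\cO(n)$ to the cumulative length of stored strings, the total length stays $\cO(n^2)$ over the $\cO(n)$ operations, so the per-operation $\log$ factor remains $\cO(\log n)$.
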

\begin{proof}
Our preprocessing stage consists of a single \textsf{makestring} operation, requiring time $\cO(n+\log n)=\cO(n)$.
We can construct the $k$-substring in $\cO(k \log n)$ time, using $\cO(k)$ \textsf{split}, \textsf{makestring}$(\alpha)$, $\alpha \in \Sigma$ and \textsf{concat} operations.
An \LCE{} query can be answered by two \textsf{split} operations of this $k$-substring and a single \lcp{} operation.
The length of each of the strings on which \textsf{split} and \textsf{concat} operations are performed is $\cO(n)$ and hence the cumulative length of the elements of the collection increases by $\cO(n)$ with each performed operation.
Since we perform $\cO(n)$ operations in total, the cumulative length is $\cO(n^2)$ and the time complexities of the statement follow by \cref{lem:dynst}.
\end{proof}

\begin{remark}
A lemma similar to the above, based on a simple application of Karp-Rabin fingerprints~\cite{DBLP:journals/ibmrd/KarpR87}, and avoiding the use of the heavy machinery underlying~\cref{lem:dynst}, can be proved. We leave this for the full version of this paper.
\end{remark}

We assume that $k=\cO(\sqrt{n})$, which will prove sufficient for our main result. 
We apply~\cref{lem:lce_rand} for the relevant $\Oh(k)$-substring of $S\#T$. We consider $k_1 \times k_2 = \cO(k^2)$ pairs $(F_i,F_{i+1}),(G_j,G_{j+1})$ and, by the analysis of~\cref{sec:1-1.3}, the time required for processing each of them (that is, finding the longest prefix-suffix and then considering all its borders in $\Oh(\log n)$ batches) is bounded by the time required to answer $\cO(\log n)$ $\LCE$ queries, which can be answered in time $\cO(k^2\log n)$ by the above lemma. Hence the total time required is $\cO(k^2 \log^3 n)$ after $\cO(n)$-time preprocessing.

\subsection{Main Result}\label{sec:main}

By combining the results of~\cref{sec:fully.1,sec:fully.2,sec:fully.3} and~\cref{lem:timeslice} we obtain our main result.

\begin{theorem}\label{thm:main}
Fully dynamic LCS queries on strings of length up to $n$ can be answered in $\tilde{\cO}(n^{2/3})$ time per query, using $\tilde{\cO}(n)$ space, after $\tilde{\cO}(n)$-time preprocessing.
\end{theorem}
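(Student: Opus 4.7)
The plan is to combine the three case-analyses developed in Sections~\ref{sec:fully.1}, \ref{sec:fully.2}, and \ref{sec:fully.3} via the time-slicing machinery of \cref{lem:timeslice}. First, I would represent the current state of each of $S$ and $T$ as a $k$-substring of the original strings, using \cref{lem:ksub} to update the representation after every edit in $\Oh(k)$ time. At preprocessing time, I would build, in $\Ohtilde(n)$ time and space: the data structures underlying \cref{lem:prefix-suffix-LCS,lem:3-substring-LCS,lem:substring-string-LCS,lem:merge_ranges,lem:borders}; the decremental data structure of \cref{lem:exactdec} (instantiated on the current $S,T$); and the persistent collection from \cref{lem:lce_rand} for $\LCE{}$ queries on the current $k$-substring.

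To answer a query after $p \le \kappa$ edits, I would take the maximum length returned by the following three sub-queries, corresponding to an exhaustive case analysis of where an occurrence of the LCS can be relative to edited positions:
\begin{enumerate}
  \item \emph{No edited position lies in either occurrence of the LCS.} Here I simply return the current answer of the Decremental LCS data structure of \cref{lem:exactdec}, whose per-edit cost is $\Ohtilde(n^{2/3})$ regardless of $p$. Insertions and deletions are accommodated by the separator trick from the remark in \cref{sec:fully.1}.
  \item \emph{An edited position is contained in only one of the occurrences.} Apply the one-sided cross-substring procedure of \cref{sec:fully.2}, costing $\Ohtilde(p^2)$.
  \item \emph{Edited positions appear in both occurrences.} Apply the two-sided cross-substring procedure of \cref{sec:fully.3}, also costing $\Ohtilde(p^2)$ using the dynamic $\LCE{}$ structure of \cref{lem:lce_rand}.
\end{enumerate}
Thus the $k$-substring problem on $S,T$ can be answered in $q_n(k)=\Ohtilde(k^2)$ time after $t_n=\Ohtilde(n)$-time preprocessing using $s_n=\Ohtilde(n)$ space.

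Now I would plug this into \cref{lem:timeslice}. Solving $q_n(\kappa) = \kappa^2 = (t_n+n)/\kappa = \Ohtilde(n)/\kappa$ yields $\kappa=\Theta(n^{1/3})$ and hence a worst-case query time of $\Ohtilde(n^{2/3})$, exactly matching the per-edit cost of the Decremental LCS in Case~1. Every $\kappa$ edits, the algorithm rebuilds from scratch (over an amortized window), so the Decremental LCS is always run against at most $\kappa=n^{1/3}$ deletions, keeping its space at $\Ohtilde(n+\kappa\cdot n^{2/3})=\Ohtilde(n)$ and preserving the space bound.

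The main obstacle I anticipate is making the three cases coexist cleanly under time-slicing: Case~1 is inherently stateful (it must see every edit in sequence to maintain its counters and the set of invalidated positions), whereas Cases~2 and~3 are computed from the $k$-substring representation at query time. The trick is that \cref{lem:timeslice} already forces a rebuild every $\kappa$ steps, so the Decremental LCS is simply re-initialised on the current $S,T$ each time the time-slicing copy is swapped in; between swaps it absorbs exactly the $\le 2\kappa$ edits whose effect on Case~1 it must track. A secondary technical point is verifying that insertions/deletions really can be simulated by the substitution-based Decremental LCS via the separator construction without blowing up $n$ by more than a constant factor per edit, which is bounded because time-slicing caps the accumulated edits at $\Oh(\kappa)=\Oh(n^{1/3})$ before a fresh rebuild.
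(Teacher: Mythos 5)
Your proposal is correct and follows essentially the same route as the paper: it combines the \textsc{Decremental LCS} algorithm of \cref{lem:exactdec} for occurrences avoiding all edited positions with the one-sided and two-sided cross-substring procedures of \cref{sec:fully.2,sec:fully.3} (costing $\Ohtilde(k^2)$), and then applies the time-slicing deamortization of \cref{lem:timeslice} with $\kappa=\Theta(n^{1/3})$ to get $\Ohtilde(n^{2/3})$ worst-case query time within $\Ohtilde(n)$ space. Your remarks on re-initialising the stateful decremental structure at each rebuild and on bounding its space by $\Ohtilde(n+\kappa n^{2/3})=\Ohtilde(n)$ match the paper's (terser) treatment.
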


\section{Technical Data Structures of Fully Dynamic LCS}\label{sec:internal}
Below we fill in the missing details from the dynamic LCS algorithm: simple cases of internal LCS queries in Section~\ref{sec:int1}, \textsc{Three Substrings LCS} in Section~\ref{sec:int2}, and \textsc{Decremental Bounded-Length LCS} in Section~\ref{sec:decr}. All of those data structures heavily rely on auxiliary data structures related to the suffix tree, that we recall in Section~\ref{sec:ads}. Moreover, internal LCS queries require orthogonal range maximum queries that we recall below.

Let $\mathcal{P}$ be a collection of $n$ points in a $D$-dimensional grid with integer weights and coordinates of magnitude $\cO(n)$. In a $D$-dimensional orthogonal range maximum query $\RMQ_\mathcal{P}([a_1,b_1] \times \cdots \times [a_D,b_D])$, given a hyper-rectangle $[a_1,b_1] \times \cdots \times [a_D,b_D]$, we are to report the maximum weight of a point from $\mathcal{P}$ in the rectangle. We assume that the point that attains this maximum is also computed.
The following result is known.

\begin{lemma}[\cite{DBLP:reference/cg/Agarwal04,DBLP:conf/focs/AlstrupBR00,DBLP:journals/cacm/Bentley80}]\label{lem:Ddrmq}
Orthogonal range maximum queries over a set of $n$ weighted points in $D$ dimensions, where $D = \Oh(1)$, can be answered in $\tilde{\cO}(1)$ time with a data structure of size $\tilde{\cO}(n)$ that can be constructed in $\tilde{\cO}(n)$ time. In particular, for $D=2$ one can achieve $\Oh(\log n)$ query time, $\Oh(n \log n)$ space, and $\Oh(n \log^2 n)$ construction time.
\end{lemma}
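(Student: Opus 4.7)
The plan is to prove both parts of the statement by the standard range-tree paradigm, reducing a $D$-dimensional range max query to $\Oh(\log n)$ queries of dimension $D-1$ and bottoming out at a one-dimensional structure.

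For the base case $D=1$, I would sort the $n$ weighted points by coordinate and build a segment tree whose internal nodes store the maximum weight among the points in their subtree. Standard canonical-interval decomposition answers a range max query in $\Oh(\log n)$ time with $\Oh(n)$ space and $\Oh(n)$ construction time. If $\Oh(1)$-time 1D queries are needed at the base (useful for higher-dimensional recursion), I would instead use a sparse table on the sorted weights, at a cost of $\Oh(n \log n)$ space and construction time.

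For the 2D case I would build a balanced range tree on the first coordinate: each internal node $v$ is associated with the set $P_v$ of points in its subtree, and to $v$ I attach a one-dimensional range-max secondary structure on the second coordinate of $P_v$. A query $[a_1,b_1]\times[a_2,b_2]$ decomposes $[a_1,b_1]$ into $\Oh(\log n)$ canonical subtrees of the outer tree, and over each I invoke a 1D range max on $[a_2,b_2]$, returning the overall maximum. With segment trees as secondary structures the total space is $\Oh(n\log n)$ (each point lies in $\Oh(\log n)$ sets $P_v$) and construction takes $\Oh(n\log^2 n)$; query time is $\Oh(\log^2 n)$ naively. For general $D=\Oh(1)$, iterating the construction adds a $\log n$ factor per dimension to space, construction time, and query time, all of which are absorbed by $\tilde{\Oh}(\cdot)$, proving the first part.

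The main obstacle is shaving the extra $\log n$ from 2D queries to get $\Oh(\log n)$ without inflating space past $\Oh(n\log n)$. This is delicate because classical fractional cascading is tailored to searching, not to aggregation like max. I would adapt fractional cascading so that each canonical secondary structure can be probed in $\Oh(1)$ amortized time once the initial $y$-range endpoints have been located at the root, or, equivalently, appeal to Chazelle-style priority search trees which give $\Oh(\log n)$ range-max queries directly. Either way, construction remains $\Oh(n\log^2 n)$ and space $\Oh(n\log n)$, establishing the second part. For this delicate step I would cite the references listed in the lemma rather than rederive the cascading construction from scratch.
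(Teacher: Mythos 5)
The paper does not actually prove this lemma --- it is quoted directly from the cited references --- and your reconstruction via range trees, with a layered/fractional-cascading refinement in the 2D case, is precisely the standard argument underlying those citations, so your proposal is sound and takes essentially the same route. One detail worth making explicit: to keep the 2D space at $\Oh(n\log n)$, the per-node secondary structures should be linear-space, $\Oh(1)$-query RMQ structures over the $y$-sorted lists (sparse tables per node would cost $\Oh(n\log^2 n)$ space), and note that plain priority search trees handle only three-sided queries, so they are not by themselves an ``equivalent'' substitute for the cascading step.
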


\subsection{Auxiliary Data Structures Over the Suffix Tree}\label{sec:ads}
We say that $\mathcal{T}$ is a \emph{weighted tree} if it is a rooted tree with integer weights on nodes, denoted by $\mathcal{D}(v)$, such that the weight of the root is zero and $\mathcal{D}(u) < \mathcal{D}(v)$ if $u$ is the parent of $v$. We say that a node $v$ is a \emph{weighted ancestor} of a node $u$ at depth $\ell$ if $v$ is the highest ancestor of $u$ with weight of at least $\ell$.

\begin{lemma}[\cite{DBLP:journals/talg/AmirLLS07}]\label{lem:LAQ}
After $\cO(n)$-time preprocessing, weighted ancestor queries for nodes of a weighted tree $\mathcal{T}$ of size $n$ can be answered in $\cO(\log \log n)$ time per query.
\end{lemma}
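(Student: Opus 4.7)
The plan is to combine a heavy-path decomposition of $\mathcal{T}$ with a predecessor data structure that supports $O(\log \log n)$ queries over a polynomially bounded universe, such as a van Emde Boas tree or a y-fast trie. Since the weights are integers and the tree has $n$ nodes, weights can be assumed to lie in a universe of size $n^{O(1)}$, which is exactly the regime where such structures apply.

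First, I would compute a heavy-path decomposition in $O(n)$ time; this partitions the nodes of $\mathcal{T}$ into heavy paths $P_1,\ldots,P_t$ such that any root-to-leaf path intersects $O(\log n)$ of them. Because the definition of a weighted tree forces $\mathcal{D}$ to be strictly increasing from parent to child, the weights along each heavy path form a strictly increasing sequence. Thus, restricted to a single heavy path $P$, the weighted-ancestor question ``what is the shallowest node on $P$ with weight at least $\ell$?'' is a single predecessor-type query on sorted integers. Storing, for each $P$, a y-fast trie on its weights gives $O(\log \log n)$ queries, $O(|P|)$ space and construction, hence $O(n)$ in total.

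It then suffices, given a query $(u,\ell)$, to identify the unique heavy path $P^*$ containing the desired weighted ancestor and then to run the above predecessor query on $P^*$. The main obstacle is that the naive bottom-up walk through the $O(\log n)$ heavy paths crossed by the root-to-$u$ path would yield $O(\log n \cdot \log \log n)$ time. To get the stated bound, I would exploit the structural fact that the weights of the top nodes of the heavy paths crossed by the root-to-$u$ path are strictly decreasing, so $P^*$ is pinpointed by a single predecessor-style comparison against $\ell$. Supporting this globally, rather than duplicating a structure per node $u$, is the delicate component: I would encode the ancestors of each $u$ via the Euler tour of $\mathcal{T}$ (so that ancestors correspond to a canonical range of tour positions) and build an $O(n)$-space y-fast trie indexed by heavy-path-top information, so that $P^*$ is located in $O(\log \log n)$ time by a single such query.

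Putting the pieces together, each query performs $O(1)$ predecessor operations in polynomial-universe structures, each costing $O(\log \log n)$, matching the claimed bound. Correctness follows from the strict monotonicity of $\mathcal{D}$ along any root-to-leaf path, which guarantees that the two predecessor queries return the correct heavy path and the correct node on it.
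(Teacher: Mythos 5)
The paper itself does not prove this statement: it is imported as a black box from Amir, Landau, Lewenstein, and Sokol \cite{DBLP:journals/talg/AmirLLS07}, so your attempt has to be judged against the known constructions for weighted ancestor queries. The first half of your plan is indeed the standard first half of all of them: weights are strictly increasing along root-to-leaf paths, so within a single heavy path a weighted ancestor query is a predecessor/successor search, and per-path y-fast tries give $\cO(\log\log n)$ time with linear total space (modulo the minor caveat that y-fast tries use hashing, so the construction bound is expected or needs a deterministic dictionary; the paper already tolerates perfect hashing, so this is forgivable).

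The genuine gap is in the step you yourself flag as delicate: locating, in $\cO(\log\log n)$ time and with only $\cO(n)$ preprocessing, the heavy path $P^*$ on the root-to-$u$ path that contains the answer. The monotonicity of the weights of the heavy-path tops above $u$ is true, but the set of heavy paths crossed by the root-to-$u$ path depends on $u$, so ``a single predecessor-style comparison against $\ell$'' presupposes a predecessor structure over a $u$-dependent set; building one per node costs $\Theta(n\log n)$ space and time, which exceeds the claimed $\cO(n)$ preprocessing. Your proposed global fix rests on the claim that the ancestors of $u$ correspond to a canonical contiguous range of Euler-tour positions, and this is false: it is the \emph{descendants} of a node that form a contiguous DFS interval, whereas the ancestors of $u$ are exactly the nodes $v$ with $\mathrm{in}(v)\le \mathrm{in}(u)\le \mathrm{out}(v)$, a stabbing set that is not an interval. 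Consequently the single $\cO(n)$-space y-fast trie ``indexed by heavy-path-top information'' is not substantiated, and what remains of your argument only yields the easy $\cO(\log n\cdot\log\log n)$ bound obtained by walking the $\cO(\log n)$ crossed heavy paths (even a binary search over them needs $\cO(1)$-time access to the $i$-th crossed heavy path above $u$, which again requires extra machinery or superlinear space). Bridging exactly this step is the technical content of the cited work and of the other known solutions (e.g.\ Farach-Colton--Muthukrishnan, Kopelowitz--Lewenstein), so as written the proposal does not constitute a proof of the lemma.
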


The following corollary applies Lemma~\ref{lem:LAQ} to the suffix tree.

\begin{corollary}\label{cor:LAQ}
The locus of any substring $S[i \dd j]$ in $\mathcal{T}(S)$ and can be computed in $\cO(\log \log n)$ time after $\cO(n)$-time preprocessing.
\end{corollary}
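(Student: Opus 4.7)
The plan is to reduce the locus query directly to a weighted ancestor query on the suffix tree. First I would treat $\mathcal{T}(S)$ as a weighted tree in the sense required by Lemma~\ref{lem:LAQ}: the weight of every explicit node $v$ is its string-depth $\mathcal{D}(v)$. This is a valid weighted tree because $\mathcal{D}(\text{root}) = 0$ (the root represents the empty string) and $\mathcal{D}$ is strictly increasing along any root-to-leaf path, as noted in the preliminaries.

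The preprocessing consists of three steps. Build $\mathcal{T}(S)$ in $\cO(n)$ time using Farach's algorithm~\cite{F97}; then compute $\mathcal{D}(v)$ for every explicit node by one depth-first traversal, also in $\cO(n)$ time; finally, apply the weighted ancestor preprocessing of Lemma~\ref{lem:LAQ} to this weighted tree, again in $\cO(n)$ time. I would also store, for every starting position $i \in \{1,\ldots,n\}$, a direct pointer to the leaf $\ell_i$ whose path-label is $S_{(i)}$; this is produced as a by-product of the construction.

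For a query with indices $i \le j$, observe that $S[i \dd j]$ is a prefix of the suffix $S_{(i)}$, hence its locus lies on the root-to-$\ell_i$ path. Setting $d = j - i + 1$, the locus is the (explicit or implicit) node at string-depth exactly $d$ on that path; equivalently, if $v$ is the highest ancestor of $\ell_i$ with $\mathcal{D}(v) \ge d$ and $v'$ is its parent, then the locus sits on the edge $(v', v)$ at position $d - \mathcal{D}(v')$ from $v'$. I would retrieve $\ell_i$ in $\cO(1)$, invoke the weighted ancestor query on $\ell_i$ at depth $d$ in $\cO(\log\log n)$ time by Lemma~\ref{lem:LAQ}, and then read off the edge and offset in $\cO(1)$. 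The degenerate cases ($j < i$, or $d = 0$) are handled by returning the root.

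There is no real obstacle here: the only thing to verify is that the suffix tree, when endowed with string-depths as node weights, satisfies the hypotheses of Lemma~\ref{lem:LAQ}, which it does by the monotonicity of string-depth along root-to-leaf paths. All other steps are standard suffix-tree bookkeeping. Thus the overall bounds $\cO(n)$ preprocessing and $\cO(\log\log n)$ per query follow immediately.
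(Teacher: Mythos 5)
Your proposal is correct and is exactly the argument the paper intends: the corollary is stated as a direct application of Lemma~\ref{lem:LAQ} to the suffix tree weighted by string-depth, with the locus recovered as the edge above the weighted ancestor at depth $j-i+1$ together with the offset $d-\mathcal{D}(v')$, just as you describe. No gaps; the handling of implicit loci and the degenerate cases is fine.
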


If $\mathcal{T}$ is a rooted tree, for each non-leaf node $u$ of $\mathcal{T}$ the {\em heavy edge} $(u,v)$ is an edge for which the subtree rooted at $v$ has the maximal number of leaves (in case of several such subtrees, we fix one of them). A {\em heavy path} is a maximal path of heavy edges. The path from a node $v$ in $\mathcal{T}$ to the root is composed of \emph{prefix fragments} of heavy paths interleaved by single non-heavy (compact) edges. Here a prefix fragment of a path $\pi$ is a path connecting the topmost node of $\pi$ with any of its nodes. We denote this decomposition by $H(v,\mathcal{T})$. The following observation is known; see \cite{Sleator:1983:DSD:61337.61338}.

\begin{lemma}[\cite{Sleator:1983:DSD:61337.61338}]\label{obs:hp}
  For a rooted tree $\mathcal{T}$ of size $n$ and a node $v$, $H(v,\mathcal{T})$ has size $\Oh(\log n)$ and can be computed in $\Oh(\log n)$ time after $\Oh(n)$-time preprocessing.
\end{lemma}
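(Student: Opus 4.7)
}

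The plan is to carry out the standard heavy-path preprocessing on $\mathcal{T}$ so that, from any node $v$, the decomposition $H(v,\mathcal{T})$ can be reconstructed by repeatedly jumping to the top of the current heavy path and then taking a single light edge upward. For the preprocessing, I will first run a depth-first traversal of $\mathcal{T}$ in $\Oh(n)$ time to compute the number of leaves in the subtree of every node. Using these numbers, for each non-leaf $u$ I mark one outgoing edge as heavy, breaking ties arbitrarily; the maximal runs of heavy edges give the heavy paths, and I explicitly store, for every node $v$, a pointer $\mathrm{head}(v)$ to the topmost node of the heavy path containing $v$ (which is $v$ itself when $v$ is at the top, and in particular for the root). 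Propagating these head pointers during the DFS, in preorder, takes $\Oh(n)$ total time.

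To answer a query for a given node $v$, I will produce $H(v,\mathcal{T})$ by the following loop: starting with the current node $u \coloneqq v$, report the prefix fragment of the heavy path of $u$ that runs from $\mathrm{head}(u)$ down to $u$ (this requires only the pair $(\mathrm{head}(u), u)$, accessible in $\Oh(1)$), then, if $\mathrm{head}(u)$ is not the root, report the single (necessarily light) compact edge from the parent of $\mathrm{head}(u)$ to $\mathrm{head}(u)$ and set $u \coloneqq \mathrm{parent}(\mathrm{head}(u))$; otherwise stop. Each iteration is $\Oh(1)$, so the query time is proportional to the number of iterations, i.e.\ to the number of light edges encountered on the path from $v$ to the root.

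The main (and only real) obstacle is establishing that this count is $\Oh(\log n)$. For this I will use the classical doubling argument: if $(u,p)$ is a light edge with $p = \mathrm{parent}(u)$, then by definition some child $h$ of $p$ has at least as many leaves in its subtree as $u$ does, so the subtree rooted at $p$ contains at least twice as many leaves as the subtree rooted at $u$. Therefore the number of leaves in the subtree at least doubles with every light edge crossed on the way to the root, and since $\mathcal{T}$ has at most $n$ leaves, at most $\lfloor \log_2 n \rfloor$ light edges occur on any root-to-node path. Combined with the fact that the reported object alternates between prefix fragments of heavy paths and single light edges, this bounds $|H(v,\mathcal{T})|$ by $\Oh(\log n)$, and the loop terminates in $\Oh(\log n)$ steps, which yields both assertions of the lemma.
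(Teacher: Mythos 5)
Your argument is correct and is exactly the standard heavy-path decomposition argument of Sleator and Tarjan that the paper cites for this lemma (the paper gives no proof of its own): linear-time DFS to compute subtree leaf counts and head pointers, plus the doubling argument showing at most $\lfloor \log_2 n\rfloor$ light edges on any node-to-root path, which bounds both $|H(v,\mathcal{T})|$ and the query time by $\Oh(\log n)$.
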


In the heaviest induced ancestors (HIA) problem, introduced by Gagie et al.~\cite{DBLP:conf/cccg/GagieGN13}, we are given two weighted trees $\mathcal{T}_1$ and $\mathcal{T}_2$ with the same set of $n$ leaves, numbered $1$ through $n$, and are asked queries of the following form: given a node $v_1$ of $\mathcal{T}_1$ and a node $v_2$ of $\mathcal{T}_2$, return an ancestor $u_1$ of $v_1$ and an ancestor $u_2$ of $v_2$ that have a common leaf descendant $i$ (we say that the ancestors $u_1$ and $u_2$ are \emph{induced} by the leaf $i$) and maximum total weight. We also consider \emph{special} HIA queries in which we are to find the heaviest ancestor of $v_{3-j}$ that is induced with $v_j$, for a specified $j \in \{1,2\}$. Gagie et al.~\cite{DBLP:conf/cccg/GagieGN13} provide several trade-offs for the space and query time of a data structure for answering HIA queries, some of which were recently improved by Abedin et al.~\cite{DBLP:conf/cpm/AbedinH0T18}. All of them are based on heavy-path decompositions $H(v_1,\mathcal{T}_1)$, $H(v_2,\mathcal{T}_2)$. In the following lemma we use the variant of the data structure from Section 2.2 in~\cite{DBLP:conf/cccg/GagieGN13}, substituting the data structure used from~\cite{DBLP:conf/compgeom/ChanLP11} to the one from~\cite{DBLP:conf/focs/AlstrupBR00} to obtain the following trade-off with a specified construction time. It can be readily verified that their technique answers special HIA queries within the same complexity.

\begin{lemma}[\cite{DBLP:conf/cccg/GagieGN13}]\label{lem:HIA}
HIA queries and special HIA queries over two weighted trees $\mathcal{T}_1$ and $\mathcal{T}_2$ of total size $\cO(n)$ can be answered in $\cO(\log^2 n)$ time, using a data structure of size $\cO(n\log^2 n)$ that can be constructed in $\cO(n\log^2 n)$ time.
\end{lemma}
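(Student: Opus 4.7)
The plan is to follow the heavy-path reduction of Gagie, Gawrychowski, and Nekrich. First I would build heavy-path decompositions of $\mathcal{T}_1$ and $\mathcal{T}_2$ in linear time, and for every leaf $\ell$ and every heavy path $\pi$ appearing on its root-to-leaf path in either tree record the lowest ancestor $u(\pi,\ell)$ of $\ell$ on $\pi$; a single traversal computes all these values in $\cO(n\log n)$ total time because each root-to-leaf path meets only $\cO(\log n)$ heavy paths (Lemma~\ref{obs:hp}). Then for every pair $(\pi_1,\pi_2)$ of heavy paths from $\mathcal{T}_1$ and $\mathcal{T}_2$ I would assemble a weighted planar point set
\[
  \mathcal{P}_{\pi_1,\pi_2}=\bigl\{\bigl(\mathcal{D}(u_1(\pi_1,\ell)),\mathcal{D}(u_2(\pi_2,\ell))\bigr):\ell\text{ crosses both }\pi_1\text{ and }\pi_2\bigr\},
\]
assigning each point a weight equal to the sum of its coordinates, and equip it with the two-dimensional range-maximum structure of Lemma~\ref{lem:Ddrmq} in its Alstrup--Brodal--Rauhe form. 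Charging every leaf against the $\cO(\log n)$ heavy paths it meets in each tree gives $\sum_{\pi_1,\pi_2}|\mathcal{P}_{\pi_1,\pi_2}|=\cO(n\log^2 n)$, so total space and construction time are $\cO(n\log^2 n)$.

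To answer a query $(v_1,v_2)$ I would use Lemma~\ref{obs:hp} to decompose each root-to-$v_i$ path into $\cO(\log n)$ prefix fragments of heavy paths $\pi_i^{k}$ down to explicit depth bounds $d_i^{k}$, and then for each of the $\cO(\log^2 n)$ pairs of fragments issue the orthogonal range maximum query $\RMQ_{\mathcal{P}_{\pi_1^k,\pi_2^m}}([0,d_1^k]\times[0,d_2^m])$, plus $\cO(1)$ companion queries that handle the cases in which the witness leaf's recorded ancestor depth on one side lies below the fragment bound and the true LCA with $v_i$ sits exactly at the fragment's endpoint. Taking the maximum returned value together with the witness leaf $\ell$ identifies the optimal induced pair $u_1=\text{LCA}_{\mathcal{T}_1}(v_1,\ell)$, $u_2=\text{LCA}_{\mathcal{T}_2}(v_2,\ell)$. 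Special HIA queries, in which one of the two vertices is treated as fixed, collapse to only $\cO(\log n)$ pair-queries, each of which becomes a one-dimensional range maximum on an appropriate slice of some $\mathcal{P}_{\pi_1,\pi_2}$ and fits into the same framework.

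Correctness rests on the observation that, for any candidate witness $\ell$, the LCA of $v_i$ and $\ell$ in $\mathcal{T}_i$ lies on a unique heavy path belonging to $H(v_i,\mathcal{T}_i)$, so each candidate optimum is captured by exactly one pair $(\pi_1^k,\pi_2^m)$. The main obstacle is matching the claimed bounds simultaneously: a naive 2D range-maximum per pair costs $\cO(\log n)$ and would inflate the query time to $\cO(\log^3 n)$, while an overly generic structure would blow up the space by a similar factor. Trimming to the stated $\cO(\log^2 n)$ query and $\cO(n\log^2 n)$ space/construction requires the precise two-dimensional range-maximum variant of Alstrup, Brodal, and Rauhe flagged in the lemma statement, which, combined with the $\cO(n\log^2 n)$ bound on the combined grid size, realises the advertised trade-off.
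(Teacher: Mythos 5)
Your construction is the right family of ideas --- it mirrors the technique the paper itself uses for the \emph{extended} HIA queries (Lemma~\ref{lem:extended-HIA}) and the technique of Gagie et al., which is all this lemma really rests on, since the paper does not reprove it but imports the variant from Section~2.2 of~\cite{DBLP:conf/cccg/GagieGN13} with the range-maximum component swapped to pin down the construction time. However, your proposal does not actually establish the stated bounds, and the place where it stops is exactly where the difficulty lies. With the 2D structure of Lemma~\ref{lem:Ddrmq} (the Alstrup--Brodal--Rauhe form cited in the paper: $\cO(\log n)$ query, $\cO(N\log N)$ space, $\cO(N\log^2 N)$ construction on $N$ points), your $\cO(\log^2 n)$ fragment pairs cost $\cO(\log^3 n)$ per query, and your $\cO(n\log^2 n)$ total points cost $\cO(n\log^3 n)$ space and $\cO(n\log^4 n)$ construction --- all a $\log$ factor (or more) above the claim. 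Meeting $\cO(\log^2 n)$ query time forces $\cO(1)$ time per fragment pair, and meeting $\cO(n\log^2 n)$ space forces per-pair structures that are linear in their point sets; no black-box 2D range-maximum structure delivers both simultaneously, and your appeal to an unspecified ``precise variant of Alstrup, Brodal, and Rauhe'' is an assertion, not an argument. This is the genuinely nontrivial content of the trade-off in~\cite{DBLP:conf/cccg/GagieGN13}, which exploits the special structure of the per-pair queries (they are dominance-type because prefix fragments are anchored at the tops of heavy paths) rather than a generic $\RMQ$; a self-contained proof would have to reproduce that mechanism or honestly cite it, as the paper does.

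Two smaller issues in the same vein: a single point set per pair weighted by the coordinate sum cannot answer your ``companion'' queries, because when the recorded depth on one side exceeds the fragment endpoint the objective becomes (endpoint depth) plus the \emph{other} coordinate alone, and when both sides are truncated it is a pure emptiness test; you need separate collections weighted by one coordinate, the other coordinate, and zero (this is precisely why the paper's extended-HIA proof keeps four collections $\mathcal{P}^{I},\ldots,\mathcal{P}^{IV}$). Also, your claim that special HIA queries reduce to one-dimensional range maxima on slices is inaccurate: fixing $u_j=v_j$ translates into the constraint that the leaf's lowest ancestor on $v_j$'s heavy path has depth at least $\mathcal{D}(v_j)$, so each of the $\cO(\log n)$ fragment queries is still a two-sided two-dimensional query (a lower bound on one coordinate and an upper bound on the other), again with the truncated case to handle separately.
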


Amir et al.~\cite{Amir2017} observed that the problem of computing an LCS after a single edit operation at position $i$ can be decomposed into two queries out of which we choose the one with the maximal answer: an occurrence of an LCS either avoids $i$ or it covers $i$. The former case can be precomputed. The latter can be reduced to HIA queries over suffix trees. It can be formalized by our \textsc{Three Substrings LCS} problem. The lemma below was implicitly shown in \cite{Amir2017} and \cite{DBLP:conf/cpm/AbedinH0T18}.


\threeSubstringsSpecial*
\begin{proof}
  We first construct $\mathcal{T}_1=\mathcal{T}(T^R)$ and $\mathcal{T}_2=\mathcal{T}(T)$ and the data structure for computing the loci of substrings (Corollary~\ref{cor:LAQ}). The leaf corresponding to prefix $T^{(i-1)}$ in $\mathcal{T}_1$ and to suffix $T_{(i)}$ in $\mathcal{T}_2$ are labeled with $i$. For the sake of HIA queries, we treat $\mathcal{T}_1$ and $\mathcal{T}_2$ as weighted trees over the set of explicit nodes. (Recall that each node has a string-depth and this is its weight.) Then we have:
  
  \begin{claim}
    Explicit nodes $u_1$ of $\mathcal{T}_1$ and $u_2$ of $\mathcal{T}_2$ are induced together if and only if $\mathcal{L}(u_1)^R \mathcal{L}(u_2)$ is a substring of~$T$.
  \end{claim}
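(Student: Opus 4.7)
The plan is to establish the claim by a direct positional-alignment argument, translating ancestorship in each suffix tree into a statement about where $\mathcal{L}(u_1)$ and $\mathcal{L}(u_2)$ sit inside $T$, and then observing that sharing the same leaf $i$ is exactly what makes the two alignments concatenate into a single contiguous occurrence.

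First I would unpack the leaf semantics. In $\mathcal{T}_1=\mathcal{T}(T^R)$, the leaf labelled $i$ is, by the stated convention, the terminal node whose path-label is $(T^{(i-1)})^R$; hence an explicit node $u_1$ is an ancestor of leaf $i$ if and only if $\mathcal{L}(u_1)$ is a prefix of $(T^{(i-1)})^R$, which, upon reversing, means that $\mathcal{L}(u_1)^R$ is a suffix of $T[1\dd i-1]$, i.e.\ $\mathcal{L}(u_1)^R = T[i-|\mathcal{L}(u_1)|\dd i-1]$. Symmetrically, in $\mathcal{T}_2=\mathcal{T}(T)$ an explicit node $u_2$ is an ancestor of leaf $i$ if and only if $\mathcal{L}(u_2)$ is a prefix of $T_{(i)}$, i.e.\ $\mathcal{L}(u_2) = T[i\dd i+|\mathcal{L}(u_2)|-1]$.

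For the forward direction, if $u_1$ and $u_2$ share a common leaf descendant labelled $i$, both identities above hold simultaneously and concatenate to
\[
\mathcal{L}(u_1)^R\,\mathcal{L}(u_2) \;=\; T\bigl[\,i-|\mathcal{L}(u_1)|\dd i+|\mathcal{L}(u_2)|-1\,\bigr],
\]
which is a substring of $T$. For the reverse direction, suppose $\mathcal{L}(u_1)^R\mathcal{L}(u_2)$ occurs in $T$ starting at position $p$; set $i := p + |\mathcal{L}(u_1)|$. Then $\mathcal{L}(u_1)^R = T[p\dd i-1]$ is a suffix of $T^{(i-1)}$, so $\mathcal{L}(u_1)$ is a prefix of $(T^{(i-1)})^R$ and $u_1$ is an ancestor of leaf $i$ in $\mathcal{T}_1$; likewise $\mathcal{L}(u_2) = T[i\dd i+|\mathcal{L}(u_2)|-1]$ is a prefix of $T_{(i)}$, so $u_2$ is an ancestor of leaf $i$ in $\mathcal{T}_2$. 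Hence $u_1$ and $u_2$ are induced by leaf $i$.

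The only subtlety I expect is administrative: one must check that the leaf-labelling conventions of the two trees really match on the full range of splits $T=T^{(i-1)}T_{(i)}$, and that the degenerate cases $|\mathcal{L}(u_1)|=0$ or $|\mathcal{L}(u_2)|=0$ (i.e.\ one of the nodes is the root) and the boundary values of $i$ (depending on whether end-markers are appended in the constructions of $\mathcal{T}(T^R)$ and $\mathcal{T}(T)$) are handled so that every valid splitting position of $T$ corresponds to a leaf in each tree carrying the same label. Once the labellings are harmonised, each direction of the equivalence is an immediate reading of the definitions, so this bookkeeping is the only real work in the claim.
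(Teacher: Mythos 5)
Your proof is correct and follows the same route the paper takes: the paper states this claim without proof in Lemma~\ref{lem:3-substring-LCS-special}, but proves its generalization (for $W=T[a\dd b]$) inside Lemma~\ref{lem:3-substring-LCS} by exactly your positional-alignment argument — splitting an occurrence at position $i$ so that $\mathcal{L}(u_1)^R$ is a suffix of $T^{(i-1)}$ and $\mathcal{L}(u_2)$ a prefix of $T_{(i)}$, and conversely reading these off from a common leaf descendant $i$. Your remarks on leaf-labelling conventions and degenerate cases are just the bookkeeping the paper implicitly assumes.
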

  
  Assume we are to answer a \textsc{Three Substrings LCS} query for $U$, $V$, and $W=T$. Let $v_1$ be the locus of $U^R$ in $\mathcal{T}_1$ and $v_2$ be the locus of $V$ in $\mathcal{T}_2$. By the claim, if both $v_1$ and $v_2$ are explicit, then the problem reduces to an HIA query for $v_1$ and $v_2$. Otherwise, we ask an HIA query for the lowest explicit ancestors of $v_1$ and $v_2$ and special HIA queries for the closest explicit descendant of $v_j$ and $v_{3-j}$ for $j \in \{1,2\}$ such that $v_j$ is implicit.
\end{proof}

In Section~\ref{sec:internal} we show a solution to the general version of the \textsc{Three Substrings LCS} problem using a generalization of the HIA queries.

\subsection{Internal Queries for Special Substrings}\label{sec:int1}
We show how to answer internal LCS queries for a prefix or suffix of $S$ and a prefix or suffix of $T$ and for a substring of $S$ and the whole $T$.

In the solutions we use the formula:
\begin{equation}\label{eq:internal-LCS}
  |\LCS(S[a \dd b],T[c \dd d])|=\max_{\substack{i=a,\ldots,b,\\j=c,\ldots,d}} \,\{\min\{\lcp(S_{(i)},T_{(j)}),\,b-i+1,\,d-j+1\}\,\}.
\end{equation}
We also apply the following observation to create range maximum queries data structures over points constructed from explicit nodes of the GST $\mathcal{T}(S,T)$.

\begin{observation}\label{obs:lcp_formula}
  $\{\lcp(S_{(i)},T_{(j)})\,:\,i,j=1,\ldots,n\} \subseteq \{\mathcal{D}(v)\,:\,v\text{ is explicit in }\mathcal{T}(S,T)\}$.
\end{observation}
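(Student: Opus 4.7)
The plan is to exhibit, for every pair $(i,j)$, an explicit node of $\mathcal{T}(S,T)$ whose string-depth is exactly $\lcp(S_{(i)},T_{(j)})$. Recall that $\mathcal{T}(S,T)$ is built as the suffix tree of $X = S\#_1 T\#_2$, where $\#_1,\#_2$ are distinct end-markers not in $\Sigma$. Hence there is a leaf $\ell^S_i$ of $\mathcal{T}(S,T)$ corresponding to the suffix $S_{(i)}\#_1 T\#_2$, and a leaf $\ell^T_j$ corresponding to $T_{(j)}\#_2$.

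I would then let $v=\mathrm{LCA}(\ell^S_i,\ell^T_j)$ in $\mathcal{T}(S,T)$. By the standard property of (generalized) suffix trees, $\mathcal{L}(v)$ is the longest common prefix of the two path-labels $S_{(i)}\#_1 T\#_2$ and $T_{(j)}\#_2$. Because $\#_1\neq\#_2$ and neither belongs to $\Sigma$, the first position at which these two strings differ is at most at the end of the $\Sigma$-part that they share, so this longest common prefix is precisely $\lcpstring(S_{(i)},T_{(j)})$, viewed as a string over $\Sigma$. In particular, $\mathcal{D}(v) = |\mathcal{L}(v)| = \lcp(S_{(i)},T_{(j)})$.

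It remains to observe that $v$ is an explicit node of $\mathcal{T}(S,T)$. Since $\#_1\neq\#_2$, the suffixes $S_{(i)}\#_1 T\#_2$ and $T_{(j)}\#_2$ are distinct, so $\ell^S_i\neq\ell^T_j$ and $v$ is a proper ancestor of both. As $v$ has at least two descendant leaves that separate into distinct subtrees at $v$, it is a branching node and therefore explicit. This puts $\mathcal{D}(v)$ into the right-hand set of the observation, finishing the proof. The only subtlety worth stressing in the write-up is the role of the two distinct end-markers, which is exactly what guarantees both that the LCP over $\Sigma$ is preserved and that $v$ is a genuine branching (hence explicit) node rather than an implicit one.
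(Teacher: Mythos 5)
Your proof is correct: the paper states this as an observation without giving a proof, and your argument --- taking the LCA of the two leaves for $S_{(i)}\#_1T\#_2$ and $T_{(j)}\#_2$, noting that the distinct end-markers force its string-depth to be exactly $\lcp(S_{(i)},T_{(j)})$ and that it is a branching, hence explicit, node --- is precisely the standard justification the paper implicitly relies on. No gaps; the only cosmetic remark is that the case $\lcp(S_{(i)},T_{(j)})=0$ is covered by the root, which is explicit.
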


\prefixSuffix*
\begin{proof}
  For a node $v$ of $\mathcal{T}(S,T)$ and $U \in \{S,T\}$ we define:
  
  \vspace*{-0.3cm}
  \begin{align*}
    \minPref(v,U) &= \min\{i\,:\,\mathcal{L}(v)\text{ is a prefix of }U_{(i)}\},\\
    \maxPref(v,U) &= \max\{i\,:\,\mathcal{L}(v)\text{ is a prefix of }U_{(i)}\}.
  \end{align*}
  We assume that $\min \emptyset = \infty$ and $\max\emptyset = -\infty$. These values can be computed for all explicit nodes of $\mathcal{T}(S,T)$ in $\cO(n)$ time in a bottom-up traversal of the tree.

  We only consider computing $\LCS(S_{(a)},T_{(b)})$ and $\LCS(S^{(a)},T_{(b)})$ as the remaining cases can be solved by considering the reversed strings.

  In the first case formula \eqref{eq:internal-LCS} has an especially simple form:
  \[|\LCS(S_{(a)},T_{(b)})|=\max_{i \ge a,\,j \ge b} \lcp(S_{(i)},T_{(j)})\]
  which lets us use orthogonal range maximum queries to evaluate it. For each explicit node $v$ of $\mathcal{T}(S,T)$ with descendants from both $S$ and $T$ we create a point $(x,y)$ with weight $\mathcal{D}(v)$, where $x=\maxPref(v,S)$ and $y=\maxPref(v,T)$. By Observation~\ref{obs:lcp_formula}, the sought LCS length is the maximum weight of a point in the rectangle $[a,n] \times [b,n]$. This lets us also recover the LCS itself. The complexity follows from Lemma~\ref{lem:Ddrmq}.
  
  In the second case the formula \eqref{eq:internal-LCS} becomes:
  \[|\LCS(S^{(a)},T_{(b)})|=\max_{i \le a,\,j \ge b} \min(\lcp(S_{(i)},T_{(j)}),\,a-i+1).\]
  The result is computed in one of two steps depending on which of the two terms produces the minimum. First let us consider the case that $\lcp(S_{(i)},T_{(j)}) < a-i+1$. For each explicit node $v$ of $\mathcal{T}(S,T)$ with descendants from both $S$ and $T$ we create a point $(x,y)$ with weight $\mathcal{D}(v)$, where $x=\minPref(v,S)+\mathcal{D}(v)-1$ and $y=\maxPref(v,T)$. The answer $r_1$ is the maximum weight of a point in the rectangle $[1,a-1] \times [b,n]$.
  
  In the opposite case we can assume that the resulting internal LCS is a suffix of $S^{(a)}$ that does not occur earlier in $S$. For each explicit node $v$ of $\mathcal{T}(S,T)$ we create a point $(x,y)$ with weight $x'$, where $x'=\minPref(v,S)$, $x=x'+\mathcal{D}(v)-1$, and $y=\maxPref(v,T)$. Let $i$ be the minimum weight of a point in the rectangle $[a,n] \times [b,n]$. If $i \le a$, then we set $r_2=a-i+1$. Otherwise, we set $r_2=-\infty$.
  
  In both cases we use the 2d $\RMQ$ data structure of Lemma~\ref{lem:Ddrmq}. In the end, we return $\max(r_1,r_2)$ and the corresponding LCS.
\end{proof}

\substringString*
\begin{proof}
  We define $B[i]=\max_{j=1,\ldots,|T|}\, \{\lcp(S_{(i)},T_{(j)})\}$. The following fact was shown in \cite{Amir2017}. Here we give a proof for completeness.

\begin{claim}[\cite{Amir2017}]
  The values $B[i]$ for all $i=1,\ldots,|S|$ can be computed in $\cO(n)$ time.
\end{claim}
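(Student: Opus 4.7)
The plan is to compute each $B[i]$ using the generalized suffix tree $\mathcal{T}(S,T)$, which can itself be constructed in $\cO(n)$ time. Recall that $B[i]$ is the length of the longest prefix of $S_{(i)}$ that occurs as a substring of $T$; equivalently, it equals the string-depth of the deepest (explicit or implicit) node on the root-to-leaf path for $S_{(i)}$ that has at least one leaf descendant coming from $T$. Since the deepest such node on this path is always an explicit node (the leaf for $S_{(i)}$ is a $T$-less branch only if it lies below some branching node whose other side is a $T$-suffix), we can restrict attention to explicit nodes.

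Concretely, I would first perform a single bottom-up traversal of $\mathcal{T}(S,T)$ to mark each explicit node $v$ with a boolean flag indicating whether the subtree rooted at $v$ contains any leaf corresponding to a suffix of $T$. This clearly takes $\cO(n)$ time. Next, I would perform a top-down traversal in which each explicit node $v$ is labelled with a value $\delta(v)$ defined as follows: $\delta(\mathrm{root})=0$; for a non-root node $v$ with parent $u$, set $\delta(v)=\mathcal{D}(v)$ if $v$ is flagged as having a $T$-descendant, and $\delta(v)=\delta(u)$ otherwise. By induction, $\delta(v)$ equals the maximum $\mathcal{D}(w)$ over all ancestors $w$ of $v$ (including $v$ itself) that have a $T$-descendant.

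Finally, for each $i \in \{1,\ldots,|S|\}$, let $\ell_i$ be the leaf of $\mathcal{T}(S,T)$ labelled with the suffix $S_{(i)}$; then $B[i]=\delta(\ell_i)$. Reading these off for all $i$ takes $\cO(|S|)$ time, so the total running time is $\cO(n)$, as required.

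I don't anticipate any real obstacle here: correctness is immediate from the observation that the longest prefix of $S_{(i)}$ occurring in $T$ corresponds exactly to the path-label of the deepest ancestor of $\ell_i$ whose subtree contains a $T$-suffix leaf, and the two linear-time traversals of the GST implement this computation directly.
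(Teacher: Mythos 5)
Your proposal is correct and follows essentially the same approach as the paper: a top-down pass over $\mathcal{T}(S,T)$ propagating, for each explicit node, the string-depth of its deepest ancestor (itself included) having a $T$-leaf descendant, and reading $B[i]$ off at the leaves (your $\delta(v)$ is the paper's $\ell(v)$). The only difference is that you make the bottom-up flagging of nodes with $T$-descendants explicit, which the paper leaves implicit.
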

\begin{proof}
  For every explicit node $v$ of $\mathcal{T}(S,T)$ let us compute, as $\ell(v)$, the length of the longest common prefix of $\mathcal{L}(v)$ and any suffix of $T$. The values $\ell(v)$ are computed in a top-down manner. If $v$ has as a descendant a leaf from $T$, then clearly $\ell(v)=\mathcal{D}(v)$. Otherwise, we set $\ell(v)$ to the value computed for $v$'s parent. Finally, the values $B[i]$ can be read at the leaves of $\mathcal{T}(S,T)$.
\end{proof}

  The formula \eqref{eq:internal-LCS} can be written as:
$$|\LCS(S[a\dd b],T)| = \max_{a \le k \le b} \{\min(B[k],b-k+1)\}.$$

The function $f(k)=b-k+1$ is decreasing. We are thus interested in the smallest $k_0 \in [a,b]$ such that $B[k_0] \ge b-k_0+1$. If there is no such $k_0$, we set $k_0=b+1$. This lets us restate the previous formula as follows:
$$|\LCS(S[a\dd b],T)|=\max( \max_{a \le k < k_0} \{B[k]\},\, b-k_0+1).$$
Indeed, for $a \le k<k_0$ we know that $\min(B[k],b-k+1) = B[k]$, for $k=k_0$ we have $\min(B[k],b-k+1) = b-k_0+1$, and for $k_0 < k \le b$ we have
$\min(B[k],b-k+1) \le b-k+1 \le b-k_0+1$.

The final formula for LCS length can be evaluated in $\cO(1)$ time with a data structure for range maximum queries that can be constructed in linear time~\cite{Bender2000} on $B[1],\ldots,B[n]$, provided that $k_0$ is known. This lets us also recover the LCS itself.

\emph{Computation of $k_0$.}
The condition for $k_0$ can be stated equivalently as $B[k_0]+k_0 \ge b+1$. We create an auxiliary array $B'[i] = B[i]+i$. To find $k_0$, we need to find the smallest index $k \in [a,b]$ such that $B'[k] \ge b+1$. 
We can do this in time $\cO(\log n)$ by performing a binary search for $k$ in the range $[a,b]$ of $B'$ using $\cO(n)$-time preprocessing and $\cO(1)$-time range maximum queries~\cite{Bender2000}.
\end{proof}

\subsection{Three Substrings LCS Queries}\label{sec:int2}
We next show how to answer the general \textsc{Three Substrings LCS} queries.
We use \emph{extended} HIA queries that we define for two weighted trees $\mathcal{T}_1$ and $\mathcal{T}_2$ with the same set of $n$ leaves, numbered $1$ through $n$, as follows: given $1 \le a \le b \le |T|$, a node $v_1$ of $\mathcal{T}_1$, and a node $v_2$ of $\mathcal{T}_2$, return an ancestor $u_1$ of $v_1$ and an ancestor $u_2$ of $v_2$ such that:
\begin{enumerate}
\item $u_1$ and $u_2$ are induced by $i$;
\item $\mathcal{D}(u_j)=d_j$ for $j=1,2$;
\item $a \leq i-d_1$ and $i+d_2 \leq b+1$;
\item $d_1+d_2$ is maximal.
\end{enumerate}
We also consider \emph{special extended} HIA queries, in which the condition $u_1=v_1$ or the condition $u_2=v_2$ is imposed. Both extended and special extended HIA queries can be answered efficiently using multidimensional range maximum queries.

The motivation of the above definitions (extended and special extended HIA queries) becomes clearer in the proof of~\cref{lem:3-substring-LCS}. Intuitively, the additional hardness is due to the fact that we use this type of queries to answer \textsc{Three Substrings LCS} for an {\em arbitrary substring} $W=T[a\dd b]$ instead of $W=T$. To this end, we have extended the HIA queries and present a data structure to answer them efficiently. The proposed data structure is based on a non-trivial combination of heavy-path decompositions and multidimensional range maximum data structures. In Appendix~\ref{app:dimensions} we discuss how to reduce the number of dimensions from 6 to 4 (or even 3).

\begin{lemma}\label{lem:extended-HIA}
  Extended HIA queries and special extended HIA queries over two weighted trees $\mathcal{T}_1$ and $\mathcal{T}_2$ of total size $\cO(n)$ can be answered in $\Ohtilde(1)$ time after $\Ohtilde(n)$ time and space preprocessing.
\end{lemma}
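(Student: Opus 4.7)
My plan is to extend the standard heavy-path-decomposition framework used by Gagie et al.~\cite{DBLP:conf/cccg/GagieGN13} (and refined in Lemma~\ref{lem:HIA}) so that each ``candidate leaf'' contributes to a constant-dimensional grid rather than a two-dimensional one, and each query reduces to $\cO(1)$ orthogonal range maximum queries per pair of heavy paths touched. Concretely, I would fix a heavy-path decomposition of both $\mathcal{T}_1$ and $\mathcal{T}_2$. For every pair $(P_1,P_2)$ of heavy paths, one in each tree, I associate the set of leaves $i$ whose root-to-leaf paths visit both $P_1$ and $P_2$. For each such leaf, let $d_1^i$ be the depth at which the root-to-$i$ path leaves $P_1$, and analogously $d_2^i$. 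Since each leaf lies on exactly one heavy path in each tree, and any root-to-leaf path decomposes into $\cO(\log n)$ heavy-path prefixes, each leaf contributes to $\cO(\log^2 n)$ such pairs, and the total number of points across all grids is $\cO(n \log^2 n)$.

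Next I would reinterpret the query. Given $(v_1,v_2,a,b)$, let $D_j$ be the deepest depth on $P_j$ that still lies on the root-to-$v_j$ path. The best ancestor of $v_j$ that is also an ancestor of $i$ on $P_j$ has depth
\[
d_1 = \min(D_1,\, d_1^i,\, i-a), \qquad
d_2 = \min(D_2,\, d_2^i,\, b+1-i),
\]
so that maximizing $d_1+d_2$ splits into nine subcases according to which of the three upper bounds is active for $d_1$ and for $d_2$. In each subcase the value of $d_1+d_2$ is a linear function of the point's coordinates $(d_1^i,d_2^i,i)$ together with the query parameters $(D_1,D_2,a,b)$, and the subcase itself is guarded by $\cO(1)$ half-plane constraints which become axis-aligned after introducing the derived coordinates $i-d_1^i$ and $i+d_2^i$ at preprocessing time. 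Each subcase thus reduces to an orthogonal range maximum query in $\cO(1)$ dimensions over the grid of $(P_1,P_2)$, which by Lemma~\ref{lem:Ddrmq} is answered in $\tilde{\cO}(1)$ time with $\tilde{\cO}(N)$ space and $\tilde{\cO}(N)$ preprocessing for a grid of size~$N$. Summing over the $\cO(\log^2 n)$ pairs of heavy paths that cover the ancestor paths of $v_1$ and $v_2$, the total query time is $\tilde{\cO}(1)$ and the overall space and preprocessing time are $\tilde{\cO}(n\log^2 n)=\tilde{\cO}(n)$.

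For special extended HIA queries, say when $u_1=v_1$ is forced, $d_1=\mathcal{D}(v_1)$ is fixed and the heavy-path decomposition is only needed on the side of $v_2$, shrinking the number of grids visited per query to $\cO(\log n)$. The additional constraint ``$i$ is a leaf descendant of $v_1$ in $\mathcal{T}_1$'' translates, via the Euler tour of $\mathcal{T}_1$ computed at preprocessing, into a one-dimensional range $[L_1,R_1]$ on an extra coordinate $\pi(i)$; this adds one dimension to the grids but fits the same machinery. An entirely symmetric setup handles the case $u_2=v_2$.

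The main obstacle I expect is packaging the nine subcases so that after the linear change of coordinates each becomes a genuinely axis-aligned box in as few dimensions as possible: the naive reading needs up to six dimensions per grid (two for depths, one for the leaf, two derived, plus one subtree coordinate in the special variant), and one has to argue carefully that dependencies between these coordinates allow the box to be reinterpreted in $\cO(1)$ dimensions without losing the maximum; the dimension-reduction details are exactly what is deferred to the appendix mentioned after the lemma statement. Once that reduction is in place, correctness follows from a straightforward case analysis verifying that the best of the nine subcase answers is the global optimum of $d_1+d_2$ subject to the extended constraints.
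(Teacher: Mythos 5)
Your overall architecture is the same as the paper's: heavy-path decompositions of both trees, one grid of weighted points per pair of heavy paths (the paper realizes this by making $\id(\pi_1),\id(\pi_2)$ two extra coordinates of a few global collections), one point per leaf per pair of heavy paths on its root-to-leaf paths carrying the depths $d_1,d_2$ together with the derived coordinates $i-d_1$ and $i+d_2$, a query answered by $\cO(\log^2 n)$ orthogonal range maximum queries over the $\cO(\log^2 n)$ pairs of prefix fragments of $H(v_1,\mathcal{T}_1)\times H(v_2,\mathcal{T}_2)$ via Lemma~\ref{lem:Ddrmq}, and special queries handled with one side collapsed (your extra Euler-tour coordinate plays the role of the paper's restriction $d_1\ge\mathcal{D}(v_1)$), for $\Ohtilde(n)$ space and $\Ohtilde(1)$ query time. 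Up to this point the two write-ups are essentially interchangeable.

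The genuine gap is in your query-time objective $d_1=\min(D_1,\,d_1^i,\,i-a)$, $d_2=\min(D_2,\,d_2^i,\,b+1-i)$ and the ensuing nine subcases: this treats the window bounds $i-a$ and $b+1-i$ as \emph{attainable} depths. The lemma is about weighted trees, where the weights of consecutive nodes on a path can jump by more than one (in the intended application the nodes are the explicit nodes of a suffix tree, so an edge can carry a long string label), and the query must return actual nodes $u_1,u_2$. When the minimum on a side is attained by $i-a$ (or $b+1-i$), there need not exist any ancestor of $v_j$ induced by $i$ at that exact depth; the deepest admissible ancestor may be much shallower, and its depth is not a function of the stored coordinates $(d_1^i,d_2^i,i)$, so no weight you can precompute recovers it. Hence your nine-subcase maximum is only an upper bound on the true optimum, and the pair of ancestors it implicitly reports may not exist, so the reduction does not prove the lemma as stated. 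The paper avoids exactly this pitfall: its four $\RMQ$ types only ever use the two genuine node depths $d(\pi_j,i)$ and $\mathcal{D}(y_j)$ as candidate values of $d_j$ (storing weights $d_1+d_2$, $d_2$, $d_1$, $0$ and adding $\mathcal{D}(y_j)$ at query time in the capped cases), while the constraints $a\le i-d_1$ and $i+d_2\le b+1$ are enforced purely as range \emph{filters} on the coordinates $i-d_1$ and $i+d_2$ (suitably shifted by $\mathcal{D}(y_j)$), never as quantities contributing to the objective. To repair your argument you must discard, rather than truncate to, the subcases in which a window bound is the binding term; the four surviving subcases (which side is capped at the fragment bottom $y_j$) are then precisely the paper's scheme.
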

\begin{proof}
  We defer answering special extended HIA queries until the end of the proof.
  Let us consider heavy paths in $\mathcal{T}_1$ and $\mathcal{T}_2$. Let us assign to each heavy path $\pi$ in $\mathcal{T}_j$, for $j=1,2$, a unique integer identifier of magnitude $\Oh(n)$ denoted by $\id(\pi)$. For a heavy path $\pi$ and $i \in \{1,\ldots,n\}$, by $d(\pi,i)$ we denote the depth of the lowest node of $\pi$ that has leaf $i$ in its subtree.

  We will create four collections of weighted points $\mathcal{P}^{I}$, $\mathcal{P}^{II}$, $\mathcal{P}^{III}$, $\mathcal{P}^{IV}$ in 6d. Let $i \in \{1,\ldots,n\}$. There are at most $\log n+1$ heavy paths on the path from leaf number $i$ to the root of $\mathcal{T}_1$ and $\mathcal{T}_2$. For each such {\em pair} of heavy paths, $\pi_1$ in $\mathcal{T}_1$ and $\pi_2$ in $\mathcal{T}_2$, we denote $d(\pi_j,i)$, for $j=1,2$, by $d_j$ and insert the point:
  \begin{itemize}
    \item $(\id(\pi_1),\id(\pi_2),d_1,d_2,i-d_1,i+d_2)$ to $\mathcal{P}^{I}$ with weight $d_1+d_2$;
    \item $(\id(\pi_1),\id(\pi_2),d_1,d_2,i,i+d_2)$ to $\mathcal{P}^{II}$ with weight $d_2$;
    \item $(\id(\pi_1),\id(\pi_2),d_1,d_2,i-d_1,i)$ to $\mathcal{P}^{III}$ with weight $d_1$;
    \item $(\id(\pi_1),\id(\pi_2),d_1,d_2,i,i)$ to $\mathcal{P}^{IV}$ with weight $0$.
  \end{itemize}
  Thus each collection contains $\cO(n \log^2 n)$ points. We perform preprocessing for range maximum queries on each of the collections by applying~\cref{lem:Ddrmq}.
  
  Assume that we are given an extended HIA query for $v_1$, $v_2$, $a$, and $b$ (inspect~\cref{fig:RMQ} for an illustration). We consider all the prefix fragments of heavy paths in $H(v_1,\mathcal{T}_1)$ and $H(v_2,\mathcal{T}_2)$. For $j=1,2$, let $\pi'_j$ be a prefix fragment of heavy path $\pi_j$ in $H(v_j,\mathcal{T}_j)$, connecting node $x_j$ with its descendant $y_j$. 
  Now suppose that for some $i$, $d(\pi_j,i)>\mathcal{D}(y_j)$;
  we essentially want to reassign $i$ to $y_j$ which is the deepest ancestor of $i$ in $\pi'_j$.
  To this end, we define intervals $I_j=[\mathcal{D}(x_j),\,\mathcal{D}(y_j)-1]$ and $I_j^\infty=[\mathcal{D}(y_j),\infty)$.
  
  For each of the $\cO(\log^2 n)$ pairs of prefix fragments $\pi'_1$ and $\pi'_2$ in the decompositions of the root-to-$v_1$ and root-to-$v_2$ paths, respectively, we ask four range maximum queries, to obtain the following values: 
  \begin{enumerate}
    \item $\RMQ_{\mathcal{P}^{I}}(\id(\pi_1),\,\id(\pi_2),\,I_1,\,I_2,\,[a,\infty),\,(-\infty,b+1])$ that corresponds to finding a pair of induced nodes $u_1 \in \pi'_1 \smallsetminus \{y_1\}$ and $u_2 \in \pi'_2 \smallsetminus \{y_2\}$;
    \item $\mathcal{D}(y_1) + \RMQ_{\mathcal{P}^{II}}(\id(\pi_1),\,\id(\pi_2),\,I_1^\infty,\,I_2,\,[a+\mathcal{D}(y_1),\infty),\,(-\infty,b+1])$ that corresponds to finding a pair of induced nodes $u_1=y_1$ and $u_2 \in \pi'_2 \smallsetminus \{y_2\}$;
    \item $\mathcal{D}(y_2) + \RMQ_{\mathcal{P}^{III}}(\id(\pi_1),\,\id(\pi_2),\,I_1,\,I_2^\infty,\,[a,\infty),\,(-\infty,b+1-\mathcal{D}(y_2)])$ that corresponds to finding a pair of induced nodes $u_1 \in \pi'_1 \smallsetminus \{y_1\}$ and $u_2=y_2$;
    \item $\mathcal{D}(y_1) + \mathcal{D}(y_2) + \RMQ_{\mathcal{P}^{IV}}(\id(\pi_1),\,\id(\pi_2),\,I_1^\infty,\,I_2^\infty,\,[a+\mathcal{D}(y_1),\infty),\,(-\infty,b+1-\mathcal{D}(y_2 )])$ that corresponds to checking if $y_1$ and $y_2$ are induced.
  \end{enumerate}
  
\begin{figure}[htpb]
\begin{center}
\begin{tikzpicture}[scale=0.6]
  \begin{scope}
    \draw[thick] (0,0) -- node[below=0.2cm] {$\mathcal{T}_1$} (10,0) -- (5,10) -- (0,0);
    \draw (1.5,0) -- (4.5,7) -- (7.5,0);
    \draw (3.8,0) -- (5,5) -- (6.2,0);
    \draw[thick] (4.5,7) -- (4.5,6) -- (5,5) -- (4.9,3.7) -- (5.2,3);
    \draw (4.5,7) node[above] {$x_1$};
    \draw (5.05,5) node[left] {$y_1$};
    \draw (5.3,5) node[right] {$\pi_1$};
    \draw (2.65,0) node[above] {$I_1$};
    \draw (5.1,0) node[above] {$I_1^\infty$};
    \draw (6.85,0) node[above] {$I_1$};
    \draw[snake=brace] (2,5) -- node[left] {$\pi'_1$} (2,7);
    \filldraw (4.5,7) circle (0.1cm);
    \filldraw[color=red] (4.5,6) circle (0.1cm);
    \filldraw[color=blue] (5,5) circle (0.1cm);
    \filldraw (4.9,3.7) circle (0.1cm);
    \draw[densely dashed] (4.5,6) -- (2.9,1.9);
    \draw (2.9,1.9) node[below] {$i$};
    \draw[densely dashed] (4.9,3.7) -- (4.5,2.0333);
    \draw (4.5,2.0333) node[below] {$i'$};
  \end{scope}
  \begin{scope}[xshift=12cm]
    \draw[thick] (0,0) -- node[below=0.2cm] {$\mathcal{T}_2$} (10,0) -- (5,10) -- (0,0);
    \draw (2,0) -- (5.5,8) -- (9,0);
    \draw (4.2,0) -- (5.7,4.8) -- (7.2,0);
    \draw[thick] (5.5,8) -- (5.7,7) -- (5.3,6) -- (5.7,4.8) -- (5.7,3);
    \draw (5.4,8) node[above] {$x_2$};
    \draw (5.7,4.8) node[left] {$y_2$};
    \draw (5.3,5.76) node[right] {$\pi_2$};
    \draw (3.1,0) node[above] {$I_2$};
    \draw (5.7,0) node[above] {$I_2^\infty$};
    \draw (8,0) node[above] {$I_2$};
    \draw[snake=brace] (8,8) -- node[right] {$\pi'_2$} (8,4.8);
    \filldraw (5.5,8) circle (0.1cm);
    \filldraw[color=blue] (5.7,7) circle (0.1cm);
    \filldraw[color=red] (5.3,6) circle (0.1cm);
    \filldraw (5.7,4.8) circle (0.1cm);
    \draw[densely dashed] (5.7,7) -- (7.9325,1.9);
    \draw (7.9325,1.9) node[below] {$i'$};
    \draw[densely dashed] (5.3,6) -- (3.50625,1.9);
    \draw (3.50625,1.9) node[below] {$i$};
  \end{scope}
\end{tikzpicture}
\end{center}
\caption{Illustration of the notations used to handle a pair of prefix fragments $\pi'_1$ and $\pi'_2$.
The descendant leaves of $x_j$ are implicitly partitioned at query time, by employing intervals $I_j$ and $I_j^\infty$, according to whether their deepest ancestor in $\pi_j$ is a strict ancestor of $y_j$ or not.
For example, the pair of red nodes, induced by $i$, will be considered by the \RMQ{} of type 1, while the pair of blue nodes, induced by $i'$, by the \RMQ{} of type 2, assuming that the last two constraints of the respective $\RMQ$ are satisfied.}
\label{fig:RMQ}
\end{figure}
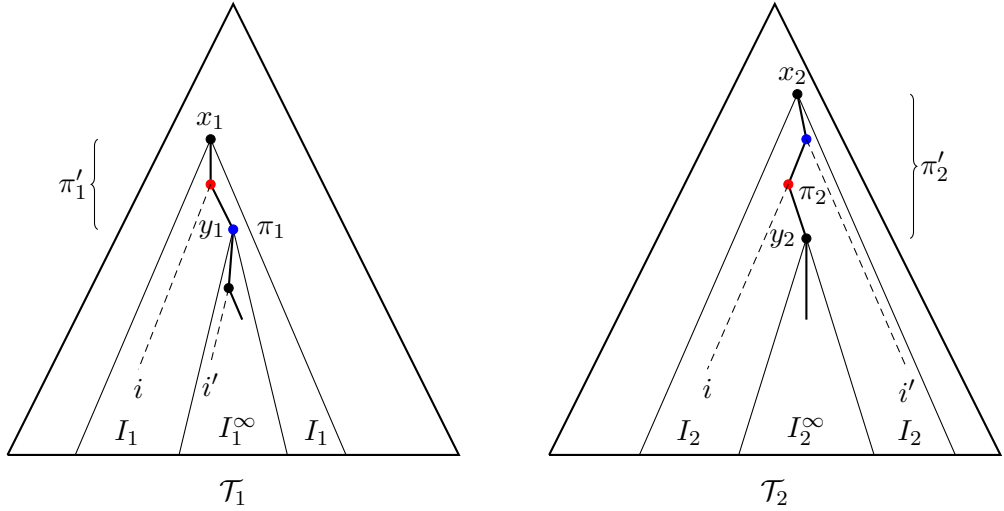
  
  If an RMQ concerns an empty set of points, it is assumed to return $-\infty$. We return the point that yielded the maximal value.
    
  Hence, an extended HIA query reduces to $\Oh(\log^2 n)$ range maximum queries in collections of points in 6d of size $\Oh(n \log^2 n)$. A special extended HIA query can be answered in a simpler way by asking just $\Oh(\log n)$ RMQs, where one of the prefix fragments $\pi'_1$, $\pi'_2$ reduces always to a single node. The statement follows.
  \end{proof}
  
The proof of the following lemma is very similar to the proof of Lemma~\ref{lem:3-substring-LCS-special}; we simply use extended HIA queries instead of the regular ones.

\threeSubstring*
\begin{proof}
  Let us generalize the proof of Lemma~\ref{lem:3-substring-LCS-special}. We construct $\mathcal{T}_1=\mathcal{T}(T^R)$, $\mathcal{T}_2=\mathcal{T}(T)$, and the data structure for computing the loci of substrings (Corollary~\ref{cor:LAQ}). The leaf corresponding to prefix $T^{(i-1)}$ in $\mathcal{T}_1$ and to suffix $T_{(i)}$ in $\mathcal{T}_2$ are labeled with $i$. Let $W=T[a \dd b]$ be an occurrence of $W$ in $T$. If we treat $\mathcal{T}_1$ and $\mathcal{T}_2$ as weighted trees over the set of explicit nodes, then we have:
  
  \begin{claim}
    Let $u_1$ and $u_2$ be explicit nodes of $\mathcal{T}_1$ and $\mathcal{T}_2$, respectively, and $d_1=\mathcal{D}(u_1)$, $d_2=\mathcal{D}(u_2)$. Then $\mathcal{L}(u_1)^R \mathcal{L}(u_2)$ is a substring of $W$ if and only if $u_1$ and $u_2$ are induced by $i$ such that $a \le i-d_1$ and $i+d_2-1 \le b$.
  \end{claim}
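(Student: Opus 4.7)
The plan is to unpack what it means for a leaf label $i$ to be a descendant of an explicit node in each of the two suffix trees, and to observe that this is equivalent to identifying a specific occurrence in $T$ of $\mathcal{L}(u_1)^R \mathcal{L}(u_2)$. Recall that by convention the leaf labeled $i$ in $\mathcal{T}_1 = \mathcal{T}(T^R)$ has path-label $(T[1\dd i-1])^R$ (the reverse of the prefix of $T$ of length $i-1$), while the leaf labeled $i$ in $\mathcal{T}_2 = \mathcal{T}(T)$ has path-label $T_{(i)} = T[i\dd |T|]$. First I would spell this out and use it to translate the ``induced by $i$'' condition into an explicit statement about positions in $T$.

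For the forward direction, I would assume that $u_1$ and $u_2$ are induced by some $i$ with $a \le i - d_1$ and $i + d_2 - 1 \le b$. Since $u_1$ is an ancestor of leaf $i$ in $\mathcal{T}_1$, $\mathcal{L}(u_1)$ is a prefix of $(T[1\dd i-1])^R$ of length $d_1$, so $\mathcal{L}(u_1)^R$ equals the last $d_1$ characters of $T[1\dd i-1]$, namely $T[i-d_1\dd i-1]$. Similarly, $\mathcal{L}(u_2) = T[i\dd i+d_2-1]$. Concatenating these two gives $\mathcal{L}(u_1)^R \mathcal{L}(u_2) = T[i-d_1\dd i+d_2-1]$, and the inequalities on $i$ ensure that this occurrence of the concatenation lies inside $T[a\dd b] = W$.

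For the converse, I would take an occurrence $\mathcal{L}(u_1)^R \mathcal{L}(u_2) = T[p\dd q]$ with $a \le p$ and $q \le b$ and set $i = p + d_1$, so that $p = i - d_1$ and $q = i + d_2 - 1$. Splitting this occurrence at the boundary between the first $d_1$ and last $d_2$ characters yields $\mathcal{L}(u_1)^R = T[i-d_1\dd i-1]$ and $\mathcal{L}(u_2) = T[i\dd i+d_2-1]$. Reversing the first equality shows $\mathcal{L}(u_1)$ is a prefix of $(T[1\dd i-1])^R$, hence of the path-label of leaf $i$ in $\mathcal{T}_1$, so $u_1$ is an ancestor of that leaf; likewise $\mathcal{L}(u_2)$ is a prefix of $T_{(i)}$ so $u_2$ is an ancestor of leaf $i$ in $\mathcal{T}_2$. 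Thus $u_1$ and $u_2$ are induced by this $i$, and the required inequalities are immediate.

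The proof is essentially bookkeeping; there is no real obstacle beyond being careful with the correspondence between positions in $T$ and leaves of $\mathcal{T}(T^R)$ (i.e.\ keeping track of the reversal when mapping prefixes of $T$ to suffixes of $T^R$), and handling the degenerate case $d_1 = 0$ or $d_2 = 0$ where one of the factors is the empty string (which the same calculation covers with no modification).
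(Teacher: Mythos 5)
Your proof is correct and follows essentially the same route as the paper's: both directions amount to translating ``induced by leaf $i$'' into $\mathcal{L}(u_1)^R$ being a suffix of $T^{(i-1)}$ and $\mathcal{L}(u_2)$ a prefix of $T_{(i)}$, and then checking that the inequalities $a \le i-d_1$ and $i+d_2-1 \le b$ are exactly the condition that the occurrence $T[i-d_1 \dd i+d_2-1]$ lies inside $W=T[a \dd b]$. Your version merely states the two directions in the opposite order and is slightly more explicit about the position bookkeeping (and the degenerate empty-label case), which is fine.
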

  \begin{proof}
    $(\Rightarrow)$ The string $\mathcal{L}(u_1)^R \mathcal{L}(u_2)$ is a substring of $W=T[a \dd b]$, so there exists an index $i \in [a \dd b]$ such that $\mathcal{L}(u_1)^R$ is a suffix of $T[a \dd i-1]$ and $\mathcal{L}(u_2)$ is a prefix of $T[i \dd b]$. This implies that $u_1$ and $u_2$ are induced by $i$ and that:
    $$
      d_1 = |\mathcal{L}(u_1)^R| \le |T[a \dd i-1]|=i-a \quad\mbox{and}\quad
      d_2 = |\mathcal{L}(u_2)| \le |T[i \dd b]|=b-i+1.
    $$
  
  $(\Leftarrow)$ If $u_1$ and $u_2$ are induced by $i$, then $\mathcal{L}(u_1)^R$ occurs as a suffix of $T^{(i-1)}$ and $\mathcal{L}(u_2)$ occurs as a prefix of $T_{(i)}$. By the inequalities $a \le i-d_1$ and $i+d_2-1 \le b$, $\mathcal{L}(u_1)^R \mathcal{L}(u_2)$ is a substring of $T[a \dd b]=W$.
  \end{proof}
  
  Assume we are given a \textsc{Three Substrings LCS} query for $U$, $V$, and $W=T[a \dd b]$. Let $v_1$ be the locus of $U^R$ in $\mathcal{T}_1$ and $v_2$ be the locus of $V$ in $\mathcal{T}_2$. By the claim, if both $v_1$ and $v_2$ are explicit, then the problem reduces to an extended HIA query for $v_1$ and $v_2$. Otherwise, we ask an extended HIA query for the lowest explicit ancestors of $v_1$ and $v_2$ and special extended HIA queries for the closest explicit descendant of $v_j$ and $v_{3-j}$ for $j \in \{1,2\}$ such that $v_j$ is implicit.
\end{proof}

\subsection{Decremental Bounded-Length LCS}\label{sec:decr}
In the following lemma we extensively use a data structure $\mathcal{B}$ that allows insertion of intervals (operation $\iinsert_{\mathcal{B}}([p,q])$) and asking queries for the number of intervals covering a given point (operation $\ccount_{\mathcal{B}}(r)$). This data structure, sometimes called an \emph{interval tree}, can be implemented in linear space and logarithmic query time using a balanced BST as follows. 

For each interval $[p,q]$ insertion we add to the BST $(p,1)$ and $(q+1,-1)$, where the rank of a pair $(a,b)$ is $a$; if an element with rank $p$ (resp.~$q+1$) is already in the BST we instead increment (resp.~decrement) the second integer of the pair by $1$. We also store in each BST node $v$ the sum of the second elements of the pairs in the subtree rooted at $v$. Hence, $\iinsert_{\mathcal{B}}([p,q])$ takes $\cO(\log n)$ time. Given a query for a point $r$, we search for it in the BST and by exploiting the subtree sums stored at the nodes of the tree we can implement $\ccount_{\mathcal{B}}(r)$ in $\cO(\log n)$ time as well.

\boundedLength*
\begin{proof} 
We first compute an array $A[0 \dd d]$ such that $A[i]$ stores the number of distinct strings of length $i$ that have occurrences in both $S$ and $T$.
\begin{claim}
$A$ can be initialized in $\cO(n)$ time.
\end{claim}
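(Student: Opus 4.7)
The plan is to exploit the generalized suffix tree $\mathcal{T}(S,T)$. A string $W$ of length $\ell \ge 1$ occurs in both $S$ and $T$ if and only if its locus (explicit or implicit) in $\mathcal{T}(S,T)$ has leaf descendants coming from both $S$ and $T$, since an occurrence of $W$ in $S$ (resp.\ in $T$) corresponds to a suffix of $S$ (resp.\ $T$) that has $W$ as a prefix. Call an explicit node \emph{common} if it has leaf descendants from both $S$ and $T$. Common nodes induce a connected subtree of $\mathcal{T}(S,T)$ containing the root, and for every edge from a common node $u$ to a common node $v$ the loci on this edge represent exactly the distinct common substrings of lengths $\mathcal{D}(u)+1,\mathcal{D}(u)+2,\ldots,\mathcal{D}(v)$. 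Therefore
\[
A[\ell] \;=\; |\{(u,v) : v \text{ common},\; \mathcal{D}(u) < \ell \le \mathcal{D}(v)\}|
\]
for $\ell \ge 1$, where $u$ denotes the parent of $v$, and $A[0]=1$.

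First I would build $\mathcal{T}(S,T)$ and the values $\mathcal{D}(v)$ for all explicit nodes in $\cO(n)$ time by the construction of~\cite{F97} followed by a depth-first traversal. Then, in a single bottom-up pass, I would mark each explicit node according to whether its subtree contains a leaf from $S$ and whether it contains a leaf from $T$; this takes $\cO(n)$ time as well, and identifies all common nodes.

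The remaining task is to turn the characterization of $A[\ell]$ above into an $\cO(n)$-time computation, even though the intervals $[\mathcal{D}(u)+1,\mathcal{D}(v)]$ can individually be long. I would use a standard difference-array trick: allocate an auxiliary array $A'[0\dd d+1]$ initialized to zero, and for every common explicit node $v \neq \text{root}$ with parent $u$ such that $\mathcal{D}(u) < d$, perform
\[
A'[\mathcal{D}(u)+1]\mathrel{+}=1, \qquad A'[\min(\mathcal{D}(v),d)+1]\mathrel{-}=1.
\]
Nodes with $\mathcal{D}(u) \ge d$ contribute no length in $[1,d]$ and are skipped. Finally, I would recover $A$ via the prefix sums $A[\ell]=A[\ell-1]+A'[\ell]$ for $\ell=1,\ldots,d$, setting $A[0]=1$.

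Since the generalized suffix tree has $\cO(n)$ explicit nodes and each one contributes $\cO(1)$ work to $A'$, and the prefix-sum pass is linear in $d=\cO(n)$, the total time is $\cO(n)$. The main thing to get right is the boundary between ``long'' edges (those with $\mathcal{D}(v) > d$) and the cap at $d$, which is handled cleanly by the $\min(\mathcal{D}(v),d)$ in the decrement; everything else is a routine suffix-tree traversal.
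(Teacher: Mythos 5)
Your approach is essentially the paper's: both rest on the observation that the distinct common substrings of length $\ell\ge 1$ are exactly the (explicit or implicit) loci at string-depth $\ell$ in the subtree of $\mathcal{T}(S,T)$ induced by nodes with leaf descendants from both strings. The paper realizes the count by a level-by-level recurrence on the pruned tree ($A[i]=A[i-1]-L[i-1]+\sum_{j\in C[i-1]}(j-1)$), while you realize it with a difference array over the edge intervals $[\mathcal{D}(u)+1,\mathcal{D}(v)]$; this is the same bookkeeping in disguise and equally $\cO(n)$. One slip to fix: with your final recurrence $A[0]=1$ and $A[\ell]=A[\ell-1]+A'[\ell]$, the seed $1$ is carried into every entry, so each $A[\ell]$ with $\ell\ge 1$ exceeds your own characterization (the number of edges with $\mathcal{D}(u)<\ell\le\mathcal{D}(v)$, $v$ common) by exactly $1$. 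Set $A[\ell]=\sum_{m=1}^{\ell}A'[m]$ for $\ell\ge 1$ and keep $A[0]=1$ separately (equivalently, start the running sum at $0$ rather than at $A[0]$); with that one-line correction the argument is sound.
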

\begin{proof}
We start by creating a copy $\mathcal{T}_p$ of $\mathcal{T}(S,T)$ and pruning it, only keeping the nodes that have descendants from both $S$ and $T$. The following information on $\mathcal{T}_p$, which can be computed in $\cO(n)$ time since $\mathcal{T}_p$ has $\cO(n)$ nodes, suffices for computing array $A$:
\begin{itemize}
\item $L[i]$: number of leaves at string-depth $i$;
\item $C[i]$: a multiset consisting of the out-degrees of explicit nodes at string-depth $i$.
\end{itemize}
We then set $A[0]=1$ and obtain $A[i]$ by the following formula: $A[i]=A[i-1]-L(i-1)+\sum_{j \in C[i-1]}(j-1)$.
\end{proof}
Recall that if $S[p]$ is replaced by a $\#$, then every substring $S[a \dd b]$ for $a \le p \le b$ is called \emph{invalidated} and the same name is used for substring occurrences that are invalidated by character replacements in $T$. We say that a substring of $S$ (resp.~$T$) is $S$-invalidated (resp.~$T$-invalidated) if all of its occurrences in $S$ (resp.~$T$) are invalidated. Whenever a substring of length at most $d$ has been $S$-invalidated and not $T$-invalidated, we need to update the array $A$ accordingly. This will be implemented with the aid of an interval tree $\mathcal{B}$, such that insertion of an interval $[p,q]$ to $\mathcal{B}$ corresponds to decrementation of $A[p],\ldots,A[q]$.

For each explicit node $v$ of $\mathcal{T}(S,T)$ we store $C_S(v)$ and $C_T(v)$, denoting the number of occurrences of $\mathcal{L}(v)$ in $S$ and $T$, respectively; we can compute these numbers in a depth first traversal of $\mathcal{T}(S,T)$.

We then compute a heavy-path decomposition of $\mathcal{T}(S,T)$. We consider the implicit nodes of a light edge $(u,w)$ to be part of the heavy path rooted at $w$. For each heavy path~$\pi$, as $I_S(\pi)$ ($I_T(\pi)$) we store the depth of the lowest node of $\pi$ that has been $S$-invalidated (resp.~$T$-invalidated). This is due to the following simple observation.

\begin{observation}\label{obs:trivial}
A substring $U$ is not $S$-invalidated ($T$-invalidated) later than any of its prefixes.
\end{observation}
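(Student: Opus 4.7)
The claim is a short monotonicity statement. I will prove it by a direct containment argument, unpacking the definitions of ``occurrence'' and ``invalidation.'' I will focus on the $S$-invalidation case; the $T$-invalidation case is completely symmetric.

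First, let me fix notation. Let $\mathrm{Occ}_S(X)$ denote the set of starting positions of occurrences of a string $X$ in $S$, and let $R_t \subseteq \{1,\ldots,|S|\}$ denote the set of positions of $S$ that have been replaced by $\#$ after the first $t$ update operations. By the definition in the section, the occurrence of $X$ at position $i$ is invalidated at time $t$ iff $[i,\,i+|X|-1]\cap R_t\neq \emptyset$, and $X$ is $S$-invalidated at time $t$ iff every $i\in \mathrm{Occ}_S(X)$ has its occurrence invalidated.

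The plan has two short steps. First, I observe that if $P$ is a prefix of $U$, then $\mathrm{Occ}_S(U)\subseteq \mathrm{Occ}_S(P)$: any occurrence of $U$ starting at $i$ yields an occurrence of $P$ starting at the same position $i$. Second, for any position $i$, we have the interval inclusion $[i,\,i+|P|-1]\subseteq [i,\,i+|U|-1]$, so if $[i,\,i+|P|-1]\cap R_t\neq \emptyset$, then $[i,\,i+|U|-1]\cap R_t\neq \emptyset$. That is, whenever the $P$-occurrence at $i$ is invalidated at time $t$, the $U$-occurrence at $i$ is also invalidated at time $t$.

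Combining the two observations: if at time $t$ every $P$-occurrence has been invalidated, then in particular every $i\in \mathrm{Occ}_S(U)\subseteq \mathrm{Occ}_S(P)$ has its $P$-occurrence invalidated; by the interval inclusion, the $U$-occurrence at each such $i$ is invalidated as well, so $U$ is $S$-invalidated at time $t$. Hence the time at which $U$ first becomes $S$-invalidated is at most the time at which $P$ first becomes $S$-invalidated, which is exactly the content of the observation. There is no serious obstacle here; the only thing to be careful about is the direction of the inequality in the phrase ``not $S$-invalidated later than,'' and the argument above makes that direction clear.
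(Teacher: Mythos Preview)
Your proof is correct. The paper does not actually give a proof of this observation---it is stated as a ``simple observation'' and left unproved---so your argument supplies exactly the straightforward containment reasoning (occurrences of $U$ are a subset of occurrences of $P$, and the invalidation interval for $U$ at a given start position contains that for $P$) that the authors evidently considered self-evident.
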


Moreover, for the heavy path $\pi$ we store two interval trees, $\mathcal{B}_S(\pi)$ and $\mathcal{B}_T(\pi)$.

\textit{Implementation of character replacement.}
Let us now describe the implementation of a replacement of $S[p]$ by $\#$; replacements in $T$ are handled symmetrically. For every $a=p,\ldots,\min\{p-d,\pp_{S'}(p)\}+1$, this replacement invalidates substrings $S[a \dd b]$ for $b=p,\ldots,\min\{a+d,\nn_{S'}(p)\}-1$. The set of nodes of $\mathcal{T}(S,T)$ that represent those substrings for a fixed~$a$ forms a path. Its endpoints can be identified in $\Oh(\log \log n)$ time due to Corollary~\ref{cor:LAQ}. This path can be decomposed into fragments of $\Oh(\log n)$ heavy paths; all but possibly the topmost fragment of the decomposition are prefix fragments (including the top light edge). For a prefix fragment of heavy path $\pi$ with topmost node $u$ and bottommost node $w$, the interval $[\mathcal{D}(u),\mathcal{D}(w)]$ is inserted to $\mathcal{B}_S(\pi)$.

Let $v$ be an (explicit of implicit) node of $\pi$ of string-depth at most $d$ and $v'$ be its closest explicit descendant. At this point, the number of non-invalidated occurrences of the substring $\mathcal{L}(v)$ in $S$ equals:
\begin{equation}\label{eq:num_occ}
  C_S(v')-\ccount_{\mathcal{B}_S(\pi)}(\mathcal{D}(v)).
\end{equation}
Hence, $\mathcal{L}(v)$ is $S$-invalidated if and only if this number is 0.

At this point $I_S(\pi)$ needs to be updated. By Observation~\ref{obs:trivial}, this can be done using binary search, with each step of the binary search implemented by checking if the formula~\eqref{eq:num_occ} for the tested node $v$ yields a positive value. Let the new value of $I_S(\pi)$ be $I'_S(\pi)$. Then all the elements of the array $A$ in the interval $[I'_S(\pi),\min \{ I_S(\pi),I_T(\pi)\}-1]$ need to be decremented. Hence, this interval is inserted into $\mathcal{B}$, and we set $I_S(\pi):=I'_S(\pi)$.

To find the length of the longest common substring of $S$ and $T$ of length at most $d$ after the character replacement, it suffices to query the interval tree $\mathcal{B}$ for each $r=1,\ldots,d$. This allows us to recompute the array $A$ and clear the interval tree $\mathcal{B}$. 

\textit{Retrieving the substrings.}
To retrieve a pair of positions where the actual substring occurs in $S$ and $T$, we maintain an array $R$ of size $d$ of linked lists, where we store representatives of explicit nodes of $\mathcal{T}(S,T)$ that are topmost nodes of some heavy path. A representative of a node $v$ and the actual node $v$ store pointers to each other. A topmost node $v$ is stored in $R[i]$, where $i=\min \{ I_S(\pi),I_T(\pi)\}-1$. Suppose that after some operation we have $m=\max\{i|A[i] \neq 0\}$. We can then obtain the topmost node of a heavy path whose unique leaf's prefix of length $m$ occurs in both $S$ and $T$ employing $R[m]$. We can then find a pair of positions where this string occurs in $S$ and $T$ as described below. 

For each heavy path $\pi$ we construct and maintain two balanced BSTs $E_S(\pi)$ and $E_T(\pi)$. $E_X(\pi)$, for $X \in \{S,T\}$, stores the leaves of the subtree rooted at the topmost node of $\pi$. Let the unique leaf in $\pi$ be $u_{\pi}$, corresponding to $X_{(t)}$. Throughout the computation, we maintain that a leaf $w \in E_X(\pi)$ corresponding to $X_{(j)}$ has rank $\min\{\lcp(X_{(t)},X_{(j)}),d,\nn_{X'}(j)-j\}$ in $E_X(\pi)$.
We can do this by initially setting the rank of $w$ in $E_X(\pi)$ to be $\min\{\lcp(X_{(t)},X_{(j)}),d\}$ and then updating it each time a prefix of $X[j \dd j+d-1]$ gets invalidated. During the query we just find the maximum in each of $E_S(\pi)$ and $E_T(\pi)$.

\textit{Complexity.}
Each character replacement in $S$ or $T$ implies $\Oh(d\log n)$ insertions into the interval trees $\mathcal{B}_S(\pi)$ and $\mathcal{B}_T(\pi)$. Hence, the time complexity of this operation is $\Ohtilde(d)$ and the total size of the interval trees after $k$ operations is $\Ohtilde(kd)$. Storing $\mathcal{T}(S,T)$ requires $\Oh(n)$ space. The binary search used to update $I_S(\pi)$ or $I_T(\pi)$ takes $\Ohtilde(1)$ time. A character replacement inserts $\Oh(d \log n)$ intervals into the interval tree associated with the array $A$. Finally, the balanced binary search trees $E_S(\pi)$ and $E_T(\pi)$ have total size $\cO(n\log n)$ and can be implemented in $\cO(n \log^2 n)$ time and $\cO(n\log n)$ space in total. Each update or query on them can be performed in $\cO(\log n)$ time --- note that we assume $\cO(1)$-time access to the leaves of $\mathcal{T}(S,T)$. The complexity follows.
\end{proof}  

\section{Fully Dynamic Longest Repeat}\label{sec:lr}

In the longest repeat problem we are given as input a string $S$ of length $n$ and we are to report a longest substring that occurs at least twice in $S$. This can be done in $\Oh(n)$ time and space~\cite{weiner1973linear}. In the fully dynamic longest repeat problem we are given as input a string $S$, which we are to maintain under subsequent edit operations, so that after each operation we can efficiently return a longest repeat in $S$. The application of our techniques for the LCS problem is quite straightforward, which is not surprising given the connection between the two problems. In what follows we briefly discuss the modifications required to the algorithms presented for the different subcases which we have decomposed the LCS problem into; we decompose the longest repeat problem in an analogous manner.



\subsection{Decremental Longest Repeat}
\cref{lem:aux_approx} covers the case that the LCS is short. It essentially provides an efficient way to maintain counters on the distinct substrings of $S$ and $T$ up to a specified length that occur in both strings. We then just read the counters and identify the longest such substring that has at least one occurrence at each of the two strings. We can instead maintain this information for one string, keeping counters on the number of distinct substrings that have at least two occurrences for each length.

\cref{lem:problem} finds long enough substrings that are common in both strings and are hence guaranteed to be anchored in a pair of positions in the difference cover.
Let $A$ be this difference cover. 
Recall that we construct a compact trie $\T(\F)$ of a family of strings $\F$, defined in terms of the input strings, the difference cover and the updated positions.
Then, given $\T(\F)$ and sets $\P,\Q\subseteq \F^2$ we efficiently compute $\max\{\lcp(P_1,Q_1)+\lcp(P_2,Q_2) : (P_1,P_2)\in \P \text{ and }(Q_1,Q_2)\in \Q\}$ by using an efficient solution to the \textsc{Two String Families LCP} problem. Sets $\P$ and $\Q$ essentially allow us to distinguish between substrings of $S'$ and $T'$ in $\mathcal{F}$.

We adapt this for the longest repeat problem with an extra --- yet avoidable --- $\log n$ factor in the complexities. We build $\cO(\log n)$ copies of the respective tree $\mathcal{T}(\mathcal{F})$, where in the $j$th copy, $0\leq j\leq \log n$, we set:

  \begin{align*}
    \P &= \{\,(S[\pp_{S'}(i-1)+1 \dd i-1])^R,\,S[i \dd \nn_{S'}(i)-1])\,:\,i \in A\,,\,\lfloor i/2^j \rfloor =2m\,,\,m \in \mathbb{Z}\,\}\\
    \Q &= \{\,(S[\pp_{S'}(i-1)+1 \dd i-1])^R,\,S[i \dd \nn_{S'}(i)-1])\,:\,i \in A\,,\,\lfloor i/2^j \rfloor =2m+1\,,\,m \in \mathbb{Z}\,\}.
  \end{align*}

Note that since $\P$ and $\Q$ are disjoint, in none of the tries do we align a fragment of $S'$ with itself. It is also easy to see that any two positions $r<t$ in $A$ will be in different sets at one of the copies of $\mathcal{T}(\mathcal{F})$. 

\subsection{Cross-Substring Queries}
Our algorithms for answering cross-substring queries for the LCS problem can be adapted effortlessly. One can think of this as creating two copies of $S'$ and applying the LCS algorithm as follows. The algorithm of~\cref{sec:fully.2} can be applied as is, since it guarantees that the starting position of the returned pair of substrings will not be the same.
As for the algorithm of~\cref{sec:fully.3}, one has only to disallow the aligned prefix-suffix to be starting at the same position at the two copies of $S'$; this can be done straightforwardly.

\subsection{Round-Up}
The discussion of the two subsections leads to the following result.

\begin{theorem}
  Fully dynamic longest repeat queries on a string of length up to $n$ can be answered in $\Ohtilde(n^{2/3})$ time per query, using $\Ohtilde(n)$ space, after $\Ohtilde(n)$-time preprocessing.
\end{theorem}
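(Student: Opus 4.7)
The plan is to mirror, almost step for step, the architecture used for fully dynamic LCS and then fuse the resulting pieces via the time slicing lemma (Lemma~\ref{lem:timeslice}). After $k$ edits I represent the current string as a $(2k+1)$-substring $S'=F_1\cdots F_{2k+1}$ of the original $S$, using Lemma~\ref{lem:ksub}. A longest repeat in $S'$ corresponds to two occurrences $S'[p\dd p+\ell-1]=S'[q\dd q+\ell-1]$ with $p\ne q$; I split by whether each of these occurrences avoids or crosses a boundary between consecutive $F_i$'s, producing (i) a no-boundary case, (ii) a one-sided boundary case, and (iii) a two-sided boundary case, exactly parallel to Sections~\ref{sec:fully.1}, \ref{sec:fully.2}, \ref{sec:fully.3}.

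For case (i) I use the inherently dynamic \emph{decremental} longest-repeat algorithm described in the preceding subsection: short repeats (length $\le d$) are tracked by a bounded-length data structure adapted from Lemma~\ref{lem:aux_approx}, where the per-length counters count distinct substrings having at least two surviving occurrences in $S$ rather than one occurrence in each of $S,T$; long repeats (length $\ge d$) are captured by anchoring them at a pair of positions of a $d$-cover and running the \problem\ machinery of Lemma~\ref{lem:problem}. The one subtlety here, already flagged in the subsection, is that to guarantee that the two anchor positions correspond to \emph{different} occurrences I take $\cO(\log n)$ copies of the trie, partitioning the anchor set into $\P$ and $\Q$ by the parity of $\lfloor i/2^j\rfloor$; this loses an extra $\log n$ factor, so the decremental cost remains $\Ohtilde(n^{2/3})$ per operation after $\Ohtilde(n)$ preprocessing, with the balance $d=\lfloor n^{2/3}\rfloor$ chosen exactly as in Lemma~\ref{lem:exactdec}.

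For cases (ii) and (iii) I reuse the cross-substring algorithms of Sections~\ref{sec:fully.2} and~\ref{sec:fully.3} by treating the problem as an LCS between two copies of $S'$, except that I must forbid aligning an occurrence with itself. In the one-sided case this is automatic because the fixed boundary in one copy is distinct from each candidate boundary in the other copy. In the two-sided case, when enumerating pairs of boundaries $(F_i,F_{i+1})$ and $(F_j,F_{j+1})$ and then processing the prefix-suffix pairs via Lemma~\ref{lem:borders} and Observation~\ref{obs:Pinfty}, I simply skip $i=j$ and only count alignments for which the two occurrences start at different positions, which can be read off from the closed-form maximizer $w^\ast$ of the piecewise linear function $g(w)$. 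The time per query stays $\Ohtilde(k^2)$, using the Las Vegas $\LCE{}$ structure of Lemma~\ref{lem:lce_rand}.

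Combining, the static data structure for a $k$-substring has construction time and space $\Ohtilde(n)$ and query time $q_n(k)=\Ohtilde(k^2)$. Plugging into Lemma~\ref{lem:timeslice} with $\kappa$ satisfying $\kappa^2 = (n/\kappa)$, i.e.\ $\kappa=\Theta(n^{1/3})$, yields worst-case dynamic query time $\Ohtilde(n^{2/3})$ within $\Ohtilde(n)$ space, proving the theorem. The main obstacle I anticipate is the bookkeeping in case (i): verifying that the parity-based partition into $\cO(\log n)$ tries really does separate every two distinct anchor positions of a valid long repeat without double-counting self-alignments, and that the decremental counter maintenance of Lemma~\ref{lem:aux_approx} still works when ``common to $S$ and $T$'' is replaced by ``occurring at least twice in $S$'' (one has to update the initialization of the array $A$ accordingly, replacing the pruned generalized suffix tree by the suffix tree restricted to internal nodes).
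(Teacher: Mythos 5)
Your overall route is the paper's: represent the edited string as a $k$-substring, split a longest repeat according to whether its two occurrences avoid or cross fragment boundaries, adapt the decremental machinery (per-length counters now counting substrings with at least two surviving occurrences, and $\cO(\log n)$ parity-partitioned copies of the trie so that $\P$ and $\Q$ are disjoint and every pair of anchors is separated in some copy), reuse the cross-substring LCS algorithms on two copies of $S'$, and finish with the time-slicing lemma at $\kappa=\Theta(n^{1/3})$. However, there is one concrete error in your two-sided case: you must \emph{not} skip $i=j$. Both occurrences of the longest repeat may cross the \emph{same} boundary, at different offsets. For example, after a deletion turning $\texttt{aabaa}$ into $S'=S[1\dd 2]\,S[4\dd 5]=\texttt{aaaa}$ (a $2$-substring with a single boundary), the longest repeat is $\texttt{aaa}$, with occurrences $[1,3]$ and $[2,4]$, both containing the boundary. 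This repeat is invisible to your case (i) (both occurrences are invalidated), to your case (ii) (neither occurrence avoids all boundaries), and, with $i=j$ excluded, to your case (iii) as well, because the minimizing pair of crossed boundaries in the argument of \cref{sec:fully.3} is forced to be $(i,i)$. Your algorithm would therefore report $\texttt{aa}$ instead of $\texttt{aaa}$.

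The fix, which is what the paper does, is to process the pairs with $i=j$ like all others: aligning a nonempty prefix of $F_{i+1}$ (in one copy) with an equal suffix of $F_i$ (in the other copy) automatically places the two extended occurrences at distinct starting positions, offset by the border length. The only alignment that must be excluded --- and this for \emph{all} pairs $(i,j)$, since even for $i\neq j$ the aligned prefix-suffix can occupy identical positions in the two copies --- is the degenerate one where the two occurrences coincide; within each arithmetic progression of border lengths this rules out at most one value of $w$, so the $\cO(1)$-piece maximization of $g(w)$ is unaffected. (A minor remark on case (ii): the correct reason it needs no modification is that there one occurrence crosses a boundary of $S'$ while the other lies entirely inside a single fragment $F_j$, so the two intervals can never coincide; your phrasing about a ``candidate boundary in the other copy'' does not quite capture this, though the conclusion is right.) With this correction your argument matches the paper's proof and yields the claimed $\Ohtilde(n^{2/3})$ query time, $\Ohtilde(n)$ space, and $\Ohtilde(n)$ preprocessing.
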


\section{Fully Dynamic Longest Palindrome Substring}\label{sec:pals}
A \emph{palindrome} is a string $U$ such that $U^R=U$. For a string $S$, by $\LSPal(S)$ let us denote a longest substring of $S$ that is a palindrome. We show a solution to the following auxiliary problem.

{\defDSproblem{$k$\textsc{-Substring LSPal}}{A string $S$ of length $n$}
{$\LSPal(S')$ where $S'=F_1 \ldots F_k$ is a $k$-substring of $S$}

The \emph{center} of a palindrome substring $S[i \dd j]$ is $\frac{i+j}{2}$. A palindrome substring $S[i \dd j]$ of $S$ is called \emph{maximal} if $i=1$, or $j=n$, or $S[i-1 \dd j+1]$ is not a palindrome. For short, we call maximal palindrome substrings MPSs. By $\M(S)$ we denote the set of all MPSs of $S$. For each integer or half-integer center between 1 and $n$ there is exactly one MPS with this center, so $|\M(S)| = 2n-1$. The set $\M(S)$ can be computed in $\Oh(n)$ time using Manacher's algorithm \cite{DBLP:journals/jacm/Manacher75} or $\LCE{}$ queries~\cite{Gusfield:1997:AST:262228}.

\subsection{Internal Queries}
In an internal \LSPal query we are to compute the longest palindrome substring of a given substring of $S$. In the following lemma we show that such queries can be reduced to 2d range maximum queries.
\begin{lemma}\label{lem:pal_internal}
  Let $S$ be a string of length $n$. After $\Oh(n\log^2 n)$-time and $\Oh(n \log n)$-space preprocessing, one can compute the \LSPal of a substring of $S$ and the longest prefix/suffix palindrome of a substring of $S$ in $\Oh(\log n)$ time.
\end{lemma}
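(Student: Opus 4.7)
The plan is to reduce internal palindrome queries to a constant number of two-dimensional orthogonal range maximum queries on point sets derived from the maximal palindrome substrings (MPSs) of $S$. The key fact, underlying both parts of the statement, is that every palindromic substring of $S$ is centered at the same center as a unique MPS and is obtained from that MPS by a symmetric truncation around its center. Therefore, for an MPS with center $c_i$ and (half-)integer radius $\rho_i$, the longest palindrome centered at $c_i$ that fits inside $S[a\dd b]$ has length $2\min(\rho_i,\,c_i-a,\,b-c_i)+1$ (up to the usual odd/even parity offset), and the \LSPal of $S[a\dd b]$ is obtained by maximizing this quantity over all $\cO(n)$ MPSs. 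The first step is to compute $\M(S)$ in $\Oh(n)$ time by Manacher's algorithm, and separate MPSs by parity so that the arithmetic is uniform within each group.

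For the internal \LSPal query on $S[a\dd b]$, I would split into three cases depending on which of $\rho_i$, $c_i-a$, $b-c_i$ attains the minimum. Each case becomes a two-dimensional orthogonal range maximum query over a set of $\cO(n)$ weighted points built once in the preprocessing:
\begin{itemize}
\item Case A ($\rho_i$ minimal, i.e., the MPS already lies inside $[a,b]$): insert point $(c_i-\rho_i,\,c_i+\rho_i)$ with weight $\rho_i$; query $[a,\infty)\times(-\infty,b]$.
\item Case B ($c_i-a$ minimal, i.e., the left boundary cuts): insert $(c_i,\,c_i-\rho_i)$ with weight $c_i$; query $[a,(a+b)/2]\times(-\infty,a]$ and subtract $a$.
\item Case C ($b-c_i$ minimal): symmetric to B, with points $(c_i,\,c_i+\rho_i)$ of weight $-c_i$ and query $[(a+b)/2,b]\times[b,\infty)$.
\end{itemize}
Taking the maximum of the three answers (for both parity classes) yields the length and a realization of the \LSPal. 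The point that attains the maximum gives the MPS from which the answer is truncated, so the substring itself can also be reported.

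For the longest prefix palindrome of $S[a\dd b]$, observe that a palindrome starting exactly at $a$ and centered at $c_i$ has length $2(c_i-a)+1$ (odd case) and exists iff the MPS at $c_i$ reaches $a$, i.e., $c_i-\rho_i\le a$, together with $c_i\ge a$ and $2c_i-a\le b$. Maximizing $c_i$ under these constraints is a single 2d range maximum query over points $(c_i,\,c_i-\rho_i)$ with weight $c_i$, in the rectangle $[a,(a+b)/2]\times(-\infty,a]$; the even case is analogous. Longest suffix palindrome queries are handled by the same construction on the reversed string $S^R$, or equivalently by a symmetric point set.

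For all these queries we rely on Lemma~\ref{lem:Ddrmq} in the $D=2$ regime, which provides $\Oh(\log n)$ query time with $\Oh(n\log n)$ space and $\Oh(n\log^2 n)$ construction, exactly matching the bounds claimed. Since each query invokes a constant number of 2d RMQs and $\Oh(1)$ constant-time arithmetic, the per-query cost is $\Oh(\log n)$ and the overall preprocessing stays within $\Oh(n\log^2 n)$ time and $\Oh(n\log n)$ space. The main obstacle is the bookkeeping: the three-way case split, combined with the odd/even parity of palindromes and the integer/half-integer nature of centers and radii, has to be carried out so that the three rectangle queries together cover every candidate MPS exactly once without missing boundary situations (e.g., ties between the three minimands, MPSs whose center is outside $[a,b]$, and palindromes of length~$0$ or~$1$). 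Once the case analysis is laid out cleanly for both parities, the reduction to $\Oh(1)$ 2d RMQs is routine.
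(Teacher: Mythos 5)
Your proposal is correct and follows essentially the same route as the paper: reduce the query to a constant number of 2d orthogonal range maximum queries over points derived from the maximal palindromic substrings, splitting into the case where the optimal MPS lies entirely inside the query range (your Case A, the paper's first grid) and the cases where it is cut by the left or right boundary, which is exactly the paper's reduction to longest prefix/suffix palindrome queries (your Cases B/C, the paper's second grid with center-based coordinates). The only differences are cosmetic — center/radius coordinates with explicit parity separation instead of the paper's endpoint coordinates — and the complexity accounting via Lemma~\ref{lem:Ddrmq} matches the claimed bounds.
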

\begin{proof}
  $\LSPal(S[i \dd j])$ is a substring of an MPS of $S$ with the same center. We consider two cases depending on whether the \LSPal is equal to the MPS.
  
  If this is the case, the MPS is a substring of $S[i \dd j]$. We create a 2d grid and for each $S[a \dd b] \in \M(S)$ we create a point $(a,b)$ with weight $b-a+1$. The sought MPS can be found by an RMQ for the rectangle $[i,\infty) \times (-\infty,j]$.
  
  In the opposite case, $\LSPal(S[i \dd j])$ is a prefix or a suffix of $S[i \dd j]$. We consider the case that it is a prefix of $S[i \dd j]$; the other case is symmetric.
  The longest prefix palindrome of $S[i \dd j]$ can be obtained by trimming to the interval of positions $[i \dd j]$, the MPS with the rightmost center, among the ones with starting position smaller than $i$ and center at most $(i+j)/2$.
  To this end, we create a new 2d grid. For each $S[a \dd b] \in \M(S)$, we create a point $(a,a+b)$ with weight $a+b$. The answer to an RMQ on the rectangle $(\infty,i] \times (-\infty,i+j]$ is twice the center of the desired MPS of $S$.
  
  In either case, we use Lemma~\ref{lem:Ddrmq} to answer a 2d RMQ; the complexity follows.
\end{proof}

If the answer to $k$\textsc{-Substring \LSPal} contains none of the changed positions $a_1,\ldots,a_k$, it can be found by asking $k+1$ internal \LSPal queries.

\subsection{Cross-Substring Queries}\label{sss:cont}
Let us assume that the boundary between $F_{i-1}$ and $F_i$ is the closest one to the center of \LSPal. In what follows, we consider the case that it lies to the left of the center. Then, the palindrome cut to the positions in $F_i$ a prefix palindrome of $F_i$. The opposite case is symmetric. In total, this gives rise to $2k$ cases that need to be checked.

The structure of palindromes being prefixes of a string has been well studied. It is known that a palindrome being a prefix of another palindrome is also its border, and a border of a palindrome is always a palindrome; see~\cite{DBLP:journals/jda/FiciGKK14}. Hence, the palindromes being prefixes of $F_i$ are the borders of the longest such palindrome, further denoted by $U_0$. We are interested in the borders of $U_0$.

The palindrome $U_0$ can be computed by Lemma~\ref{lem:pal_internal}. By Lemma~\ref{lem:borders}, the set of border lengths of $U_0$ can be divided into $\Oh(\log n)$ arithmetic sequences. Each such sequence will be treated separately. Assume first that it contains at most two elements. Let $f_i$ denote the starting position of $F_i$ in $S'$, for $i=1,\ldots,k$. Then for each element $u$ representing a palindrome $U$, the longest palindrome having the same center as $U$ in $S'$ has the length
\[u+2\cdot\lcp(S'_{(f_i+u)},((S')^{(f_i-1)})^R).\]
This \LCE{} query can be answered in $\Oh(\log n)$ time using~\cref{lem:lce_rand} for $S\#S^R$.

Now assume that the arithmetic sequence has more than two elements. Let $p$ be the difference of the arithmetic sequence, $\ell$ be its length, $u$ be its maximum element, and
\[X=S'_{(f_i+u)},\quad Y=((S')^{(f_i-1)})^R,\quad P=S'[f_i+u-p+2 \dd f_i+u-1].\]
Then the longest palindrome having the same center as an element of this sequence has length
\[2\cdot \max_{w=0}^{\ell-1} \{\,\lcp(P^wX,Y)+\tfrac12(u-wp)\,\}.\]
By Observation~\ref{obs:Pinfty}, this formula can be evaluated in $\Oh(\log^2 n)$ time.

Over all arithmetic sequences, we obtain $\Oh(\log^3 n)$ query time.

\subsection{Round-Up}
The results of the two subsections can be combined into an algorithm for $k$\textsc{-Substring LSPal}.

\begin{lemma}
  $k$\textsc{-Substring LSPal} queries can be answered in $\Oh(k \log^3 n)$ time after $\Oh(n\log^2 n)$-time and $\Oh(n \log n)$-space preprocessing.
\end{lemma}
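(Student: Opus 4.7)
The plan is to combine the internal \LSPal{} machinery of Lemma~\ref{lem:pal_internal} with the cross-substring procedure of Subsection~\ref{sss:cont}, and argue that these two families of candidates exhaust all possible answers. During preprocessing, I would build (i) the 2d RMQ structures of Lemma~\ref{lem:pal_internal} over $\M(S)$ in $\Oh(n\log^2 n)$ time and $\Oh(n\log n)$ space, (ii) the structure of Lemma~\ref{lem:lce_rand} for $S\#S^R$ supporting $\Oh(\log n)$-time \LCE{} queries on any $\cO(n)$-substring, and (iii) the Prefix-Suffix data structure of Lemma~\ref{lem:borders}(b) in $\Oh(n)$ time. The dominant term is $\Oh(n\log^2 n)$ with $\Oh(n\log n)$ space, matching the claim.

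For a query on $S'=F_1 \ldots F_k$, I would first build $S'$ from its list representation in $\Oh(k)$ time, and then partition the candidate MPSs of $S'$ by whether they cross a substring boundary. In the non-crossing case the sought palindrome lies inside a single $F_i$: if $F_i$ is a single character this contributes at most length $1$; otherwise $F_i$ is a substring of $S$, and one internal \LSPal{} query of Lemma~\ref{lem:pal_internal} returns the longest palindrome contained in it in $\Oh(\log n)$ time. Summing over $i$ gives $\Oh(k\log n)$, absorbed by the final bound.

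For the crossing case I would iterate over all $2k$ choices of a boundary and of the side of the palindrome's center on which it lies, reducing each to the situation analysed in Subsection~\ref{sss:cont}, namely extending a prefix palindrome of $F_i$ (or, symmetrically, a suffix palindrome of $F_{i-1}$) by matching characters into the adjacent substrings. By the palindrome-is-border property, all candidates form the border set of a single palindrome $U_0$, which I obtain in $\Oh(\log n)$ time from Lemma~\ref{lem:pal_internal}. Lemma~\ref{lem:borders} then groups these border lengths into $\Oh(\log n)$ arithmetic progressions, each handled in $\Oh(\log^2 n)$ time: short progressions by $\Oh(1)$ calls to the dynamic \LCE{} structure of Lemma~\ref{lem:lce_rand} (each taking $\Oh(\log n)$), and long progressions by evaluating the piecewise-linear function $g(w)$ of Observation~\ref{obs:Pinfty} via $\Oh(1)$ such \LCE{} queries. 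This gives $\Oh(\log^3 n)$ per boundary side and $\Oh(k\log^3 n)$ in total.

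The main obstacle is correctness of the decomposition, specifically justifying that designating the boundary closest to the palindrome's center as the reference boundary really captures every crossing candidate. Once that is checked, the winning answer is simply the maximum over the at most $k$ internal candidates and the $2k$ cross-substring candidates, yielding the stated $\Oh(k\log^3 n)$ query time.
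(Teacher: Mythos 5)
Your proposal follows the paper's proof essentially verbatim: the same preprocessing (Lemma~\ref{lem:pal_internal}, Lemma~\ref{lem:lce_rand} on $S\#S^R$, Lemma~\ref{lem:borders}), the same split into internal and cross-boundary candidates, and the same treatment of border lengths in $\Oh(\log n)$ arithmetic progressions via Observation~\ref{obs:Pinfty}, giving the same $\Oh(k\log^3 n)$ bound. The one step you leave open --- that anchoring at the boundary closest to the palindrome's center captures every crossing candidate --- is exactly the paper's (tersely stated) correctness claim, and it follows from a short symmetry argument: let the crossing LSPal be $S'[p\dd q]$ with center $c$, and let the closest boundary to $c$ be the start $f_i$ of $F_i$ with $f_i\le c$ (the case of a boundary right of the center is symmetric, and ties are covered by that case). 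Since the next boundary $f_{i+1}$ is not closer to $c$, the mirror position of $f_i$ about $c$ satisfies $2c-f_i<f_{i+1}$, so the central sub-palindrome $S'[f_i\dd 2c-f_i]$ (a palindrome, being symmetric about $c$ inside $S'[p\dd q]$) lies entirely inside $F_i$ and starts at its first position; hence it is a prefix palindrome of $F_i$, i.e.\ $U_0$ or one of its borders. Expanding around that border's center with the formula $u+2\cdot\lcp\bigl(S'_{(f_i+u)},((S')^{(f_i-1)})^R\bigr)$ then recovers a palindrome at least as long as the LSPal, so the maximum over your $2k$ boundary-side cases (plus the internal candidates, and the trivial candidate centered exactly at a boundary, corresponding to $u=0$) is correct.
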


Using the general scheme, we obtain a solution to fully dynamic longest palindrome substring problem.

\begin{theorem}
  Fully dynamic longest palindrome substring queries on a string of length up to $n$ can be answered in $\cO(\sqrt{n}\log^{2.5} n)$ time per query, using $\cO(n \log n)$ space, after $\cO(n\log^2 n)$-time preprocessing.
\end{theorem}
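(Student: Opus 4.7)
The plan is to plug the $k$-Substring LSPal lemma directly into the time-slicing scheme of Lemma~\ref{lem:timeslice}, exactly as was done for one-sided fully dynamic LCS in Theorem~\ref{thm:1sided}. From the preceding lemma we have a data structure with preprocessing time $t_n = \Oh(n\log^2 n)$, space $s_n = \Oh(n\log n)$, and query time $q_n(k) = \Oh(k\log^3 n)$ on a $k$-substring of the input string. Since $q_n$ is linear in $k$, it is continuously non-decreasing and satisfies $q_n(k)\ge k$, so the hypotheses of Lemma~\ref{lem:timeslice} are met.

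Next I would solve for the balancing parameter $\kappa$ from the equation $q_n(\kappa)=(t_n+n)/\kappa$, i.e.\
\[
\kappa\log^3 n \;=\; \frac{n\log^2 n}{\kappa},
\]
which gives $\kappa^2 = n/\log n$, hence $\kappa = \Theta(\sqrt{n/\log n})$. Substituting back, the per-query worst-case time is
\[
\Oh(q_n(\kappa)) \;=\; \Oh\!\left(\sqrt{n/\log n}\cdot \log^3 n\right) \;=\; \Oh(\sqrt{n}\,\log^{2.5} n),
\]
while the space remains $\Oh(s_n)=\Oh(n\log n)$ and the preprocessing stays at $\Oh(t_n)=\Oh(n\log^2 n)$.

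Finally, I would observe that the reduction from the fully dynamic problem to the $k$-substring problem follows the general template described in the introduction: each edit is applied to the currently maintained $k$-substring representation of $S$ (using Lemma~\ref{lem:ksub} in $\Oh(k)$ time per edit), and after $\kappa$ edits we rebuild the static data structure from scratch, amortizing its construction into the next batch of queries as spelled out in the proof of Lemma~\ref{lem:timeslice}. There is no real obstacle here, since Lemma~\ref{lem:timeslice} already handles the deamortization and the extra $\Oh(\kappa)$ per-query cost for maintaining the $k$-substring is dominated by $q_n(\kappa)$. Combining these observations yields the stated bounds and completes the proof of the theorem.
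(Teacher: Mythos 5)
Your proposal is correct and matches the paper's own (implicit) argument: the paper likewise plugs the $k$\textsc{-Substring LSPal} bound $q_n(k)=\Oh(k\log^3 n)$, $t_n=\Oh(n\log^2 n)$, $s_n=\Oh(n\log n)$ into the time-slicing scheme of Lemma~\ref{lem:timeslice}. Your balancing computation $\kappa=\Theta(\sqrt{n/\log n})$ and the resulting $\Oh(\sqrt{n}\log^{2.5} n)$ query time, $\Oh(n\log n)$ space, and $\Oh(n\log^2 n)$ preprocessing are exactly the intended derivation.
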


\section{Fully Dynamic Longest Lyndon Substring}\label{sec:lyndon}
\newcommand{\LF}{\mathit{LF}}
\newcommand{\RLF}{\mathit{LFR}}
\newcommand{\LTree}{\mathit{LTree}}
\newcommand{\rc}{\mathit{rc}}

Let us recall that string $S$ is smaller in the lexicographic order than string $T$, written as $S<T$, if $S$ is a proper prefix of $T$ or $S^{(i)}=T^{(i)}$ and $S[i+1]<T[i+1]$. Let us also recall that a string $S$ is a Lyndon string if and only if $S<S_{(j)}$ for all $j=2,\dots,|S|$ and that a Lyndon factorization of a string $S$, denoted as $\LF_S$, is the unique way of writing $S$ as $L_1 \ldots L_p$ where $L_1,\ldots,L_p$ are Lyndon strings (called \emph{factors}) that satisfy $L_1 \ge L_2 \ge \dots \ge L_p$ (if $S$ is a Lyndon string, then its Lyndon factorization is composed of $S$ itself).

The following fact was shown as Lemma~3 in Urabe et al.~\cite{DBLP:conf/cpm/UrabeNIBT18}.

\begin{lemma}[\cite{DBLP:conf/cpm/UrabeNIBT18}]
  The longest Lyndon substring of a string is the longest factor in its Lyndon factorization.
\end{lemma}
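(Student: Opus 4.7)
The plan is to prove that every Lyndon substring of $S$ has length at most $\max_i |L_i|$; combined with the observation that each factor $L_i$ is itself a Lyndon substring of $S$, this yields the lemma. Fix a Lyndon substring $W = S[t\dd t+|W|-1]$. Then $W$ is a Lyndon prefix of the suffix $S_{(t)}$, and since the first factor of a Lyndon factorization is always the longest Lyndon prefix of the string (a standard consequence of Duval's analysis), it suffices to show that the first factor of $\LF_{S_{(t)}}$ has length at most $\max_i |L_i|$.

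To this end, I would exhibit $\LF_{S_{(t)}}$ explicitly. Let $L_a$ be the factor of $\LF_S$ containing position $t$. If $t$ coincides with the starting position of $L_a$, then $S_{(t)} = L_a L_{a+1} \cdots L_p$ is already a non-increasing concatenation of Lyndon strings and hence equals $\LF_{S_{(t)}}$, with first factor $L_a$. Otherwise let $U$ be the proper non-empty suffix of $L_a$ starting at $t$, and let $M_1 \ge \cdots \ge M_q$ be its Lyndon factorization. I claim that $\LF_{S_{(t)}}$ is the concatenation $M_1, \ldots, M_q, L_{a+1}, \ldots, L_p$. Each block is Lyndon, and the only nontrivial inequality in the resulting non-increasing chain is the bridging one $M_q \ge L_{a+1}$: since $M_q$ is a suffix of $U$, it is a proper non-empty suffix of the Lyndon string $L_a$, so $M_q > L_a \ge L_{a+1}$. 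By uniqueness of the Lyndon factorization, this concatenation is $\LF_{S_{(t)}}$, and its first factor $M_1$ satisfies $|M_1| \le |U| < |L_a| \le \max_i |L_i|$, which finishes the argument.

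The only delicate ingredient is the bridging inequality $M_q > L_a$, which is a one-line consequence of $L_a$ being Lyndon (any proper non-empty suffix of a Lyndon string is lexicographically strictly larger than the string itself). The rest of the proof only invokes two standard facts about Lyndon factorizations — uniqueness, and the identification of the first factor with the longest Lyndon prefix — which I would use as black boxes; I do not foresee further obstacles.
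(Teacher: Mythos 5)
Your proof is correct. Note that the paper does not prove this statement at all --- it is imported verbatim as Lemma~3 of Urabe et al.\ --- so there is no in-paper argument to compare against; your derivation stands on its own and is sound. Each factor $L_i$ is itself a Lyndon substring, so it indeed suffices to bound every Lyndon substring $W$ starting at a position $t$ by $\max_i |L_i|$, and since $W$ is a Lyndon prefix of $S_{(t)}$, your reduction to bounding the first factor of the Lyndon factorization of $S_{(t)}$ is valid: the fact that the first factor is the longest Lyndon prefix is standard and has a one-line verification (if a Lyndon prefix $P$ strictly extended $L_1$, write $P = L_1 \cdots L_{j-1} R$ with $R$ a non-empty prefix of $L_j$, $j \ge 2$; then $P < R \le L_j \le L_1 < P$, a contradiction). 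Your bridging inequality is exactly the right observation: $M_q$ is a proper non-empty suffix of the Lyndon string $L_a$, hence $M_q > L_a \ge L_{a+1}$, so $M_1, \ldots, M_q, L_{a+1}, \ldots, L_p$ is a non-increasing chain of Lyndon strings concatenating to $S_{(t)}$ and, by uniqueness, is its Lyndon factorization (the case where $L_a$ is the last factor, in which the bridging inequality is vacuous, is covered as well). This gives $|W| \le |M_1| < |L_a|$ in the interior case and $|W| \le |L_a|$ when $t$ starts a factor, completing the argument; the only inputs are the two standard facts you cite, both of which are legitimate black boxes.
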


By a \emph{representation} of a Lyndon factorization of $S$, denoted as $\RLF_S$, we mean a data structure that supports the following queries on $\LF_S$ in $\Ohtilde(1)$ time:
\begin{itemize}
  \item \texttt{longest}: computing the longest factor in $\LF_S$
  \item \texttt{count}: computing the number of factors in $\LF_S$
  \item \texttt{select}$(i)$: computing the $i$th factor in $\LF_S$.
\end{itemize}
More precisely, our representation will support the first two types of queries in $\Oh(1)$ time and the third type in $\Oh(\log^2 n)$ time. In this section we tackle the following problem.

{\defDSproblem{$k$\textsc{-Substring Lyndon Factorization}}{A string $S$ of length $n$}
{$\RLF_{S'}$ where $S'$ is a $k$-substring of $S$}

Let us start with the definition of a Lyndon tree of a Lyndon string $W$~\cite{DBLP:journals/jct/Barcelo90}. If $W$ is not a Lyndon string, the tree is constructed for $\$W$, where $\$$ is a special character that is smaller than all the characters in $W$. An example of a Lyndon tree can be found in Fig.~\ref{fig:Ltree} below.

\begin{definition}
  The \emph{standard factorization} of a Lyndon string $S$ is $(U,V)$ if $S=UV$ and $V$ is the longest proper suffix of $S$ that is a Lyndon string. In this case, both $U$ and $V$ are Lyndon strings. The \emph{Lyndon tree} of a Lyndon string $S$ is the full binary tree defined by recursive standard factorization of $S$. More precisely, $S$ is the root; if $|S|>1$, its left child is the root of the Lyndon tree of $U$, and its right child is the root of the Lyndon tree of $V$.
\end{definition}

\subsection{Internal Queries}
If $v$ is a node of a binary tree, $u$ is its ancestor, $u \ne v$, and $w$ is the right child of $u$ that is not an ancestor of $v$, then we call $w$ a \emph{right uncle} of $v$. By $U(v)$ we denote the list of all right uncles of $v$ in bottom-up order. The following fact was shown in~\cite{DBLP:conf/cpm/UrabeNIBT18}.

\begin{lemma}[Lemma~12 in \cite{DBLP:conf/cpm/UrabeNIBT18}]\label{lem:sufLF}
  $\LF_{S_{(j)}}=U(v)$, where $v$ is the leaf of $\LTree(S)$ that corresponds to $S[j-1]$.
\end{lemma}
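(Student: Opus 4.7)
I would prove the lemma by structural induction on the Lyndon tree, strengthening the statement as follows: for every node $u$ of $\LTree(S)$ with Lyndon string $X$ and every leaf $v$ of the subtree rooted at $u$ that corresponds to position $j$ of $X$ (with $j \le |X|$), the right uncles of $v$ within $u$'s subtree, read bottom-up, form $\LF(X_{(j+1)})$ (taken to be empty when $j = |X|$). Applying this to the root recovers the lemma. The base case (a leaf, or a rightmost leaf) is immediate since there are no right uncles and the corresponding suffix is empty.

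For the inductive step, let $X = UV$ be the standard factorization at $u$, with left child carrying $U$ and right child carrying $V$. Case (i): if $v$ lies in the right subtree, no new right uncle is added when moving from that subtree up to $u$ (since $v$ is a right descendant), so the inductive hypothesis on the right subtree immediately yields $U(v) = \LF(V_{(j-|U|+1)}) = \LF(X_{(j+1)})$. Case (ii): if $v$ is the rightmost leaf of the left subtree ($j = |U|$), then the inductive hypothesis on $U$ gives an empty list within $U$, and moving up to $u$ contributes exactly $V$ as a right uncle, matching $\LF(V) = (V)$ since $V$ is Lyndon. Case (iii): if $v$ lies strictly inside the left subtree ($j < |U|$), the inductive hypothesis gives that the right uncles inside $U$ form $\LF(U_{(j+1)})$, and moving up to $u$ appends $V$ to this list.

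The crux of the proof is to justify the equality $\LF(U_{(j+1)}) \cdot [V] = \LF(U_{(j+1)} V) = \LF(X_{(j+1)})$ arising in Case (iii). By uniqueness of the Lyndon factorization, this reduces to showing that the concatenated sequence is non-increasing, i.e.\ that the last factor of $\LF(U_{(j+1)})$ is at least $V$ in lex order. Using Duval's standard property that the last factor of a Lyndon factorization equals the lex-smallest suffix, the goal further reduces to proving the purely combinatorial claim: $U_{(k)} \ge V$ for every $k \in [j+1, |U|]$.

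The main obstacle is this last combinatorial claim. I would establish it from the known fact that in the standard factorization $UV$ of a Lyndon string, $V$ is the lex-smallest proper suffix of $UV$; in particular $V \le U_{(k)} V$ for all $k \in [2,|U|]$. Supposing for contradiction that $U_{(k)} < V$ for some such $k$, one splits on whether $U_{(k)}$ is a proper prefix of $V$ or $U_{(k)}$ and $V$ first disagree at some interior position. In the second case, the disagreement is inherited by $U_{(k)} V$, immediately giving $U_{(k)} V < V$, a contradiction. In the first case, writing $V = U_{(k)} W$ and using that $V < W$ (since $V$ is Lyndon), one unfolds $U_{(k)} V = U_{(k)}^2 W$ and iterates the comparison with $V \le U_{(k)} V$ to force $V \le W$ with a strict inequality somewhere, again a contradiction. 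This case analysis closes the induction and yields the lemma.
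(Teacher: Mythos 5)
The paper never proves this lemma itself -- it is imported verbatim as Lemma~12 of Urabe et al. -- so there is no in-paper argument to compare against, and your proof has to stand on its own; it essentially does. The strengthened inductive statement (for a node labelled $X$ and the leaf at position $j$ of $X$, the right uncles inside that node's subtree, read bottom-up, form the Lyndon factorization of $X_{(j+1)}$) is the right generalization, cases (i) and (ii) are immediate as you say, and you isolate the crux correctly: in case (iii), by uniqueness of the Lyndon factorization and the fact that the last Lyndon factor of a word is its lexicographically least suffix, everything reduces to showing $U_{(k)} \ge V$ for every proper suffix $U_{(k)}$ of $U$, given that $V$ is the least proper suffix of the Lyndon word $UV$. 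Your two-way split (first mismatch inside $V$ versus $U_{(k)}$ a proper prefix of $V$) does close this, but the prefix subcase is misstated: writing $V = U_{(k)}W$, from $V \le U_{(k)}V = U_{(k)}U_{(k)}W$ one cancels the common prefix $U_{(k)}$ and obtains $W \le U_{(k)}W = V$, which contradicts $V < W$ (a consequence of $V$ being Lyndon and $W$ a proper nonempty suffix of it); no iteration is needed, and the conclusion you announce, ``force $V \le W$,'' is both not what the comparison yields and not contradictory with $V < W$. With that one-line fix the induction is complete. Two small housekeeping points: for the root application recall the paper's convention that the tree is built for $\$S$ when $S$ itself is not Lyndon (your argument applies unchanged since $\$S$ is Lyndon), and you are using two classical facts -- the right factor of the standard factorization is the least proper suffix, and the last factor of a Lyndon factorization is the least suffix -- which deserve explicit citation even though they are standard.
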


We say that $S$ is a \emph{pre-Lyndon string} if it is a prefix of a Lyndon string. The following lemma is a well-known property of Lyndon strings; see Lemma~10 in~\cite{DBLP:conf/cpm/UrabeNIBT18} or the book~\cite{Knuth:2005:ACP:1121684}.
\begin{lemma}[\cite{Knuth:2005:ACP:1121684,DBLP:conf/cpm/UrabeNIBT18}]\label{lem:prefLF}
  If $S$ is a pre-Lyndon string, then there exists a unique Lyndon string $X$ such that $S=X^kX'$ where $X'$ is a proper prefix of $X$ and $k \ge 1$. Moreover, $\LF_S=\underbrace{X,X,\ldots,X}_{k\text{ times}},\LF_{X'}$.
\end{lemma}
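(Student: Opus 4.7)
The plan is to split the argument into three pieces: establish the existence of the decomposition (with $X$ Lyndon) by a Duval-style induction on $|S|$, then prove uniqueness of $X$ via primitivity, and finally derive the stated Lyndon factorization of $S$.

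For existence, I will induct on $|S|$. The base case $|S|=1$ gives $X:=S$, $k:=1$, $X':=\varepsilon$. For $|S|\ge 2$, write $S=\tilde S\,c$ with $c$ the last character. As $\tilde S$ is a prefix of the Lyndon extension of $S$, it is pre-Lyndon, so by induction $\tilde S=X_0^{k_0}X_0'$ with $X_0$ Lyndon and $X_0'$ a proper prefix of $X_0$. Let $a=X_0[|X_0'|+1]$, the character continuing the periodic pattern. Three cases arise: (i) if $c=a$, extend $X_0'$ by $c$, possibly completing a fresh copy of $X_0$ and incrementing $k_0$, keeping $X:=X_0$; (ii) if $c>a$, the whole $S$ is Lyndon and I take $X:=S$, $k:=1$, $X':=\varepsilon$; (iii) if $c<a$, $S$ cannot be pre-Lyndon, so this case does not arise --- indeed, in any Lyndon extension $Sw$ the suffix $X_0'cw$ agrees with the prefix $X_0$ of $Sw$ on their first $|X_0'|$ characters but then has $c<a$, giving $X_0'cw<Sw$ and violating that $Sw$ is Lyndon. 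I expect case (ii) to be the main technical obstacle: to show directly that the merged word $X_0^{k_0}X_0'c$ is Lyndon, I will compare $S$ with each of its proper suffixes by case analysis on the position of the suffix (lying inside a single $X_0$ block vs.\ straddling blocks, or starting inside the trailing $X_0'c$), invoking that $X_0$ is Lyndon together with $c>a$.

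For uniqueness, suppose $S=Y^mY''$ is another decomposition of the required form with $|X|\le |Y|$ (without loss of generality). Since $Y$ is a prefix of $S$ and hence of $X^\infty$, I may write $Y=X^jZ$ for some $j\ge 0$ and $Z$ a proper prefix of $X$; the condition $|Y|\ge |X|$ forces $j\ge 1$. If $Z\ne\varepsilon$, then $Z$ is a proper suffix of $Y$ and also a proper prefix of $Y$ (because the leading $X$ of $Y$ starts with $Z$), so $Z<Y$ in lex order, contradicting the Lyndon property of $Y$. If $Z=\varepsilon$ and $j\ge 2$, then $Y=X^j$ is a nontrivial power, hence not primitive, again contradicting $Y$ Lyndon. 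So $j=1$ and $Z=\varepsilon$, giving $Y=X$ as required.

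Finally, to derive $\LF_S=X^k\cdot\LF_{X'}$, I write $\LF_{X'}=M_1\cdots M_q$ (with $q=0$ if $X'=\varepsilon$) and observe that $M_1$ is a Lyndon prefix of $X'$, hence a proper prefix of the Lyndon string $X$, so $M_1<X$ in lex order. Combined with $M_1\ge\cdots\ge M_q$ from the definition of $\LF_{X'}$, the sequence $X,\ldots,X,M_1,\ldots,M_q$ is a non-increasing concatenation of Lyndon strings whose product is $S$; by uniqueness of the Lyndon factorization it must coincide with $\LF_S$.
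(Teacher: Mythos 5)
Your proposal is correct, but there is nothing in the paper to compare it against line by line: the paper does not prove Lemma~\ref{lem:prefLF} at all, it imports it as a known fact from Knuth and from Lemma~10 of Urabe et al.\ --- it is essentially the correctness invariant of Duval's Lyndon-factorization algorithm. What you supply is the standard self-contained argument: the existence part is exactly Duval's incremental step (append the last letter $c$ and branch on $c=a$, $c>a$, $c<a$), the uniqueness part is the usual border/primitivity argument ($Z\neq\varepsilon$ would be a proper prefix and proper suffix, hence a border, of the Lyndon word $Y$, and $Y=X^j$ with $j\ge 2$ would be a nontrivial power), and the factorization formula follows from $M_1<X$ (a proper prefix of a Lyndon word is lexicographically smaller) together with uniqueness in the Chen--Fox--Lyndon theorem, consistent with the paper's definition of $\LF_S$. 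The only place you leave a plan rather than a proof is the case $c>a$, i.e.\ that $X_0^{k_0}X_0'c$ is Lyndon; your suffix-by-suffix case analysis does go through, but be aware that the delicate subcase is a suffix of the form $tc$ where $t$ is a suffix of the trailing $X_0'$ that happens also to be a prefix of $X_0$ (a border of $X_0'$): there you must use that the corresponding full suffix of $X_0$ is strictly greater than $X_0$ and, since Lyndon words are unbordered, is not a prefix of it, which forces the letter of $S$ following this occurrence of $t$ to be at most $a$ and hence less than $c$; merely invoking ``$X_0$ Lyndon and $c>a$'' glosses over this alignment argument. With that detail filled in, your proof is complete, and it buys a fully self-contained justification where the paper offers only a citation.
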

In this case $\LF_S$ can also be represented more efficiently in a \emph{compact} form  where the first part of the representation is simply written as $(X)^k$. The string $X$ from the lemma also satisfies the following property.
\begin{observation}\label{obs:prefLF}
  $|X|$ is the shortest period of $S$.
\end{observation}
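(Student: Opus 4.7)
The plan is to split the claim into two parts: first verifying that $|X|$ is indeed a period of $S$, and then showing that no smaller period is possible. The first part is immediate from the decomposition $S = X^k X'$ with $X'$ a proper prefix of $X$: aligning consecutive copies of $X$ inside $S$ gives $S[i] = S[i+|X|]$ for every $i \le |S| - |X|$, so $|X|$ is a period of $S$.

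For minimality, I would argue by contradiction. Suppose that $S$ has a period $p$ with $1 \le p < |X|$. Since $k \ge 1$, the string $X$ occurs as a prefix of $S$ and $|X| \le |S|$, so the periodicity of $S$ restricts to $X$: for every $i \in [1, |X|-p]$ one has $X[i] = S[i] = S[i+p] = X[i+p]$. Hence $p$ is also a period of $X$.

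The main (and only nontrivial) step is then a purely combinatorial lemma about Lyndon strings: a Lyndon string admits no period strictly smaller than its length. I would prove this in one line: if $X$ had a period $p < |X|$, then $X$ would have a nonempty proper border $B$ of length $|X| - p$, equal both to the prefix $X[1 \dd |X|-p]$ and to the suffix $X_{(p+1)}$. On the one hand, $B$ is a proper prefix of $X$, so $B < X$ in the lexicographic order. On the other hand, $B = X_{(p+1)}$ with $p+1 \ge 2$, so the Lyndon property forces $X < B$. The two inequalities contradict each other, and so no such period $p$ exists. This contradicts $p$ being a period of $X$ obtained in the previous paragraph, and the observation follows.

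I do not expect any real obstacle: the argument relies only on the defining inequality of Lyndon strings and on the elementary equivalence between periods and borders, both of which are already invoked in Section~\ref{sec:prel}.
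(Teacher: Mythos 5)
Your proof is correct and follows essentially the same route as the paper: argue that a shorter period of $S$ would restrict to a period of $X$ (since $X$ is a prefix of $S$), yielding a non-empty proper border of $X$, which contradicts the Lyndon property via the prefix/suffix comparison. You simply spell out the details that the paper's one-line proof leaves implicit.
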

\begin{proof}
  Clearly $|X|$ is a period of $S$. If it was not the shortest period, then $X$ would have a non-trivial period, hence a proper non-empty border. This contradicts the definition of a Lyndon string.
\end{proof}

The above properties are sufficient to determine Lyndon factorizations of prefixes and suffixes of a string $S$ according to~\cite{DBLP:conf/cpm/UrabeNIBT18} as follows:
\begin{itemize}
  \item To compute $\LF_{S^{(j)}}$, take the minimal number of prefix factors in $\LF_S$ that cover $S^{(j)}$, trim the last of these factors accordingly, compute the Lyndon factorization of the trimmed factor using Lemma~\ref{lem:prefLF}, and append it to the previous factors.
  \item To compute $\LF_{S_{(j)}}$, take the list of right uncles of the leaf $S[j-1]$ as shown in Lemma~\ref{lem:sufLF}.
\end{itemize}
We are now ready to describe $\LF_{S[i \dd j]}$. Let $v_1$ and $v_2$ be the leaves of $\LTree(S)$ that correspond to $S[i-1]$ and $S[j]$, respectively, $w$ be their lowest common ancestor (LCA), and $u$ be its right child. Then $\LF_{S[i \dd j]}$ is determined by taking the list of right uncles $U(v_1)$ up to $u$, trimming the factor in $u$ up to position $j$, and computing the Lyndon factorization of the trimmed factor according to Lemma~\ref{lem:prefLF}.

\begin{example}
We consider the Lyndon string $S=\texttt{\$eaccbcebcdbce}$, whose Lyndon tree is presented in~\cref{fig:Ltree}.
Let us suppose that we want to compute the Lyndon factorization of the substring $S[4 \dd 13]=\texttt{ccbcebcdbc}$.
We first find the LCA of the leaves representing $S[3]$ and $S[13]$. The path from this LCA to the leaf representing $S[3]$ is shown in blue.
The Lyndon factorization of $S[4 \dd 14]$ can be obtained by the right uncles of $S[3]$ (red and yellow nodes).
In order to have $\LF_{S[4 \dd 13]}$, the last right uncle (in yellow) needs to be trimmed. By Observation~\ref{obs:prefLF}, we take the shortest period of the trimmed string $S[9 \dd 13]=\texttt{bcdbc}$ and obtain the Lyndon factorization $\texttt{bcd},\texttt{bc}$. Thus we obtain the Lyndon factorization of $S[4 \dd 13]$, also depicted in~\cref{fig:Ltree}, which is $\texttt{c},\texttt{c},\texttt{bce},\texttt{bcd},\texttt{bc}$.
\end{example}
\begin{figure}[htpb]
\begin{center}
\includegraphics[width=0.8\textwidth]{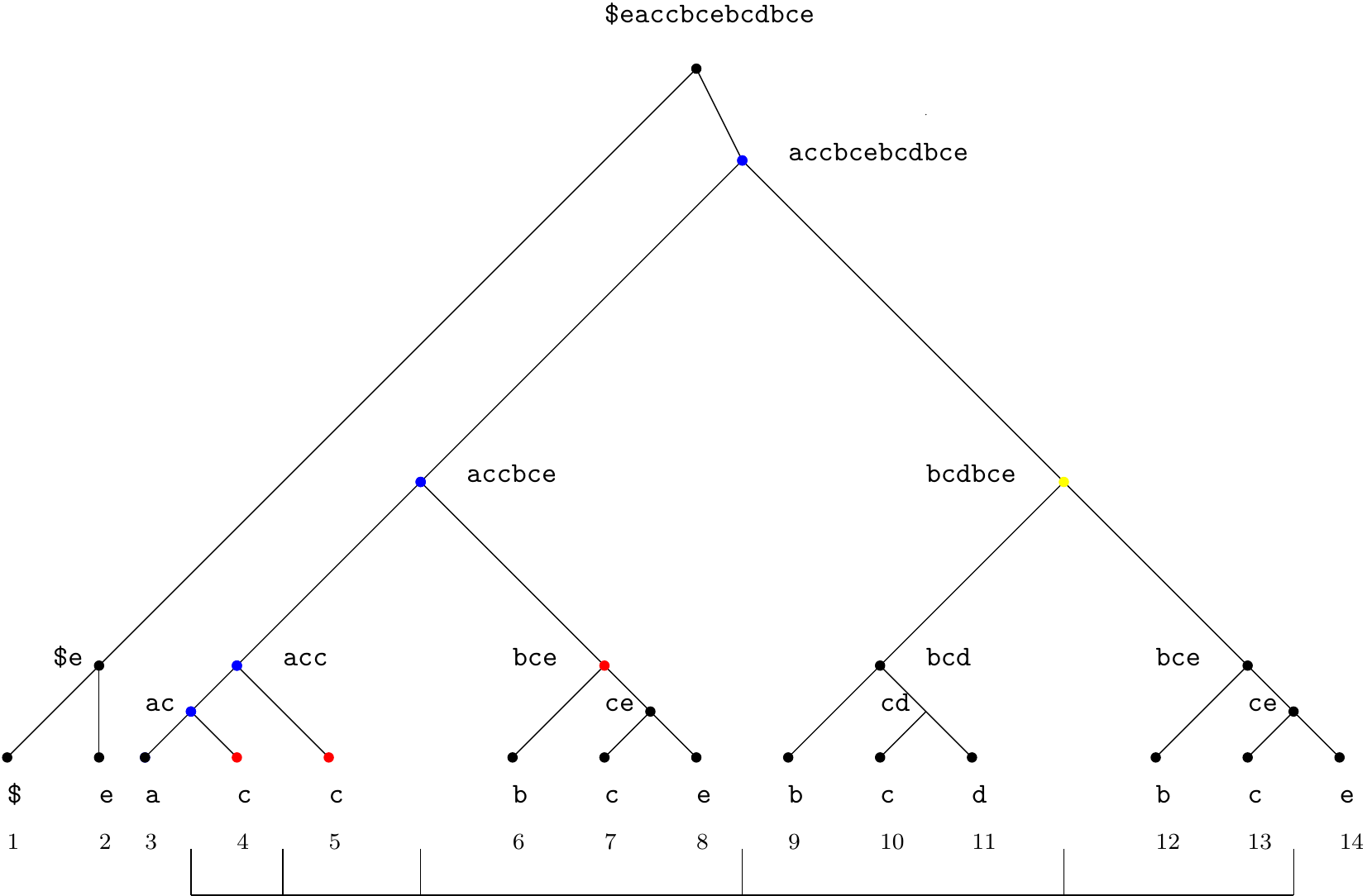}
\end{center}
\caption{Lyndon tree of string $S=\texttt{\$eaccbcebcdbce}$.}
\label{fig:Ltree}
\end{figure}

In order to make this computation efficient, heavy paths of $\LTree(S)$ are stored. Each non-leaf node $v$ stores, as $\rc(v)$, the length of its right child provided that it is \emph{not} present on its heavy path, or $-1$ otherwise. A balanced BST with all the nodes in the heavy path with positive values of $\rc$ given in bottom-up order is stored for every heavy path. It can be augmented in a standard manner to support the following types of queries on any subpath (i.e., ``substring'') of the path in $\Oh(\log n)$ time:
\begin{itemize}
  \item \textsf{longest-subpath}: the maximal value $\rc$ on the subpath
  \item \textsf{count-subpath}: the number of nodes with a positive value of $\rc$ on the subpath
  \item \textsf{select-subpath}$(i)$: the $i$th node with a positive value of $\rc$ on the subpath.
\end{itemize}
These precomputations take $\Oh(n)$ time and space. Finally, a lowest common ancestor data structure can be constructed over the Lyndon tree in $\cO(n)$ time and $\cO(n)$ space~\cite{Bender2000} supporting LCA-queries in $\cO(1)$ time per query. 

$\RLF_{S[i \dd j]}$ is represented as follows. First, a pair of nodes $(v_1,w')$ where $w'$ is the left child of $w$ is stored. Second, the Lyndon factorization of the right uncle $u$ obtained by recursively applying Lemma~\ref{lem:prefLF} is stored in a compact form in $\Oh(\log n)$ space. This representation can be computed in $\Oh(\log^2 n)$ time as follows:
\begin{itemize}
  \item The node $w$ being the LCA of $v_1$ and $v_2$ is computed in $\Oh(1)$ time. Then $w'$ is the left child of $w$.
  \item Each step of the recursive factorization of a pre-Lyndon string of Lemma~\ref{lem:prefLF} can be performed in $\Oh(\log n)$ time if a data structure for internal Period Queries of Kociumaka et al.~\cite{DBLP:conf/soda/KociumakaRRW15} is used. The total number of steps is $\Oh(\log n)$ since each step at least halves the length of the string in question.
\end{itemize}
Finally let us check that $\RLF_{S[i \dd j]}$ supports the desired types of queries in $\Oh(\log^2 n)$ time:
\begin{itemize}
  \item \textsf{longest}: Divide the path from $v_1$ to $w'$ into $\Oh(\log n)$ subpaths of heavy paths and for each of them ask a \textsf{longest-subpath} query. This takes $\Oh(\log^2 n)$ time.
  Compare the maximum of the results with the maximum length of a factor in the second part of the representation, in $\Oh(\log n)$ time.
  \item \textsf{count} is implemented analogously using \textsf{count-subpath}.
  \item \textsf{select}$(i)$: Use \textsf{count-subpath} queries to locate the subpath that contains the $i$-th factor in the whole factorization or check that none of the subpaths contains it. In the first case, use a \textsf{select-subpath} query. In the second case, locate the right factor in the second part of the representation in $\Oh(\log n)$ time.
\end{itemize}
Clearly the answers to the \textsf{longest} and \textsf{count} queries can be determined once upon the computation of $\RLF_{S[i \dd j]}$ and stored to support $\Oh(1)$-time queries. 

\subsection{Cross-Substring Queries}
In the case of this problem, we aim at computing the Lyndon factorization of a $k$-substring from the Lyndon factorizations of the $k$ subsequent substrings. To this end, the following characterization of Lyndon factorizations of concatenations of strings can be applied.

\begin{lemma}[see Lemma~6 and~7 in \cite{DBLP:conf/cpm/UrabeNIBT18}; originally due to \cite{DBLP:journals/mst/ApostolicoC95,DBLP:journals/tcs/DaykinIS94,DBLP:journals/tcs/INIBT16}]\label{lem:mergeLF}~\\ Assume that $\LF_U=(L_1)^{p_1},\ldots,(L_m)^{p_m}$ and $\LF_V=(L'_1)^{p'_1},\ldots,(L'_{m'})^{p'_{m'}}$. Then:
\begin{enumerate}[(a)]
  \item $\LF_{UV}=(L_1)^{p_1},\ldots,(L_c)^{p_c},Z^r,(L'_{c'})^{p'_{c'}},\ldots,(L'_{m'})^{p'_{m'}}$ for some $1 \le c \le m$, $1 \le c' \le m'$, string $Z$, and positive integer $r$.
  \item If $\LF_U$ and $\LF_V$ have been computed, then $c$, $c'$, $Z$, and $r$ from point (a) can be computed by $\Oh(\log|\LF_U|+\log|\LF_V|)$ lexicographic comparisons of strings, each of which is a 2-substring of $UV$ composed of concatenations of a number of consecutive factors in $\LF_U$ or $\LF_V$.
\end{enumerate}
\end{lemma}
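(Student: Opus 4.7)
The plan is to establish part (a) through a structural analysis of Lyndon factorizations under concatenation, and then to use this structure to design the efficient binary-search algorithm of part (b). For part (a), I would start by writing the unknown factorization $\LF_{UV}$ in its compact form $N_1^{q_1}, \ldots, N_t^{q_t}$ with $N_1 > \cdots > N_t$, and locating the boundary position $|U|$ among these blocks. I would show that at most one block $N_j^{q_j}$ straddles the boundary (using the uniqueness of Lyndon factorizations applied to prefixes lying entirely in $U$ and suffixes lying entirely in $V$); set $Z = N_j$ and $r = q_j$ for this block. The blocks strictly to its left lie inside $U$ and by uniqueness must coincide with the initial groups $L_1^{p_1}, \ldots, L_c^{p_c}$ of $\LF_U$ for some $c$; symmetrically, the blocks strictly to the right coincide with the terminal groups $(L'_{c'})^{p'_{c'}}, \ldots, (L'_{m'})^{p'_{m'}}$ of $\LF_V$.

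For part (b), my plan is to locate $c$ and $c'$ by two separate binary searches. For $c$, I search over the candidate indices $\{0, 1, \ldots, m\}$ using the monotone predicate ``$L_1^{p_1} \cdots L_{c^*}^{p_{c^*}}$ is a prefix of $\LF_{UV}$ in its compact form''. By part (a) this predicate is monotone in $c^*$. A single step of the binary search reduces to comparing $L_{c^*}$ with the Lyndon string that would open the merge region, i.e., the first Lyndon factor of $L_{c^*+1}^{p_{c^*+1}} \cdots L_m^{p_m} V$. A classical property of Lyndon strings reduces this in turn to a single lexicographic comparison between two 2-substrings of $UV$: namely $L_{c^*}$, which is a substring of $U$, and the concatenation $L_{c^*+1}^{p_{c^*+1}} \cdots L_m^{p_m} V$, which is a suffix of $U$ followed by $V$. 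A symmetric binary search locates $c'$ from the $\LF_V$ side. Once $c$ and $c'$ are known, the middle region is an explicit substring of $UV$, and part (a) guarantees it has the form $Z^r$; hence $Z$ is its shortest period (retrieved via an internal period query) and $r$ is the quotient of its length by $|Z|$. The total cost is $\cO(\log|\LF_U| + \log|\LF_V|)$ comparisons.

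The main obstacle will be rigorously justifying that each binary-search step reduces to exactly one lexicographic comparison of 2-substrings, rather than an invocation of (some compressed form of) Duval's algorithm on the entire merge region. Monotonicity itself is intuitive from part (a): if $L_{c^*}$ is absorbed into the middle block, then by the strict inequalities $L_{c^*} > L_{c^*+1} > \cdots > L_m$ every later $L_i$ is absorbed too, so the predicate fails for all $c^{**} > c^*$. However, to replace the comparison of $L_{c^*}$ against the (unknown) first Lyndon factor of $L_{c^*+1}^{p_{c^*+1}} \cdots L_m^{p_m} V$ by a plain comparison of $L_{c^*}$ against that entire string, one needs the standard fact that a Lyndon string $L$ dominates the first Lyndon factor of a string $W$ if and only if it dominates $W$ itself; verifying this and the symmetric statement for $c'$ is the technical crux. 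Everything else is book-keeping to represent the relevant strings as 2-substrings of $UV$.
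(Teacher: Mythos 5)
The paper does not actually prove this lemma---it imports it from Urabe et al.\ and the earlier works cited---so your proposal has to stand on its own, and as written it has genuine gaps in both parts. In part (a), ``uniqueness'' does not deliver the two matching claims. Uniqueness only tells you that the blocks of $\mathit{LF}_{UV}$ strictly to the left of the straddling block form the Lyndon factorization of some prefix $P$ of $U$; but the Lyndon factorization of a prefix of $U$ is in general \emph{not} a prefix of $\mathit{LF}_U$ (take $P=\texttt{a}$, $U=\texttt{ab}$), and symmetrically the factorization of a suffix of $V$ is not a suffix of $\mathit{LF}_V$ (take the suffix $\texttt{b}$ of $V=\texttt{ab}$). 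To glue the pieces you additionally need structural input, e.g.\ that the part of $U$ swallowed by the straddling block is a prefix of the Lyndon power $Z^r$ and that every factor in the factorization of such a prefix is at most $Z$ (cf.\ Lemma~\ref{lem:prefLF}), together with the dual statement for the suffix of $Z^r$ lying in $V$ (in the spirit of Lemma~\ref{lem:sufLF}); only then does uniqueness let you identify the outer blocks with full groups of $\mathit{LF}_U$ and $\mathit{LF}_V$. These facts are precisely the substance of the cited Lemmas 6--7, and your sketch assumes what has to be proved.

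In part (b), a binary-search step must decide the predicate ``groups $1,\ldots,c^*$ of $\mathit{LF}_U$ survive'', which (by uniqueness and the non-increasing condition) is equivalent to $L_{c^*}\ge g_1(W)$, where $g_1(W)$ is the first Lyndon factor of the remaining string $W=L_{c^*+1}^{p_{c^*+1}}\cdots L_m^{p_m}V$. The ``standard fact'' you invoke---$L$ dominates $g_1(W)$ iff $L$ dominates $W$---is false under this reading: for $L=\texttt{b}$ and $W=\texttt{ba}$ we have $g_1(W)=\texttt{b}$, so $L\ge g_1(W)$, yet $L<W$. With strict inequalities the equivalence does hold for Lyndon $L$, but then your surrogate test disagrees with the true predicate exactly in the equality case, and one must argue separately that this misclassification still yields a valid tuple $(c,c',Z,r)$ (it does, because the displaced group has the same Lyndon root as $Z$ and merges into the middle power); none of this appears in your write-up, and you yourself flag it as the unresolved crux---the usual fix in the literature is to compare against $L^\infty$ rather than $W$ directly. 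Two further loose ends: the ``symmetric'' search for $c'$ needs the dual fact about the \emph{last} Lyndon factor (the lexicographically smallest suffix), which is not a literal mirror image and is also unproved; and extracting $Z$ as the shortest period of the middle region is a primitive outside the lexicographic comparisons the lemma budgets for---in the paper's application the middle straddles the boundary between $U$ and $V$, so the static internal period queries on $S$ do not apply to it directly, whereas the known algorithms recover $Z$ and $r$ from the positions found during the searches plus $\Oh(1)$ further comparisons.
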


In the case of a $k$-substring $S'$ of $S$, $\RLF_{S'}$ is a sequence of elements of two types: powers of Lyndon substrings of $S'$ and pairs of nodes $(v,w)$ that denote the bottom-up list of right uncles of nodes on the path from $v$ to $w$ in $\LTree(S)$. The length of $\RLF_{S'}$ is $\Oh(k \log n)$ and all its elements are stored in a left-to-right order in a balanced BST. For each element the maximum length of a factor and the number of factors are also stored in the BST. The BST is augmented with the counts of Lyndon factors so that one can identify in logarithmic time the element of the representation that contains the $i$th Lyndon factor in the whole factorization. This lets us implement the operation \textsf{select}$(i)$ on $\RLF_{S'}$ in $\Oh(\log (k \log n) + \log^2 n) = \Oh(\log^2 n)$ time (for $k \le n$) by first identifying the right element of the representation and then selecting the Lyndon factor in this element. The \textsf{longest} and \textsf{count} operations are performed in $\Oh(1)$ time if their results are stored together with the representation.

To compute $\RLF_{S'}$ for $F_1 \dots F_k$, we compute $\RLF_{F_i}$ for all $i=1,\ldots,k$ using internal queries from the previous subsection and then repetitively apply Lemma~\ref{lem:mergeLF}. Let $S''=F_1 \ldots F_{i-1}$ and assume that $\RLF_{S''}$ has already been computed. Then, by the lemma, $\RLF_{S''F_i}$ can be obtained by removing a number of trailing elements in $\RLF_{S''}$, a number of starting elements of $\RLF_{F_i}$, and merging them with at most one power of a Lyndon substring $Z$ of $S'$ in between.

\textit{Complexity.} Overall, the internal queries require $\Oh(k \log^2 n)$ time after $\Oh(n)$ time and space preprocessing. Every application of Lemma~\ref{lem:mergeLF} requires $\Oh(\log(k \log n))=\Oh(\log k + \log \log n)$ lexicographic comparisons of 2-substrings of $S'$, which gives $\Oh(\log k)$ is $k$ is polynomial in $n$. The 2-substrings can be identified in $\Oh(\log^2 n)$ time by a \textsf{select} query and then compared using the data structure of Lemma~\ref{lem:lce_rand}. This lemma requires $\Oh(n)$ space and answers $m$ queries on a $k$-substring in $\Oh((k+m)\log n)$ time, which gives $\Oh(k \log k \log n)=\Oh(k \log^2 n)$ time over all applications of Lemma~\ref{lem:mergeLF}.  In total, $\Oh(k \log^3 n)$ time is required to compute $\RLF_{S'}$.

\subsection{Round-Up}
The results of the two subsections can be combined into an algorithm for $k$\textsc{-Substring Lyndon Factorization}.

\begin{lemma}
  $k$\textsc{-Substring Lyndon Factorization} queries for $k \le n$ and $k$ polynomial in $n$ can be answered in $\Oh(k \log^3 n)$ time after $\Oh(n)$-time and space preprocessing.
\end{lemma}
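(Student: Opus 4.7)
The plan is to combine the two subsections directly: first set up the preprocessing that underlies both the internal and the cross-substring machinery, then use the internal query from Section~9.1 on each $F_i$, and finally stitch the partial factorizations together via Lemma~\ref{lem:mergeLF}.

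For the preprocessing I would, in $\Oh(n)$ time and space, build the Lyndon tree $\LTree(\$S)$ together with its heavy-path decomposition and the augmented balanced BSTs (supporting \textsf{longest-subpath}, \textsf{count-subpath}, and \textsf{select-subpath}) on each heavy path, the constant-time LCA data structure on $\LTree(\$S)$, the internal Period Queries data structure of Kociumaka et al.~\cite{DBLP:conf/soda/KociumakaRRW15}, and the persistent string collection of Lemma~\ref{lem:lce_rand} initialized with $S$. Each ingredient has been shown to fit within $\Oh(n)$ time and space.

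Given a query $F_1 \dots F_k$, I would first compute $\RLF_{F_i}$ for every $i$ by the internal algorithm of Section~9.1: locating the LCA $w$ of the two boundary leaves, traversing the heavy-path decomposition from the first leaf up to the left child of $w$, and applying Lemma~\ref{lem:prefLF} recursively (using Period Queries) to decompose the trimmed right-uncle. Each internal query costs $\Oh(\log^2 n)$, so this phase takes $\Oh(k\log^2 n)$. I would then iteratively form $\RLF_{F_1\dots F_i}$ from $\RLF_{F_1\dots F_{i-1}}$ and $\RLF_{F_i}$ by invoking Lemma~\ref{lem:mergeLF}: this removes a constant-plus-merged tail of the left representation and head of the right one, glued by at most one additional power $Z^r$. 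Throughout, each $\RLF$ is stored in a balanced BST whose nodes are annotated with the per-element longest-factor length and factor count, so \textsf{longest} and \textsf{count} are read in $\Oh(1)$ and \textsf{select}$(i)$ descends the BST to the relevant element and then finishes with an internal \textsf{select} in $\Oh(\log^2 n)$.

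The main obstacle is executing each merge within the stated budget. By Lemma~\ref{lem:mergeLF}(b) one merge requires $\Oh(\log(k\log n)) = \Oh(\log k)$ (since $k$ is polynomial in $n$) lexicographic comparisons between 2-substrings of $S'$, each of which is a concatenation of consecutive factors in the two representations. Each such 2-substring is identified by a constant number of \textsf{select} queries in $\Oh(\log^2 n)$ total, and the comparison itself is a single \lcp\ call in the persistent string collection (Lemma~\ref{lem:lce_rand}). The key accounting observation is that over all $k$ merges we perform $\Oh(k\log k) = \Oh(k\log n)$ collection operations on substrings of total length $\Oh(n)$, so by Lemma~\ref{lem:lce_rand} their cumulative cost is $\Oh(k\log^2 n)$. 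The dominant term is therefore the $\Oh(k\log k \cdot \log^2 n) = \Oh(k\log^3 n)$ spent in the \textsf{select} queries and BST manipulations, matching the claim.
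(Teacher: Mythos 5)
Your plan is correct and follows essentially the same route as the paper: internal $\RLF$ queries on each $F_i$ via the Lyndon tree with heavy-path decomposition, LCA, and Period Queries, followed by iterative merging through Lemma~\ref{lem:mergeLF}, with the $\Oh(\log k)$ comparisons per merge resolved by \textsf{select} queries ($\Oh(\log^2 n)$ each) and \LCE{} queries via Lemma~\ref{lem:lce_rand}, giving the dominant $\Oh(k\log k\cdot\log^2 n)=\Oh(k\log^3 n)$ term. The cost accounting matches the paper's, so no further changes are needed.
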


Using the general scheme, we obtain a solution to the fully dynamic case.

\begin{theorem}
  Fully dynamic queries for the longest Lyndon substring, the size of the Lyndon factorization, or the $i$th element of the Lyndon factorization on a string of length up to $n$ can be answered in $\cO(\sqrt{n}\log^{1.5} n)$ time per query, using $\cO(n)$ space, after $\cO(n)$-time preprocessing.
\end{theorem}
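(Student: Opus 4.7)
The plan is to plug the preceding $k$\textsc{-Substring Lyndon Factorization} data structure into the time slicing framework of Lemma~\ref{lem:timeslice}. By that lemma together with Lemma~\ref{lem:ksub}, a dynamic string under edit operations can be represented as a $(2e+1)$-substring of the original string after $e$ edits since the last full rebuild, and a data structure for $k$-substring queries can be turned into a fully dynamic one with worst-case query time $\cO(q_n(\kappa))$, where $\kappa$ is chosen so that $q_n(\kappa)=(t_n+n)/\kappa$.

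First, I would instantiate the parameters. From the preceding lemma we have $s_n,t_n=\cO(n)$ and $q_n(k)=\cO(k\log^3 n)$, which is continuously non-decreasing and satisfies $q_n(k)\ge k$. Solving $\kappa\log^3 n=n/\kappa$ yields $\kappa=\Theta(\sqrt{n}/\log^{1.5}n)$, and hence $q_n(\kappa)=\cO(\sqrt{n}\log^{1.5}n)$. Throughout the slicing, $\kappa$ remains polynomial in $n$ and at most $n$, so the hypothesis of the preceding lemma (that $k$ be polynomial in $n$ and $k\le n$) is satisfied whenever we invoke it on a $(2\kappa+1)$-substring.

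Next, I would verify that each of the three supported query types fits into the stated bound. The $k$-substring data structure answers \textsf{longest} and \textsf{count} in $\cO(1)$ time on the representation $\RLF_{S'}$, and \textsf{select}$(i)$ in $\cO(\log^2 n)$ time, each after building $\RLF_{S'}$, which itself costs $\cO(k\log^3 n)$. For $k=\kappa$, the build time dominates, giving $\cO(\sqrt{n}\log^{1.5}n)$ per query; the extra $\cO(\log^2 n)$ for a \textsf{select} query is absorbed into this bound. Space usage is dominated by the two copies of the static data structure maintained by the slicing procedure, which is $\cO(n)$, plus the $\cO(\kappa\log n)=o(n)$ space of the current representation $\RLF_{S'}$.

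The only subtlety, and the place where I would be most careful, is the bookkeeping that turns an editable string into a valid input to the $k$-substring data structure. Each edit operation increases the list length by at most two via Lemma~\ref{lem:ksub}; whenever the counter $C$ in the time slicing scheme reaches $\kappa$, we start rebuilding the static data structure on the current string in the background and switch over at $C=2\kappa$, so the $k$-substring passed to $\RLF$ always has $k\le 2\kappa+1$. Combining these observations with Lemma~\ref{lem:timeslice} yields the claimed $\cO(\sqrt{n}\log^{1.5}n)$ worst-case query time, $\cO(n)$ space, and $\cO(n)$ preprocessing.
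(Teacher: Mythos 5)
Your proposal is correct and follows essentially the same route as the paper: the theorem is obtained by plugging the $k$\textsc{-Substring Lyndon Factorization} bound ($t_n,s_n=\cO(n)$, $q_n(k)=\cO(k\log^3 n)$) into the time-slicing scheme of Lemma~\ref{lem:timeslice}, with $\kappa=\Theta(\sqrt{n}/\log^{1.5}n)$ giving the stated query time, space, and preprocessing. Your additional checks (the $k\le 2\kappa+1$ bookkeeping, the ``$k$ polynomial in $n$'' hypothesis, and absorbing the $\cO(\log^2 n)$ \textsf{select} cost) are exactly the details the paper leaves implicit.
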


\section{Final Remarks}

In this paper we presented techniques to obtain fully dynamic algorithms for several classical problems on strings; namely, for computing the longest common substring of two strings, the longest repeat, palindrome substring and Lyndon substring of a string.
We anticipate that the techniques presented in this paper are applicable in a wider range of problems on strings. 

We further anticipate that this new line of research will inspire more work on the lower bound side of dynamic problems. The currently known hardness (and conditional hardness) results for dynamic problems on strings have been established for dynamic pattern matching~\cite{DBLP:conf/stacs/CliffordGLS18,DBLP:conf/soda/GawrychowskiKKL18}. It would be interesting to investigate (conditional) lower bounds for the set of dynamic problems considered in this paper. 



\bibliographystyle{plain}
\bibliography{references}

\newpage

\appendix

\section{Proof of Lemma~\ref{lem:merge_ranges}}\label{app:ami}
Let us start with basic definitions on the suffix array and the ranges of substrings in the suffix array.

The \textit{suffix array} of a non-empty string $S$ of length $n$, denoted by $\SA(S)$, is an integer array of size $n+1$ storing the starting positions of all (lexicographically) sorted suffixes of $S$, i.e.~for all 
$1 < r \le n+1$ we have $S[\SA(S)[r-1] \dd n] < S[\SA(S)[r] \dd n]$. Note that we explicitly add the empty suffix to the array. The suffix array $\SA(S)$ corresponds to a pre-order traversal of all terminal nodes of the suffix tree $\mathcal{T}(S)$. We define a \emph{generalized suffix array} of $S_1,\ldots,S_k$, denoted $\SA(S_1,\ldots,S_k)$, as $\SA(S_1\#_1 \ldots S_k\#_k)$.

Let $S$ be a string of length $n$. Given a substring $U$ of $S$, we denote by $\range_S(U)$ the range in the $\SA(S)$ that represents the suffixes of $S$ that have $U$ as a prefix. Every node $u$ in the suffix tree $\tr(S)$ corresponds to an $\SA$ range $\range_S(\mathcal{L}(u))$. 

\begin{proof}[Proof of Lemma~\ref{lem:merge_ranges}]
  Let $X$ be a string of length $n$. We can precompute $\range_X(\mathcal{L}(u))$ for all explicit nodes $u$ in $\tr(X)$ in $\cO(n)$ time while performing a depth-first traversal of the tree. We use the following result by Fischer et al.~\cite{fischer_et_al:LIPIcs:2016:6066}.

\begin{claim}[\cite{fischer_et_al:LIPIcs:2016:6066}]
Let $U$ and $V$ be two substrings of $X$ and assume that $\SA(X)$ is known. Given $\range_X(U)$ and $\range_X(V)$, $\range_X(UV)$ can be computed in time $\cO(\log \log n)$ after $\cO(n \log \log n)$-time and $\cO(n)$-space preprocessing.
\end{claim}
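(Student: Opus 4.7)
The plan is two-pronged: first establish a structural fact that makes $\range_X(UV)$ a contiguous subrange of $\range_X(U)$ fixable by one predecessor-style query, and then use weighted ancestors in $\mathcal{T}(X)$ to pin that subrange down in $O(\log\log n)$ time. The structural fact I would prove first is that $\range_X(UV) \subseteq \range_X(U)$ and, within $\range_X(U) = [\ell_U, r_U]$, the suffixes are sorted lexicographically by their continuations after their common prefix $U$, so the shift map
\[
\phi(i) \;=\; \SA^{-1}\!\bigl[\SA[i] + |U|\bigr]
\]
is monotone non-decreasing on $[\ell_U, r_U]$, and a position $i\in[\ell_U,r_U]$ corresponds to a $UV$-prefixed suffix exactly when $\phi(i) \in [\ell_V, r_V]$. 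Hence $\range_X(UV)$ is the (possibly empty) contiguous interval $\phi^{-1}([\ell_V, r_V]) \cap [\ell_U, r_U]$.

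Given a single witness index $i^\star \in \range_X(UV)$, I would finish in $O(\log\log n)$ time by locating the leaf of $\mathcal{T}(X)$ labelled with $\SA[i^\star]$ and issuing a weighted-ancestor query at string-depth $|U| + |V|$ via \cref{lem:LAQ}; the returned explicit or implicit locus determines the SA interval $\range_X(UV)$ in $O(1)$ additional time (using the precomputed leaf ranges for explicit nodes, together with a constant-time adjustment if the locus is implicit on an edge). So the only remaining task is to produce such a witness, or certify that none exists, within the target budget.

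To produce a witness I would preprocess the set of pairs $\{(\SA[i], i) : 1 \le i \le n\}$ into a $y$-fast trie (or any equivalent $O(\log\log n)$-predecessor structure), which takes $O(n\log\log n)$ expected time and $O(n)$ space. At query time I would look for the smallest value of the form $\SA[j]$ with $j \in [\ell_V, r_V]$ and $\SA[j] \ge |U|+1$ such that $\SA^{-1}[\SA[j] - |U|] \in [\ell_U, r_U]$; monotonicity of $\phi$ on $[\ell_U, r_U]$ implies that the set of valid $j$s is itself a contiguous subrange of $[\ell_V, r_V]$, so a constant number of predecessor/successor probes into the $y$-fast trie either delivers a valid $j$ (and hence $i^\star = \SA^{-1}[\SA[j] - |U|]$) or certifies that the preimage lies outside $[\ell_U, r_U]$ and is therefore empty. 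The total time per query is then $O(\log\log n)$ for the witness search plus $O(\log\log n)$ for the weighted-ancestor call.

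The part I expect to be genuinely delicate is that the shift $|U|$ is a query-time parameter, so I cannot tailor a predecessor structure to it in advance; the resolution is to apply the shift inside the query itself, using $\SA$ and $\SA^{-1}$ to translate between position-space and SA-space while querying a single global $y$-fast trie. Verifying correctness in corner cases (empty intersection, boundary positions where $\SA[j] - |U| < 1$, and $|U| + |V|$ exceeding the depth of any relevant node in $\mathcal{T}(X)$) will be the main source of case analysis, but introduces no further data-structural machinery beyond what has been described.
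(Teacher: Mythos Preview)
First, a framing note: the paper does not prove this claim; it is quoted from \cite{fischer_et_al:LIPIcs:2016:6066} and used as a black box inside the proof of \cref{lem:merge_ranges}. So there is no in-paper argument to compare your attempt against.

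Your overall strategy---find one witness $i^\star\in\range_X(UV)$ and then recover the full interval by a weighted-ancestor query at string-depth $|U|+|V|$---is sound, and your structural observation that $\phi(i)=\SA^{-1}[\SA[i]+|U|]$ is strictly increasing on $[\ell_U,r_U]$, so that $\range_X(UV)=\phi^{-1}([\ell_V,r_V])\cap[\ell_U,r_U]$ is a contiguous subinterval of $\range_X(U)$, is correct.

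The gap is in the witness search. You assert that monotonicity of $\phi$ on $[\ell_U,r_U]$ forces the set of valid $j\in[\ell_V,r_V]$ (those with $\SA^{-1}[\SA[j]-|U|]\in[\ell_U,r_U]$) to be a contiguous subrange of $[\ell_V,r_V]$. This conflates preimage with image: monotonicity makes the \emph{preimage} $\phi^{-1}([\ell_V,r_V])$ an interval in $[\ell_U,r_U]$, but the valid $j$'s are exactly the \emph{image} $\phi([\ell_U,r_U])\cap[\ell_V,r_V]$, which need not be contiguous because the lexicographic order of $V$-prefixed suffixes is governed by what follows $V$, not by what precedes it. Concretely, take $X=\texttt{abacbbabc\$}$, $U=\texttt{a}$, $V=\texttt{b}$: then $\range_X(V)=[5,8]$, and the valid $j$'s are $\{6,8\}$ only, since the $\texttt{b}$'s at $\SA$-ranks $5$ and $7$ are preceded in $X$ by $\texttt{b}$ and $\texttt{c}$ respectively, not by $\texttt{a}$. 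Hence no constant number of predecessor/successor probes into a single global $y$-fast trie (whose key set you never actually specify) can isolate a valid $j$; the query you really need is an orthogonal-style one---for instance a range-successor query over a permutation derived from $\SA$---and that is the kind of machinery the cited reference develops. As written, the witness-finding step does not go through.
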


We use the data structure of the claim for the generalized suffix array $\SA(S,T)$. The range of a substring $U$ is denoted as $\range_{S,T}(U)$. We assume that each element of $\SA(S,T)$ stores a 1 if and only if it origins in $T$ and prefix sums of such values are stored. This lets us check if a given range of $\SA(S,T)$ contains any suffix of $T$ in $\Oh(1)$ time. We also use the GST $\mathcal{T}(S,T)$.

By Corollary~\ref{cor:LAQ}, the loci of $U$ and $V$ in $\mathcal{T}(S,T)$ can be computed in $\Oh(\log \log n)$ time. This lets us recover the ranges $\range_{S,T}(U)$ and $\range_{S,T}(V)$. By the claim, we can compute $\range_{S,T}(UV)$ in $\Oh(\log \log n)$ time. Then we check if $UV$ is a substring of $T$ by checking if the resulting range contains a suffix of $T$; as already mentioned, this can be done in $\Oh(1)$ time. The data structures can be constructed in $\Oh(n \log \log n)$ time and use $\Oh(n)$ space. This concludes the proof of the first part of the lemma.

As for the second part, it suffices to apply binary search over $U$ to find the longest prefix $U'$ of $U$ that is a substring of $T$. If $U'=U$, we apply binary search to find the longest prefix $V'$ of $V$ such that $UV'$ is a substring of $T$. Binary searches result in additional $\log n$-substring in the query complexity. The approach for computing the longest suffix is analogous.
\end{proof}

\section{Reducing the Number of Dimensions for Extended HIA Queries}\label{app:dimensions}
The proof of Lemma~\ref{lem:extended-HIA} used RMQ on a collection of points in a 6-dimensional grid. This is a significant number of dimensions. However, it can be reduced to 4 by applying the methods that were used in \cite{Amir2017}. This change does not improve the complexities in the $\Ohtilde$-notation.

Let $\pi$ be a heavy path and $u$ be its topmost node. Assume that $u$ contains $m$ leaves in its subtree. We then denote by $L(\pi)$ the \emph{level} of the path $\pi$, which is equal to $\lfloor \log m \rfloor$. We use the following property of heavy-path decompositions.

\begin{observation}\label{obs:heavy_levels}
  For any leaf $v$ of $\mathcal{T}$, the levels of all heavy paths visited on the path from $v$ to the root of $\mathcal{T}$ are distinct.
\end{observation}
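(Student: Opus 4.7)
The plan is to show that the sequence of levels of heavy paths encountered on the upward walk from $v$ to the root is strictly increasing; the stated distinctness then follows. Let $\pi_1, \pi_2, \ldots, \pi_t$ denote these heavy paths in the order in which they are visited (with $\pi_1$ being the heavy path containing $v$), and let $u_i$ be the topmost node of $\pi_i$ with $m_i$ leaves in its subtree, so that $L(\pi_i) = \lfloor \log m_i \rfloor$.

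The first key step is to compare $m_i$ and $m_{i+1}$. The edge connecting $u_i$ to its parent $w$ is necessarily a non-heavy (light) edge, because $u_i$ is the topmost node of a heavy path; moreover $w$ lies on $\pi_{i+1}$ by the way the decomposition is traversed. Since the edge $(w,u_i)$ is light, the heavy child of $w$ is some node $u_i' \neq u_i$, and by the definition of a heavy edge (the child whose subtree contains the maximal number of leaves), the subtree rooted at $u_i'$ contains at least $m_i$ leaves. Summing the leaves contributed by the subtree of $u_i$ and by the subtree of $u_i'$, I obtain that $w$'s subtree contains at least $2 m_i$ leaves, and therefore $m_{i+1} \ge 2 m_i$, since $u_{i+1}$ is an ancestor of $w$ (possibly $w$ itself).

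The second step is the arithmetic observation that $m_{i+1} \ge 2 m_i$ implies $\lfloor \log m_{i+1} \rfloor \ge \lfloor \log(2 m_i) \rfloor = \lfloor \log m_i \rfloor + 1$, so $L(\pi_{i+1}) \ge L(\pi_i) + 1$. Iterating this inequality along the walk from leaf $v$ to the root yields a strictly increasing sequence of levels $L(\pi_1) < L(\pi_2) < \cdots < L(\pi_t)$, which gives the distinctness claim.

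I do not anticipate any real obstacle: the whole argument is a standard application of the doubling property of heavy-path decompositions. The only point requiring a bit of care is to use the paper's convention that heaviness is defined in terms of the \emph{number of leaves} in the subtree (as recalled in Section~\ref{sec:ads}), so that the inequality $m(\text{heavy child of }w) \ge m_i$ comes directly from the definition; switching to any other notion of subtree size would invalidate the doubling step.
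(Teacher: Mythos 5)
Your proof is correct: the strictly-increasing-levels argument via the doubling property (a light edge from the topmost node $u_i$ of $\pi_i$ to its parent $w$ forces $w$'s subtree to contain at least $2m_i$ leaves, hence $L(\pi_{i+1})\ge L(\pi_i)+1$) is exactly the standard justification, and it correctly uses the paper's leaf-count definition of heavy edges. The paper states this observation without proof as a known property of heavy-path decompositions, so your argument simply supplies the omitted (and canonical) reasoning; there is nothing to object to.
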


Let us rearrange the children in $\mathcal{T}_1$ and $\mathcal{T}_2$ so that the heavy edge always leads to the leftmost child.

Instead of just four collections of points, we create $\Oh(\log^2 n)$ such collections. Instead of $\mathcal{P}^{I}$, we create collections $\mathcal{P}^{I}_{x,y}$ for all $0 \le x,y \le \lfloor \log n \rfloor$. For $i \in \{1,\ldots,n\}$ and $j=1,2$, let $m_j(i)$ be the position of the leaf $i$ in a left-to-right traversal of all leaves in $\mathcal{T}_j$. For each leaf $i$, if $\pi_1$ and $\pi_2$ are any heavy paths on the path from leaf $i$ to the root of $\mathcal{T}_1$ and $\mathcal{T}_2$, respectively, $d_j=d(\pi_j,i)$ for $j=1,2$, $p_1=L(\pi_1)$ and $p_2=L(\pi_2)$, then we insert the point
\begin{itemize}
  \item $(m_1(i),m_2(i),i-d_1,i+d_2)$ to $\mathcal{P}^I_{p_1,p_2}$ with weight $d_1+d_2$.
\end{itemize}
A similar transformation can be performed on $\mathcal{P}^{II}$, $\mathcal{P}^{III}$, and $\mathcal{P}^{IV}$.

Assume we are to answer an extended HIA query for $v_1$, $v_2$, $a$, and $b$. As before, we consider all the prefix fragments $\pi'_j \in H(v_j,\mathcal{T}_j)$, $j=1,2$. Let $\pi'_j$ connect node $x_j$ with its descendant $y_j$ and let $e_j$ and $f_j$ be the minimal and maximal $m_j$ number of a leaf in the subtree of $x_j$ excluding the descendants of $y_j$. If $\pi'_j$ is a prefix fragment of a heavy path $\pi_j$, we set $p_j = L(\pi_j)$. Then for this pair of prefix fragments we ask the following query:
\begin{itemize}
  \item $\RMQ_{P^I_{p_1,p_2}}([e_1,f_1],[e_2,f_2],[a,\infty),(-\infty,b+1])$.
\end{itemize}
We treat the remaining cases similarly. On the first two components we take either the intervals as shown in the formula above or the interval corresponding to the subtree of $y_j$.

Thus the number of dimensions reduces to 4. Moreover, it can be further reduced to 3 if we apply the extended HIA query to solve the \textsc{Three Substrings LCS} problem for $W$ being a prefix or a suffix of $T$, as it is the case in Section~\ref{sec:1-1.2}.
\end{document}